\documentclass[12pt,preprint]{amsart}

\usepackage{amsmath, latexsym, amsfonts, amssymb, amsthm}
\usepackage{graphicx, color,hyperref,dsfont,epsfig,caption,wrapfig,subfig}
\usepackage{mathtools}
\usepackage{parskip}
\usepackage{movie15}
\hypersetup{
	linktoc=page,
	linkcolor=red,          
	citecolor=blue,        
	filecolor=blue,      
	urlcolor=cyan,
	colorlinks=true           
}

\NewDocumentEnvironment{eqs}{+b}
{\begin{equation}\begin{split}#1\end{split}\end{equation}}
{}
\numberwithin{equation}{section}

\newtheorem{theorem}{Theorem}[section]
\newtheorem{lemma}[theorem]{Lemma}
\newtheorem{proposition}[theorem]{Proposition}

\newtheorem{remark}[theorem]{Remark}
\newtheorem{defi}[theorem]{Definition}

\newcounter{conj}
\newtheorem{conjecture}[conj]{Conjecture}

\newcounter{thmc}

\theoremstyle{remark}

\theoremstyle{definition}

\renewcommand{\tilde}{\widetilde}          
\DeclareMathSymbol{\leqslant}{\mathalpha}{AMSa}{"36} 
\DeclareMathSymbol{\geqslant}{\mathalpha}{AMSa}{"3E} 
\DeclareMathSymbol{\eset}{\mathalpha}{AMSb}{"3F}     
\renewcommand{\leq}{\;\leqslant\;}                   
\renewcommand{\geq}{\;\geqslant\;}                   
\newcommand{\dd}{\text{\rm d}}             

\newcommand{\C}{\mathbb{C}}
\newcommand{\R}{\mathbb{R}}

\newcommand{\N}{\mathbb{N}}
 
\newcommand{\D}{\mathbb{D}}

\newcommand{\im}{\bm{\mathrm{i}}}

\newcommand{\E}{\mathds{E}}
\newcommand{\X}{\bm{\mathrm X}}

\newcommand{\V}{\bm{\mathrm V}}

\newcommand{\ps}[1]{\left\langle #1 \right\rangle}

\newcommand{\psrreg}[1]{\ps{#1}_{\rho}}
\newcommand{\ga}[1]{\E\left[ #1 \right]}
\newcommand{\mc}[1]{\mathcal{#1}}

\newcommand{\eqlaw}{\overset{\text{(law)}}{=}}
\def\Met{\text{Met}(\hat\C)}

\makeatletter
\newcommand{\ostar}{\mathbin{\mathpalette\make@circled*}}
\newcommand{\make@circled}[2]{%
	\ooalign{$\m@th#1\smallbigcirc{#1}$\cr\hidewidth$\m@th#1#2$\hidewidth\cr}%
}
\newcommand{\smallbigcirc}[1]{%
	\vcenter{\hbox{\scalebox{0.77778}{$\m@th#1\bigcirc$}}}%
}
\makeatother

\newcommand{\qt}[1]{\quad\text{#1}\quad}

\def\V{\bm{\mathrm V}}

\def\SET{\bm{\mathrm T}}

\def\X{\bm{\mathrm  X}}

\def\L{\bm{\mathrm L}}

\def\eps{\varepsilon}

\renewcommand{\d}{\text{\rm d}}

\def\regset{\text{Reg}}
\newcommand{\expo}[1]{\exp\left(#1\right)}

\def\eps{\varepsilon}

\def\lioureg{\Phi^g_\eps}
\def\liou{\Phi}

\def\green{G^g}

\def\i{\bm{\mathrm i}}

\def\bi{\begin{itemize}}
	\def\ei{\end{itemize}}

\def\bnum{\begin{enumerate}}
	\def\enum{\end{enumerate}}

\def\<#1{\langle #1 \rangle}

\def\Ex{\bm{\mathrm E}}

\newcommand{\norm}[1]{\left\lvert#1\right\rvert}

\newcommand{\nnorm}[1]{\left\lvert\left\lvert#1\right\rvert\right\rvert}
\newcommand{\expect}[1]{\mathbb{E}\left[#1\right]}

\newcommand{\DOZZ}{C^\mathrm{DOZZ}_\gamma}
\newcommand{\ImDOZZ}{C^{\mathrm{ImDOZZ}}_\beta}

\usepackage{bm}
\usepackage{bbm}

\title{On the local conformal structure of Imaginary Liouville theory}

\author{Baptiste Cerclé}
\address{Sorbonne Université, CNRS,
LPTHE, Paris, 75005, France}
\email{baptiste.cercle@lpthe.jussieu.fr}

\author{Romain Usciati}
\address{Université Paris-Saclay, CNRS, LPTMS, Orsay, 91405, France}
\email{romain.usciati@universite-paris-saclay.fr}
\date{\today}

\begin{document}

\begin{abstract}
    Imaginary Liouville theory was recently proposed as a path integral construction of a non-rational conformal field theory (CFT) with central charge less than one~\cite{usciatiProbabilisticConstructionNoncompactified2026}. In the present work, we establish Ward identities and Belavin-Polyakov-Zamolodchikov differential equations in this framework, assuming a precise conjecture on the decay of the Laplace transform of Imaginary Gaussian multiplicative chaos. These results provide a first step towards a rigorous implementation of the conformal bootstrap program for conformal field theories with central charge lower than one. This work is the first to investigate the intrinsic properties of Imaginary Liouville theory beyond exact computations.
\end{abstract}

\maketitle

\section{Introduction and main results}

\subsection{Imaginary Liouville theory}
\subsubsection{From Liouville theory to Imaginary Liouville theory}

Since its introduction by Polyakov in 1981 as a model for the fluctuations of random metrics on surfaces, Liouville theory has been the most prominent example of a solvable two-dimensional Conformal field theory (CFT). Physicists were able to predict exact expressions for the fundamental building blocks of Liouville theory: the \textit{structure constants} and the \textit{spectrum}~\cite{dornTwoThreepointFunctions1994}\cite{zamolodchikovConformalBootstrapLiouville1996}. Together with universal functions called conformal blocks, these quantities determine all correlation functions of the CFT. This is the essence of the conformal bootstrap program.

More recently, the mathematical contruction of Liouville theory has allowed to give a rigorous implementation of the conformal bootstrap program~\cite{guillarmouReviewProbabilisticConstruction2024a}. Correlation functions were given a probabilistic definition by interpreting Liouville path-integral in terms of the Gaussian free field (GFF)~\cite{davidLiouvilleQuantumGravity2016a}. This led to a rigorous derivation of the structure constants of Liouville theory~\cite{kupiainenIntegrabilityLiouvilleTheory2020} and culminated in a proof of the bootstrap formula expressing correlation functions of arbitrary complexity in terms of conformal blocks~\cite{guillarmouConformalBootstrapLiouville2024}.

Interestingly, several other conformal field theories appear to have many similarities with Liouville theory. One of the most remarkable examples is the relation between scaling limits of observables in certain statistical physics models and Liouville theory structure constants. More precisely, it was conjectured in the physics litterature~\cite{delfinoThreepointConnectivityTwodimensional2011}\cite{ikhlefThreepointFunctions<2016} and recently proved mathematically~\cite{angIntegrabilityConformalLoop2024} that certain natural observables arising in models of self-avoiding loops are described by the \textit{Imaginary DOZZ formula}, which is essentially the inverse of the Liouville structure constants (the DOZZ formula)\cite[Eq.~(1.3)]{angIntegrabilityConformalLoop2024}:
\begin{eqs}
\label{eq:imdozz-dozz}
    \ImDOZZ(\alpha_1,\alpha_2,\alpha_3) = \sqrt{\frac{\prod_{k=1}^3 \DOZZ(\gamma,\hat{\alpha}_k,\hat{\alpha}_k)}{\DOZZ(\gamma,\gamma,\gamma)}} \frac{1}{\DOZZ(\hat{\alpha}_1,\hat{\alpha}_2,\hat{\alpha}_3)}
\end{eqs}
where $\beta = \gamma \in (0,2)$ and $\hat{\alpha}_k = \alpha_k - \beta$. 

The Imaginary DOZZ formula was obtained independently by Schomerus~\cite{schomerusRollingTachyonsLiouville2003}, Al.~Zamolodchikov~\cite{zamolodchikovThreepointFunctionMinimal2005} and Kostov-Petkova~\cite{kostovNonrational2DQuantum2007} who extended algebraic relations
expected to be satisfied by the three-point function of a conformal field theory~\cite{teschnerLiouvilleThreepointFunction1995}. Arguably, the reason for Eq.~\eqref{eq:imdozz-dozz} to hold is that both DOZZ and Imaginary DOZZ formulae are solutions of Teschner's shift relations, albeit with real and imaginary shifts~\cite{zamolodchikovThreepointFunctionMinimal2005}. Hence both formulae arise from conformal symmetry in different regimes: the DOZZ formula describes the gravitational sector (for which the \textit{central charge} is $c_\mathrm{L} = 1 + 6\left(\gamma/2+2/\gamma\right)^2 \geq 25$) while the Imaginary DOZZ formula is expected to describe the matter sector (where the central charge is $c_{\mathrm{IL}} = 1-6\left(\beta/2-2/\beta\right)^2 \leq 1$).

\subsubsection{Exact computations in Imaginary Liouville theory} Strikingly, whether the Imaginary Liouville formula can be realized as the structure constant of a CFT admitting a well-defined bosonic path-integral formulation remained an open problem until recently.

Most work on non-compact conformal field theories at central charge less than one has focused on extending analytical relations satisfied by correlation functions in ordinary Liouville theory. For example, for theories with boundary,~\cite{bautistaBoundaryTimelikeLiouville2022} proposed some exact expressions for the bulk one-point function and the boundary two-point function by solving the Liouville shift relations with imaginary shifts. Inspired by the representation of four-point functions in ordinary Liouville theory,~\cite{ribaultLiouvilleTheoryCentral2015} proposed a \textit{bootstrap formula} at central charge less than one and provided numerical checks of its consistency.

A different line of research is closer to the field-theoretic perspective. It consists in computing correlation functions by extrapolating expressions obtained in the special regime of \textit{charge neutrality}. In that regime, the external charges parametrizing the $N$-point correlation functions satisfy $\sum_{k=1}^N \alpha_k - 2Q = -s\beta$ for some $s\in \N$. Crucially, since the work of Dotsenko-Fateev on minimal models~\cite{dotsenkoConformalAlgebraMultipoint1984}, it has been known that neutral correlation functions admit a free field representation (\textit{Coulomb gas representation}). Building on an observation already made by~\cite{kostovNonrational2DQuantum2007}, \cite{giribetTimelikeLiouvilleThreepoint2012} showed that the Imaginary DOZZ formula coincides at charge neutrality with the Coulomb gas representation. Intriguingly, for the bulk one-point function, the result obtained from shift relations~\cite{bautistaBoundaryTimelikeLiouville2022} is different from the one extrapolated from the Coulomb gas~\cite{giribetDisk1pointFunction2026}.

\subsubsection{Imaginary Liouville theory as a path integral}
In the present work, we adopt a fully constructive strategy based on a path-integral construction of the correlation functions.

An important step in this direction was made in the work of Harlow, Maltz and Witten who showed that the Imaginary DOZZ formula cannot be obtained from a conventional bosonic action~\cite[Section 7.3]{harlowAnalyticContinuationLiouville2011a}, contrary to what the so-called \textit{timelike Liouville theory} might suggest~\cite[Eq.~(2.1), (2.4)]{kostovNonrational2DQuantum2007}. Their analysis relies on the semi-classical limit $\gamma\to0$ of both the Liouville action and the DOZZ structure constants. They argue that, for any complex value of the central charge, the Liouville path integral should instead be defined by integrating over a suitable steepest-descent cycle in the space of fields, although they give no precise definition of it. More recently, the structure of such integration cycles was made explicit for a lattice discretization of Liouville theory in the regime $c_\mathrm{L}\in\C\backslash(-\infty,1]$~\cite{caoAnalyticalContinuationLattice2023}.

These developments motivated the proposal of~\cite{usciatiProbabilisticConstructionNoncompactified2026}, which serves as the starting point of this article. In that work, the authors gave an explicit path-integral defition of correlation functions with central charge less than one, without restricting to charge neutrality. Following the ideas of~\cite{harlowAnalyticContinuationLiouville2011a}, they define the Imaginary Liouville measure by integrating over a complex contour. The choice of this contour is inspired both by the lattice analysis~\cite{caoAnalyticalContinuationLattice2023} and by the requirement that the resulting correlation functions reduce to the Coulomb gas integrals in the charge neutrality regime (see Proposition~\ref{defi:CG_correls}).

More precisely, let $\mc{F}_0$ be a set of generalized functions on the Riemann sphere $\phi_0: \hat{\C} \to \R$, with vanishing mean $\int_\C \phi_0(z) \dd v_g(z) = 0$ (where $g$ is a Riemannian metric on $\hat{\C}$ and $\dd v_g$ denotes the associated volume form). The Imaginary Liouville measure, defining the law of the Liouville field, reads
\begin{eqs}
\label{eq:formal_imliou}
    \ps{F} \coloneqq \int_{\mc{U}} \int_{\mc{F}_0} F[\bm{c} + \phi_0] e^{-S^g_{\mathrm{IL},\beta}(\bm{c}+\phi_0)} \; D\phi_0\; \dd \bm{c} \ .
\end{eqs}
By analogy with ordinary Liouville theory, $\bm{c}$ is referred to as the \textit{zero mode}. Unlike the standard case though, it is integrated over an infinite oriented contour in the complex plane:
\begin{align*}
    \mc{U} := \left(-\im \infty, 0\right] \cup \left[0,\tfrac{2\pi}{\beta}\right] \cup \left[\tfrac{2\pi}{\beta} , \tfrac{2\pi}{\beta} - \im \infty\right) \ .
\end{align*}
The action formally coincides with the standard Liouville action, but with a purely imaginary coupling constant $\gamma=\i\beta$:
\begin{eqs}
\label{eq:im_liou_action}
    &S^g_{\mathrm{IL},\beta}(\bm{c}+\phi_0) := \frac{1}{4\pi} \int_{\C} \bigg(|d_g \phi_0|^2 + \im Q K_g (\bm{c}+\phi_0(z)) + 4\pi \mu e^{\im \beta (\bm{c}+\phi_0(z))}\bigg) \dd v_{g}(z)
\end{eqs}
with $\im = \sqrt{-1}$, $0 < \beta < 2$ can be rational or irrational, $Q = \tfrac{\beta}{2} - \tfrac{2}{\beta}$, $\mu>0$. $d_g$ denotes the exterior derivative and $K_g$ the scalar curvature associated with the metric $g$. The probabilistic construction requires $\beta<\sqrt2$, assumption under which we work in the sequel.

In~\cite{usciatiProbabilisticConstructionNoncompactified2026}, the authors claimed that for any $\{x_1,\dots,x_N\} \subset \C^N$ and $(\alpha_1,\dots,\alpha_N)\in \R^N$, the average~\eqref{eq:formal_imliou} is well defined for the product of \textit{vertex operators}
\begin{align*}
    &F \coloneqq \prod_{j=1}^N V_{\alpha_j}(x_j) \ , \quad V_{\alpha_j}(x_j) [\bm{c}+\phi_0]\coloneqq e^{\im \alpha_j (\bm{c}+\phi_0(x_j))}
\end{align*}
provided that $\alpha_j > Q$ and $\sum_{j=1}^N \alpha_j - 2Q < 0$ (the so-called Seiberg bounds). In that range of parameters, they gave numerical evidence that
\begin{align*}
    \ps{V_{\alpha_1}(0)V_{\alpha_2}(1) V_{\alpha_3}(\infty)} = \ImDOZZ(\alpha_1,\alpha_2,\alpha_3) \ .
\end{align*}

\subsection{A probabilistic approach to Imaginary Liouville theory}
In this paper, we study the conformal symmetry of the path integral definition for Imaginary Liouville theory. To do so we rely on its probabilistic definition and consider its intrinsic features, beyond exact computations. Though we do not prove global conformal covariance, we show that the theory enjoys local conformal invariance via the derivation of local Ward identities. This is achieved assuming a conjecture, made in~\cite{usciatiProbabilisticConstructionNoncompactified2026}, ensuring that the zero-mode integral converges. Remarkably, we show that local conformal invariance is possible thanks to invariance of the zero-mode integral with respect to shifts in the \textit{imaginary} direction (Proposition~\ref{KPZ}) while global conformal invariance would follow by invariance by shifts in the \textit{real} direction (Remark~\ref{rmk}).

\subsubsection{Probabilistic construction via the path integral}
A key feature of the proposal~\eqref{eq:formal_imliou} is that the measure over $\phi_0$ admits a probabilistic formulation in terms of the GFF. More precisely, let $\X^g$ be the GFF on the Riemann sphere, viewed as a random generalized function on a probability space $(\Omega,\mc F,\mathbb{P})$. Then one can interpret
\begin{align*}
    \int_{\mc{F}_0} F[\phi_0] e^{-\frac{1}{4\pi} \int_{\C} |d_g\phi_0|^2 \dd v_g(z)} \equiv C(g) \ga{F[\X^g]} \ ,
\end{align*}
where $\ga{\cdot}$ denotes the expectation with respect to the GFF and $C(g)\coloneqq  \left(\frac{v_g(\hat{\C})}{\det^\prime\left( -\frac{\Delta_g}{2\pi}\right)}\right)^\frac{1}{2}$ is a regularized determinant. After a suitable regularization procedure, both the vertex operators and the exponential interaction term in~\eqref{eq:im_liou_action} can be defined as functionals of the GFF. The latter is interpreted as an instance of \textit{Imaginary Gaussian multiplicative chaos} (Imaginary GMC), denoted $M^g_\beta\d v_g$, and viewed as a random complex distribution. In fact, these probabilistic objects have already been defined in the \textit{compactified} setting~\cite{guillarmouCompactifiedImaginaryLiouville2025}, where $\beta^2$ is rational and both $Q$ and the vertex charges $\alpha_j$ are integer multiples of the inverse of the compactification radius $R^{-1}$. The correlation functions become, for $g=g_0$ the round metric on $\hat\C$ (see Subsection~\ref{subsec:GFF}),
\begin{eqs}\label{eq:correl_intro}
    \ps{\prod_{k=1}^NV_{\alpha_k}(x_k)} = C(g_0)e^{P_{\bm z,\bm\alpha}}&\int_{\mc U}e^{\i(\sum_{k=1}^N\alpha_k-2Q)\beta \bm{c}} \expect{\exp\left(-\mu e^{\i\beta\bm c}M_{\beta}\left(e^{\i\beta H_{\bm x,\bm\alpha}}\right)\right)}\dd \bm{c},\\
    \qt{where}M_\beta(e^{\i\beta H_{\bm x,\bm\alpha}})= &\int_\C e^{-\sum_{k=1}^N\beta\alpha_k G(y,x_k)}M^g_\beta\d v_g(y).
\end{eqs}

The principal difference between~\cite{guillarmouCompactifiedImaginaryLiouville2025} and the present work lies in the integration of the zero mode along the contour $\bm{c}\in \mc{U}$. Establishing the convergence of the integral over $\bm{c}$  in~\eqref{eq:correl_intro} remains an open problem. We formulate this problem in terms of the decay, as $\bm c\to -\i\infty$, of the expectation appearing in the definition of the correlation functions (Conjecture~\ref{conj:Laplace1}). In the following, we denote by $\mc A_N$ the set of weights $(\alpha_k)_{1\leq k\leq N}$ for which correlations are well-defined.

In the present work, we also provide new heuristic evidence in support of this conjecture (see Appendix~\ref{appendix:laplace}). This leads us to refine and extend the Seiberg bounds proposed in~\cite{usciatiProbabilisticConstructionNoncompactified2026}. Namely, we propose (Conjecture~\ref{conj:Laplace2}) that $\mc A_N$ is given by
\begin{align*}
    \alpha_k > Q\text{ for $1\leq k\leq n$} \quad ; \quad \sum_{k=1}^N \alpha_k - 2Q < \min_{k=1,\dots,N} 2(\alpha_k-Q) \wedge \frac{4}{\beta}\,\cdot
\end{align*}

\subsubsection{Ward identities and BPZ equations} The derivation of the structure constants in Imaginary (and standard) Liouville theory can be achieved thanks to the predicted \textit{local conformal invariance} of this two-dimensional QFT. 
A manifestation of this symmetry is the existence of a special functional of the Liouville field, the Stress-Energy Tensor (SET) $\SET(x)$,  which encodes the infinitesimal variations of the correlation functions with respect to the background metric. In the vertex algebra setting, the SET admits a formal series expansion whose modes satisfy the commutation relation of the Virasoro algebra.  
In the present work, we construct the SET of Imaginary Liouville theory in terms of the GFF $\X^g$. The Virasoro generators are defined indirectly by their action on correlation functions: they dress the vertex operator $V_\alpha$ by derivatives of the GFF. For instance $\L_{-1}V_{\alpha}$ is the functional formally defined by $\L_{-1}V_{\alpha}[\phi]=\alpha\partial\phi e^{\alpha\phi}$. Correlation functions with descendants fields (i.e. containing expressions of the form $\L_{-n}V_{\alpha_1}$) are defined in Definitions~\ref{eq:def_desc} and~\ref{def:der_cor}.
We then prove that for every $n\geq 2$, the correlation functions satisfy the \textit{conformal Ward identities} (see Theorem~\ref{thm:desc} for a precise statement): 
\begin{align*}
    \ps{\L_{-n}V_{\alpha_1}(x_1) \prod_{k=2}^{N} V_{\alpha_k}(x_k)} = \left(\sum_{k=2}^N -\frac{\partial_{x_k}}{(x_k-x_1)^{n-1}} + \frac{\Delta_{\alpha_k}}{(x_k-x_1)^{n}}\right) \ps{\prod_{k=1}^N V_{\alpha_k}(x_k)}
\end{align*}
where $\Delta_{\alpha_k}\in \R$ denotes the \textit{conformal dimension} of the vertex operator $V_{\alpha_k}$. Specializing to $\alpha_1=0$, we recover the Ward identity for the SET (Proposition~\ref{prop:ward_set}). Choosing $\alpha_1\in\{-\frac{\beta}{2},\frac{2}{\beta}\}$ so that $V_{\alpha_1}$ is a \textit{degenerate field at level two}, we further derive the celebrated \textit{BPZ equations} (see Theorem~\ref{thm:BPZ} for more details):
\begin{eqs}
\label{eq:intro_bpz}
    \left( - \frac{1}{{\alpha_1}^2} \partial_{x_1}^2 + \sum_{k=2}^N \left(\frac{\partial_{x_k}}{x_1-x_k} + \frac{\Delta_{\alpha_k}}{(x_1-x_k)^2}\right)\right) \ps{ \prod_{k=1}^N V_{\alpha_k}(x_k) } = 0 \ .
\end{eqs}
Our proof is close in spirit to the probabilistic approach developed for Liouville theory~\cite{kupiainenLocalConformalStructure2019}, and, more recently, for Toda CFTs~\cite{cercleWardIdentities$mathfraksl_3$2022}. We first regularize the Gaussian free field at a scale $\eps>0$, replacing the distribution-valued random variable by the smooth Gaussian field $\X^g_\eps$. We also regularize the exponential interaction by excluding neighborhoods of radius $\rho>0$ around the insertion points of the vertex operators. Combined with our estimates on the zero mode integral, these regularizations yield smooth correlation functions satisfying the Ward identities and BPZ equations up to explicit \textit{remainder terms} depending on $\eps$ and $\rho$.

Showing that the remainder terms go to zero as $\eps\to 0$ then $\rho\to 0$ is the main technical challenge of the paper. To achieve this, we establish \textit{fusion estimates} for Imaginary Liouville correlation functions. These estimates, known in the physics litterature as the Operator product expansion (OPE), constitute the most novel technical result of the paper. Indeed, their proof relies on the properties of the Imaginary Gaussian multiplicative chaos, whose behavior differs substantially from that of ordinary Gaussian multiplicative chaos arising in Liouville theory. We find the following discrete OPE (see Lemma~\ref{lemma:fusion_general} for a precise statement): let $m\in \N$ be such that $m-2<\beta(\alpha_1+\alpha_2) < m-1$, then one has
\begin{eqs}
\label{eq:intro_fusion}
&\ps{V_{\alpha_1}(0) V_{\alpha_2}(x) \prod_{k=3}^N V_{\alpha_k}(x_k)} = |x|^{\alpha_1\alpha_2} \left(\ps{V_{\alpha_1+\alpha_2}(0) \prod_{k=3}^N V_{\alpha_k}(x_k)} + \sum_{1\leq i+j \leq m} C_{i,j} x^i \overline{x}^j\right)\\
    &\qquad + |x|^{\alpha_1\alpha_2+\beta(\alpha_1+\alpha_2)+2} \rho(\beta\alpha_1,\beta\alpha_2) \ps{V_{\alpha_1+\alpha_2+\beta}(0)\prod_{k=3}^N V_{\alpha_k}(x_k)} + \underset{x\to 0}{o}(|x|^{\alpha_1\alpha_2+\beta(\alpha_1+\alpha_2)+2}) \ .
\end{eqs}
If we gather the operators appearing in the OPE by increasing conformal dimension, the integer $m$ thus corresponds to the number of Virasoro descendants separating the primaries $V_{\alpha_1+\alpha_2}$ and $V_{\alpha_1+\alpha_2+\beta}$.

\subsubsection{Towards conformal bootstrap for Imaginary Liouville theory} The results of the present work provide the first analytical evidence both in physics and in mathematics that the conformal bootstrap program may be carried out for CFTs with central charge smaller than one. It is the first work investigating the intrinsic features of the path integral construction of Imaginary Liouville theory in the realm of two-dimensional CFT. 

In particular, the BPZ equations provide a route towards an exact derivation of the structure constants. Assuming that the correlation functions satisfy global conformal covariance, the BPZ equations~\eqref{eq:intro_bpz} with $N=4$ map to hypergeometric differential equations. Combining their solutions with the fusion estimates~\eqref{eq:intro_fusion}, one can derive Teschner's shift relations for the structure constants as in Liouville theory~\cite{kupiainenIntegrabilityLiouvilleTheory2020}. Our results therefore provide strong evidence that the Imaginary DOZZ formula can be derived rigorously within the path-integral framework.

The fusion estimates~\eqref{eq:intro_fusion} also offer new insight into the spectrum of Imaginary Liouville theory. In particular, they appear in contradiction with the widely accepted idea that Imaginary Liouville theory possesses a purely continuous spectrum, as proposed in~\cite{ribaultLiouvilleTheoryCentral2015}. Instead, our results suggest that, as in ordinary Liouville theory, a complete bootstrap formula may require to take into account discrete contributions to the spectrum.

Finally, two major challenges remain. The first is to establish global conformal covariance (or \textit{Weyl invariance}) of the correlation functions. The second is to prove the convergence of the zero-mode integral. We hope to adress these questions in future works.

\textit{\textbf{Acknowledgements.}} The authors are grateful to Raoul Santachiara for his constant interest and support of this work. They also thank Colin Guillarmou and Rémi Rhodes for valuable discussions during the early stages of the project. R.~U. acknowledges the hospitality of the LPTHE.

\section{Definitions, preliminary results and statement of the Ward identities}
\subsection{Probabilistic background}
We introduce in this subsection the basic material that enters the definition of the correlation functions of Imaginary Liouville CFT.

\subsubsection{Gaussian free field}\label{subsec:GFF}
By means of stereographic projection, the sphere $\mathbb S^2$ equipped with its standard metric is identified with the Riemann sphere $\hat\C$ equipped with the Riemannian metric $g_{0}=e^{2w_0}\norm{dz}^{2}$ where $w_{0}(z)=-\log\left(1+\norm{z}^{2}\right)$. More generally we will consider metrics on $\hat\C$ of the form $g=e^{2w}\norm{dz}^{2}$ with $w$  (resp. $w\circ\theta+\ln\norm{\theta'}$ with $\theta:z\mapsto\frac1z$) smooth over $\C$ (resp. $\hat\C\setminus\{0\}$). We denote the set of such functions by $\mc C^\infty(\hat\C)$ and the set of corresponding metrics by $\Met\coloneqq \left\{g=e^{2w}\norm{dz}^{2},\, w\in C^\infty(\hat\C)\right\}$.
For such a metric we let $m_g(\cdot) \coloneqq v_g(\hat{\C})^{-1} \int_\C \cdot \; \dd v_g$ be the average of a $v_{g}$-integrable function, where the latter is the volume form in the metric $g$.

For such a $g$, let us denote for $x\neq y$ on $\C$ 
\begin{eqs}
\label{eq:green}
    \green(x,y) = \ln \frac{1}{\norm{x-y}} -\frac{1}{2}\left(W_{g}(x)+W_{g}(y)\right),\qt{with}\\
    W_{g}(x)\coloneqq 2m_{g}\left( \ln \frac{1}{\norm{x-\cdot}}\right)- \frac{1}{v_g(\hat{\C})^2} \int_{\C^2} \ln \frac{1}{|z-z^\prime|} \dd v_g(z) \dd v_g(z^\prime).\
\end{eqs}
If $g=g_0$, then $W_g=w_0+\theta_0$, $\theta_0=\frac12+\frac12\ln2$ (see~\cite[Equations (2.12) and (2.13)]{davidLiouvilleQuantumGravity2016a}). This kernel is such that for any $x$ in $\C$ we have $m_{g}(\green(x,\cdot))=0$. It is the Green function for the (non-positive) Laplacian $\Delta_{g}$ in that, in the distributional sense,
\begin{equation*}
   - \Delta_{g}\green(x,\cdot) = 2\pi\left(\delta_{x}-\frac{1}{v_g(\hat{\C})}\right).\
\end{equation*}
Alternatively, let $(\lambda_j)_{j\geq 1}$ be the positive eigenvalues of $-\Delta_g$, with corresponding eigenfunctions $(\varphi_j)_{j\geq 1}$. Then in the weak sense
\begin{equation*}
    G_g(x,y) = \sum_{j=1}^{+\infty} \frac{\varphi_j(x)\varphi_j(y)}{\lambda_j}\cdot
\end{equation*}
Now let $(a_j)_{j\geq 1}$ be independent Gaussian random variables with law $\mathcal{N}(0,1)$, defined on some probability space $(\Omega,\mathcal{F},\mathbb{P})$. Denote by $H^{-s}(\hat{\C},g)$ the Sobolev space of negative order $-s<0$ based on the $L^2$ space of square-integrable functions in the metric $g$. The Gaussian Free Field (GFF for short) is then the random element of $H^{-s}(\hat{\C},g)$ for any $s<0$ given by
\begin{equation}
    \X^g \coloneqq \sum_{j\geq 1} a_j \frac{\varphi_j}{\sqrt{\lambda_j}}\cdot
\end{equation}
The GFF is such that for any two test functions $u,v$,
\[
    \ga{\langle u, \X^g\rangle_{L^2} \langle v, \X^g \rangle_{L^2}} = 2\pi \int_{\C^2} u(x) \green(x,y) v(y) \; \dd v_g(x) \dd v_g(y) \ .
\]
By definition \eqref{eq:green} of the Green's function,  for $g^\prime$ and $g$ in $\Met$,
\begin{equation}\label{eq:GFF_met}
        \X^{g^\prime} \overset{\text{law}}{=} \X^g - m_{g^\prime}(\X^g) \ .
\end{equation}

Since the GFF is a distribution rather than a function, it is standard to regularize it. To do so, let $g\in\Met$, $x\in\hat\C$ and $\eps>0$. We define for $f$ integrable on the geodesic circles $\partial B(x,\eps)$ (of radius $\eps$ and centered at $x$) equipped with the length measure $l_g$,
\begin{eqs}\label{eq:def_reg_GFF}
   \mu_{g}(x,\eps)[f]\coloneqq\frac1{l_g\left(\partial B(x,\eps)\right)}\int_{\partial B(x,\eps)}f\;\dd l_{g}.
\end{eqs}
We set $\regset_g\coloneqq\left\{f\in H^{-1}(\hat\C,\C),\,x\mapsto \mu_{g}(x,\eps)[f]\in C^0(\hat\C,\C)\text{ for any }\eps>0 \right\}$. In particular, if $f=\X^{g}+H$ for some smooth $H$, then (up to a modification) $f\in\regset_g$ almost surely (see~\cite[Lemma 3.2]{guillarmouPolyakovsFormulation$2d$2019a}).
We note the variance formula (see~\cite[Equation (3.10)]{guillarmouPolyakovsFormulation$2d$2019a})
\begin{eqs}\label{eq:lim_var}
    \E[ \left(\mu_g(x,\eps)[\X^{g}]\right)^{2} ] =- \ln\eps - W_{g}(x)+o(1).
\end{eqs}
We also write $\X^g_\eps(x)\coloneqq \mu_{g,\eps}(x)[\X^g]$, and more generally $f_\eps(x)\coloneqq \mu_{g_0,\eps}(x)[f]$ for $f\in\regset_{g_0}$.

\subsubsection{Vertex operators and Gaussian multiplicative chaos} We now define two types of exponential functionals of the free field that will be instrumental in the following: vertex operators, which represent the primary fields of the conformal field theory, and the (imaginary) Gaussian multiplicative chaos (GMC), i.e. the exponential term in the Liouville action. 
\begin{defi}[Vertex operator]
For $\alpha \in \R$, $\eps>0$ and $g\in\Met$, define for $\phi\in\regset_g$
\begin{eqs}
&V_{\alpha,\eps}^g(x)[\phi] \coloneqq \eps^{-\frac{\alpha^2}{2}} e^{\i\alpha \mu_g(x,\eps)[\phi]}.
\end{eqs}
\end{defi}
By combining Equations~\eqref{eq:GFF_met} and~\eqref{eq:lim_var} we obtain for $g=e^{2\omega} g_0$
\begin{eqs}\label{eq:cov_VO}
    V^{g}_{\alpha,\eps}(x)[\X^{g}-\i Q \omega+\bm c] \eqlaw e^{-2\Delta_\alpha \omega(x)} V^{g_0}_{\alpha,\eps}(x)[\X^{g_0}+\bm c- m_{g}(\X^{g_0})](1+\underset{\eps\to0}{o}(1)) \ .
\end{eqs}
Here $\Delta_\alpha = \frac{\alpha}{2}\left(\frac{\alpha}{2}-Q\right)$ is the \emph{conformal dimension} of the vertex operator $V_{\alpha}$.
\begin{defi}[Imaginary GMC] Let $\beta \in [0, \sqrt{2})$ and $g=e^{2w}\norm{dz}^2\in\Met$. Then
\begin{equation}
M_{\beta,\eps}^g\dd v_g \coloneqq  V^{g}_{\beta,\eps}(x)[\X^{g}-\i Q w]\dd^2 x
\end{equation}
converges (\cite[Theorem 3.1]{lacoinComplexGaussianMultiplicative2015b}), as $\eps\to0$, in probability and in $H^{s}(\hat{\C})$ for $s<-1$.
\end{defi}
We denote by $M_{\beta}^g(\dd^2 x) $ this limit.
By Equation~\eqref{eq:cov_VO}, since $\Delta_\beta=1$, for $g=e^{2\omega}g_0$ in $\Met$:
\begin{equation}\label{eq:cov_GMC}
e^{\i\beta\bm c}M_{\beta}^g\dd v_g\eqlaw e^{\i\beta\left(\bm c-m_g(\X^{g_0})\right)}M_{\beta}^{g_0}\dd v_{g_0} .
\end{equation}
Finally, if $U\subset \C$ Borel and $f$ is a test function, we put
\begin{equation}
    M_{\beta}^g(U,f) \coloneqq \int_U f(x) M_{\beta}^g\dd v_g.
\end{equation}
Further we set $M_\beta^g(U) = M_\beta^g(U,1)$ for $U$ Borel, and $M_{\beta}^{g}(f)=M_{\beta}^g(\C,f)$ for $f$ as above.

\subsubsection{On the Laplace's transform of imaginary GMC}
Correlation functions in Imaginary Liouville theory are expressed using the Laplace's transform of such imaginary GMC integrals, i.e. expressions of the form
$\expect{e^{-\mu M^g_\beta(f)}}$, for some $\mu\in\C$ and $f$ a measurable function. Typically, $f$ does not necessarily have compact support and is singular in the neighborhood of distinct points $x_{1},\cdots,x_{N}$ in $\C$. To ensure existence of such quantities we rely on:
\begin{lemma}[Exponential moments]\label{lemma:def_laplace}
Let $f$ be of the form $f(x)=\prod_{k=1}^{N}\norm{x-x_{k}}^{\beta\alpha_{k}}e^{2h(x)}$, where $h$ is continuous on $\C$ and such that $f$ is bounded at $\infty$. Further assume that $\alpha_{k}>Q$ for all $1\leq k\leq N$. Then
\begin{equation}
\expect{ \norm{e^{-M^g_\beta(f)}}}<\infty.
\end{equation}
\end{lemma}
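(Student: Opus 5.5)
Since $|e^{-M^g_\beta(f)}| = e^{-\mathrm{Re}\, M^g_\beta(f)}$, the claim is that $\mathrm{Re}\, M^g_\beta(f)$ has an exponential moment of order $-1$. I would reduce to known tail estimates for imaginary GMC, in the spirit of~\cite{guillarmouCompactifiedImaginaryLiouville2025}.

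\textbf{Step 1: localization.} I will decompose $\C$ by a smooth partition of unity into small balls $B(x_k,r)$ around each insertion, a neighborhood of $\infty$, and a compact region $K$ away from all $x_k$. After the change of metric \eqref{eq:cov_GMC}, the contribution of $\infty$ becomes a weight behaving like $|y|^{\beta\alpha_\infty}$ near $0$ in the coordinate $y=1/x$. The boundedness of $f$ at infinity, together with the $|dy|^2$ Jacobian, should make this a non-singular (or mildly singular) weight. Writing $M^g_\beta(f)=\sum_j M^g_\beta(f\chi_j)$ and applying Hölder's inequality, it suffices to show that $\expect{e^{p|M^g_\beta(f\chi_j)|}}<\infty$ for every $p>0$ and every piece $j$.

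\textbf{Step 2: the regular part.} On $K$ (and at $\infty$ if it is non-singular), $f\chi_j$ is continuous and compactly supported. For such test functions, imaginary GMC with $\beta<\sqrt2$ is known to have moments of all orders with growth $\expect{|M_\beta(\varphi)|^{2n}}\le C^n (n!)^{a}$ for some $a<1$. This comes from the Coulomb gas formula $\expect{|M_\beta(\varphi)|^{2n}}=\int \prod_{i<j}|x_i-x_j|^{\pm\beta^2}\dots$ together with an electrostatic bound in which paired charges screen each other. With $a<1$, summing the exponential series gives finiteness of $\expect{e^{p|M_\beta(\varphi)|}}$ for every $p$. I would cite~\cite{lacoinComplexGaussianMultiplicative2015b} or~\cite{guillarmouCompactifiedImaginaryLiouville2025} for this, or reproduce the argument there.

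\textbf{Step 3: the singular parts (main obstacle).} Near $x_k$, $f\sim|x-x_k|^{\beta\alpha_k}$, and this weight need not be integrable against $M_\beta$: $M_\beta$ has dimension-one scaling, so what matters is the variance of the dyadic pieces. I will cut $B(x_k,r)$ into annuli $A_n=\{2^{-n-1}r\le|x-x_k|\le 2^{-n}r\}$. By scaling (Markov decomposition of the GFF plus the covariance of imaginary GMC under dilation), $M_\beta(A_n,f)$ equals in law
\[
2^{-n(\beta\alpha_k+2-\beta^2/2)}\, e^{\i\beta Y_n}\, \tilde M_n,
\]
where $Y_n$ is the circle average (Gaussian with variance $\approx n\ln 2$) and $\tilde M_n$ is a copy of an annulus GMC with uniformly bounded exponential moments by Step 2. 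Because of the phase $e^{\i\beta Y_n}$, the relevant decay is that of second moments: $\expect{|M_\beta(A_n,f)|^2}\lesssim 2^{-n(2\beta\alpha_k+4-\beta^2)+n\beta^2}$. Here $\alpha_k>Q$ is exactly what should make the exponent $2\beta\alpha_k+4-2\beta^2+\dots$ positive; I expect the precise bookkeeping to give $\beta(\alpha_k-Q)>0$.

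The difficulty is upgrading this geometric decay to exponential moments of the whole sum $\sum_n M_\beta(A_n,f)$. I would handle it with a chaining/independence argument: use the near-independence of the pieces across well-separated scales conditioned on the circle averages, or more directly bound $\expect{|\sum_n M(A_n,f)|^{2m}}$ by the Coulomb-gas integral with the singular weight. The aim is a bound $C^m (m!)^{a}$ with $a<1$, obtained by showing the charged integral $\int\prod_i f(z_i)\overline{f(w_i)}\prod|z_i-w_j|^{\beta^2}/\prod|z_i-z_j|^{\beta^2}|w_i-w_j|^{\beta^2}$ near $x_k$ is finite exactly when $\alpha_k>Q$, uniformly in $m$ up to factorial growth. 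Summing the series then concludes, and combining the pieces via Hölder as in Step 1 gives the lemma.
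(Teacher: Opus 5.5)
Your Steps 1--2 are reasonable, but the proposal stops exactly where the content of the lemma lies. In Step 3 you leave the passage from dyadic decay to exponential moments of $\sum_n M_\beta(A_n,f)$ open as the ``main obstacle''. You offer either a vague independence/chaining argument or a Coulomb-gas moment bound for the \emph{singular} weight, uniform in $m$ up to $(m!)^a$; the latter is a substantial estimate that you do not prove. The step looked hard because of a misdiagnosis and an arithmetic slip. No cancellation from the phase $e^{\i\beta Y_n}$ is needed. Since $|e^{\i\beta Y_n}|=1$, your own scaling display gives $|M_\beta(A_n,f)|\leq C\,2^{-n\kappa}|\tilde M_n|$ with $\kappa=\beta\alpha_k+2-\frac{\beta^2}{2}=\beta(\alpha_k-Q)$, which is positive precisely when $\alpha_k>Q$. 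Your second-moment exponent contains a spurious $+n\beta^2$. The correct order is $\E|M_\beta(A_n,f)|^2\asymp 2^{-2n\beta(\alpha_k-Q)}$, consistent with your first display, and this slip is why your bookkeeping did not seem to land on $\alpha_k>Q$.

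To close the gap, set $a_n=C2^{-n\kappa}$ and $S=\sum_n a_n<\infty$. By convexity of the exponential, $\exp\bigl(p\sum_n a_n|\tilde M_n|\bigr)\leq\sum_n\frac{a_n}{S}e^{pS|\tilde M_n|}$. Hence the exponential moment of order $p$ of the piece near $x_k$ is at most $\sup_n\E[e^{pS|\tilde M_n|}]$, and no independence across scales is required.

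What you must still justify is that this supremum is finite. For the sphere GFF the scaling is exact only for the Dirichlet part of a Markov decomposition on $B(x_k,2^{1-n}r)$. The harmonic part $h_n$ is independent and contributes only a unimodular factor $e^{\i\beta h_n}$; the remaining factors are deterministic and already absorbed in $2^{-n\kappa}$ up to constants. So conditionally on $h_n$, $\tilde M_n$ is the chaos of a \emph{fixed} Dirichlet field tested against a function bounded by $C$ on a fixed annulus. Step 2 then gives a uniform bound, provided it depends on the test function only through its sup norm and support, which the electrostatic argument gives.

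The paper avoids all of this. It quotes from \cite{guillarmouCompactifiedImaginaryLiouville2025} an explicit bound on $\E[e^{|M^g_\beta(f)|}]$ for $A=\D$ and $A=\D^c$. The bound is in terms of the energy $u^2(A)=\int_{A^2}|f(x)||f(y)|\,l_g(x,y)^{-\beta^2}\dd v_g(x)\dd v_g(y)$ and a boundary term $v(A)$; its form is that of a sub-Gaussian estimate with variance proxy $u^2$. The paper then checks finiteness near $x_k$ by the substitution $x=yx'$. This produces $|y|^{2\beta\alpha_k+2-\beta^2}$ and hence again the condition $\alpha_k>Q$, together with $\beta^2<2$ for the inner integral. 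That route is shorter and immediately yields the uniform version (Lemma~\ref{lemma:uniform_convergence_exp_moments}) used later in the fusion estimates. Yours, once repaired, is more self-contained and exhibits the threshold as the summability of $\sum_n 2^{-n\beta(\alpha_k-Q)}$.
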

\begin{proof}
This statement directly comes from the proof of the finiteness of exponential moments~\cite[Prop.~5.3 and 6.4]{guillarmouCompactifiedImaginaryLiouville2025}, where the authors obtain the bound
\begin{align*}
    &\ga{e^{\norm{M^g_\beta(f)}}} \leq e^{C_1 v(A) + C_2v(A^c)} \left(1+C_1u(A)e^{C_1u^2(A)}\right) \left(1+C_2 u(A^c) e^{C_2u^2(A^c) }\right) \\
    &\qquad u^2(A) \coloneqq \int_{A^2} \norm{f(x)} \norm{f(y)} l_g(x,y)^{-\beta^2} \dd v_g(x) \dd v_g(y)\\
    &\qquad v(A) \coloneqq \int_{A} \norm{f(x)} \text{dist}(x,\partial A)^{-\frac{\beta^2}{2}} \dd v_g(x)
\end{align*}
where we choose $A=\D$ the unit disk, and the constants $C_1$ and $C_2$ do not depend on $f$.

The integrals $u^2$ are finite since the only possible divergences would occur in the vicinity of some $x_k$, where they have the same behavior as
\begin{align*}
    \int_{B(0,1)^2} |x|^{\beta\alpha_k} |y|^{\beta\alpha_k} |x-y|^{-\beta^2} \dd^2 x \dd^2 y  &= 2\int_{0\leq |x|\leq |y| \leq 1} |x|^{\beta\alpha_k} |y|^{\beta\alpha_k} |x-y|^{-\beta^2} \dd^2 x \dd^2 y\\
    &= \int_{B(0,1)} |y|^{2\beta \alpha_k + 2 - \beta^2 } \left( \int_{B(0,1)}  |x^\prime|^{\beta\alpha_k} |x^\prime-1|^{-\beta^2} \dd^2 x^\prime \right)\dd^2 y
\end{align*}
which are indeed finite since $\beta^2 < 2$ and $\alpha_k > Q$. The same arguments show that $v(\D)$ and $v(\D^c)$ are finite as well.
\end{proof}

We now state our main conjecture about the asymptotic behavior of the Laplace transform $\expect{e^{-\mu M^g_\beta(f)}}$ as $\mu\to+\infty$. This conjecture is the key point that is currently missing in our analysis. A discussion on the motivations for such a conjecture is given in Appendix~\ref{appendix:laplace}: it arises from both numerical and analytical evidence.

\begin{conjecture}
\label{conj:Laplace1}
Let $f$ be as in Lemma~\ref{lemma:def_laplace}, and assume that $\alpha_k > Q$ for all $1\leq k\leq N$. There exists $\lambda=\lambda_\beta(\bm\alpha)>0$ such that:
let $\zeta >0$, $K\subset\C^{N-1}$ be compact and
\begin{equation*}
    K_\infty \subset\left\{\bm x=(x_{1},\cdots,x_{N})\in K \times \C, \; \norm{x_i-x_j}\geq \zeta\text{ for }i\neq j\right\} \ .
\end{equation*}
Then for any $f$ as in Lemma~\ref{lemma:def_laplace} and $A>0$,
\begin{center}
\fbox{$\sup\limits_{\mu\geq1}\sup\limits_{\bm{x}\in K_\infty}\sup\limits_{\nnorm{u}_{\infty}\leq A} \mu^{\lambda} \norm{\expect{e^{-\mu M^g_\beta(fe^{u})}}} < +\infty$} \ .
\end{center}
\end{conjecture}

\begin{conjecture}\label{conj:Laplace2}
    We conjecture that $\lambda=\lambda_\beta(\bm\alpha)$ is given by
\begin{equation}
\lambda\coloneqq \min\limits_{j=1,\dots,N} \frac{2(\alpha_j -Q)}{\beta} \wedge \frac{4}{\beta^2}\cdot
\end{equation}
\end{conjecture}

\subsection{Definition of the correlation functions}
Based on this material we can now provide the definition of the correlation functions for the imaginary Liouville theory. To this end let us pick distinct insertions $x_{1},\cdots,x_{N}$ in $\C$ and weights $\alpha_1,\cdots,\alpha_{N}$ in $(Q,+\infty)$. In the rest when $g=g_0$ we omit the supercript $g$: for instance $\X\coloneqq \X^{g_0}$. 

\subsubsection{Correlation functions of Vertex Operators}
For positive $R$ and $\eps$, we define the (regularized) correlation functions by setting 
\begin{eqs}\label{eq:correl_reg}
    \ps{\prod_{k=1}^N V_{\alpha_k}(x_k)}_{g,R,\eps}&\coloneqq C(g)\int_{\gamma_0([-R,R])}(\gamma_0)_\ast(\dd \bm{c})\\
    &  \ga{\prod_{k=1}^N V_{\alpha_k,\eps}^g(x_k)[\X^g+\i Q w_0+\bm c] e^{-\frac{\i Q}{4\pi} \int_{\C} K^g (\X^{g}+\bm c) \dd v_g- \mu \int_\C V^g_{\beta,\eps}(y)[\X^g+\i Q w_0+\bm c]\dd^2 y} } \; 
\end{eqs}
where $\gamma_0: \R\to\C$, described below, is a continuous injective map defining an infinite integration contour for the zero mode $\bm{c}$ in the complex plane. To first take the $\eps\to0$ limit of the above, we rewrite the expectation value using Girsanov's theorem.
By combining Equation~\eqref{eq:GFF_met} and $K^{g}\dd v_g=(2-2\Delta_{g_0}\omega)\dd v_{g_0}$ for $g=e^{2\omega}g_0\in\Met$, we get
\begin{align*}
    e^{-\frac{\i Q}{4\pi} \int_{\C} K^g(\X^{g}+\bm c) \dd v_g}\eqlaw e^{-2\i Q\left(\bm c-m_{g_0}(\X^g)\right)+\frac{\i Q}{2\pi}\int_\C\X^g\Delta_{g_0}\omega\dd v_{g_0}}.
\end{align*}
We now apply Girsanov's theorem, but with a shift of the law of $\X$ in the \textit{imaginary} direction, namely by $-\i Q(w-m_{g_0}(w))$. The result still holds but one has to go through an analytic continuation argument, which has been already discussed in~\cite[Th.~6.11(1)]{guillarmouCompactifiedImaginaryLiouville2025}.
The expectation term in Equation~\eqref{eq:correl_reg} becomes, up to a $o_\eps(1)$ and with $\bm c'=\bm c-m_{g_0}(\X^g)$,
\begin{align*}
    &e^{-\frac{Q^2}{4\pi}\int_\C \left(-w\Delta_{g_0}\omega+4\omega\right)\dd v_{g_0}}\prod_{k=1}^N e^{-2\Delta_{\alpha_k}\omega(x_k)}\ga{e^{-2\i Q\bm c'}\prod_{k=1}^N V_{\alpha_k,\eps}(x_k)[\X+\i Q w_0+\bm c'] e^{- \mu e^{\i\beta\bm c'}M^g_\beta(\C)}}.
\end{align*}
where we have used Equations~\eqref{eq:cov_VO} and~\eqref{eq:cov_GMC}. The same strategy applied to the Vertex Operators gives rise to the following terms: set
\begin{eqs}\label{eq:def_HP}
 	H_{\bm x,\bm\alpha}\coloneqq \i\sum_{k=1}^{N}\alpha_{k}G(\cdot,x_{k}),\quad P_{\bm x,\bm\alpha}\coloneqq -\sum_{k< l}\alpha_{k}\alpha_{l}G(x_{k},x_{l})+\sum_{k=1}^N\left(\Delta_{\alpha_k} w(x_k)+\frac{\alpha_k^2}{2}\theta_0\right).
\end{eqs}
For $\bm c\in\C$, further define
\begin{equation}\label{def:Ex}
	\Ex_{\bm x,\bm\alpha}(\bm c)\coloneqq e^{\i\left(\sum_{k=1}^{N}\alpha_k-2Q\right)\bm c}\expect{\exp\left(-\mu e^{\i\beta\bm c}M_{\beta}\left(e^{\i\beta H_{\bm x,\bm\alpha}}\right)\right)}.
\end{equation}
In view of Lemma~\ref{lemma:def_laplace} the latter is well-defined since $\alpha_{k}>Q$ for all $1\leq k\leq N$, and

\begin{eqs}\label{eq:correl_reg2}
	&\ps{\prod_{k=1}^N V_{\alpha_k}(x_k)}_{g,R} \coloneqq\lim\limits_{\eps\to0}\ps{\prod_{k=1}^N V_{\alpha_k}(x_k)}_{g,R,\eps} \\
    &=\prod_{k=1}^N e^{-2\Delta_{\alpha_k}\omega(x_k)}e^{\frac{c_{\mathrm{IL}}}{96\pi}S_0(2\omega)}C(g)e^{P_{\bm z,\bm\alpha}} \int_{\gamma_0([-R,R])}\Ex_{\bm z,\bm\alpha}\left(\bm c-m_{g_0}(\X^{g})\right)) \; (\gamma_0)_\ast(\dd \bm{c})
\end{eqs}
with a slight abuse of notation. Here $c_{\mathrm{IL}} \coloneqq 1-6Q^2$ is the central charge and $S_0(\omega)=\int_\C\left(-\omega\Delta_{g_0}\omega+4\omega\right)\d v_{g_0}$.
\begin{remark}\label{rmk}
    We almost retrieve the Weyl anomaly which is an axiom of conformal field theory~\cite[Section~4.2]{guillarmouPolyakovsFormulation$2d$2019a}. A sufficient condition to obtain it would be to choose the zero-mode contour to be the real line, because the Lebesgue measure is translation-invariant. Then one could shift the zero-mode $\bm{c}\in \R$ by the Gaussian random variable $m_g(\X^{g_0})$ (this is what is usually done in ordinary Liouville theory~\cite[Lemma~3.1]{guillarmouPolyakovsFormulation$2d$2019a}). As we will see shortly, in our case the contour $\gamma_0([-R,R])$ is not translation-invariant in the real direction. However, since $m_{g_0}(\X^{g})$ only shifts the mean of $\X^g$, we believe that it does not change the asymptotic behavior predicted in Conjectures~\ref{conj:Laplace1} and~\ref{conj:Laplace2}. As such, global conformal covariance is equivalent to invariance of the integral over the zero mode by shifts in the \textit{real} direction.
\end{remark}

Let us now describe the contour of integration for the $\bm c$ variable. As explained in the introduction, it has been argued that Imaginary Liouville theory corresponds to a symmetric \textit{U-shaped contour} connecting the $-\i\infty$ and $(2\pi-\i\infty)/\beta$ regions of the complex plane and containing the interval $[0,2\pi/\beta]$:
\begin{equation}
    \gamma_0:\R\to\C, \quad x \mapsto \left\{\begin{array}{ll}
        \i(x+\frac{\pi}{\beta}) & \text{ if } x\leq-\frac\pi\beta,\\
        x+\frac\pi\beta & \text{ if } -\frac\pi\beta\leq x\leq\frac\pi\beta,\\
        \frac{2\pi}{\beta} + \i(x-\frac{\pi}{\beta}) & \text{ if } x \leq \frac\pi\beta.
    \end{array}\right.
\end{equation}
Because of the existence of exponential moments (Lemma~\ref{lemma:def_laplace}), $\Ex_{\bm z,\bm\alpha}(\bm c)$ as defined in Equation~\eqref{def:Ex} is a holomorphic function of $\bm{c}$. Hence by Cauchy theorem, the exact shape of the integration contour $\gamma_0(\R)$ does not matter as long as it is non self-intersecting and keeps the same asymptotic behavior $\gamma_0(-\infty) = -\i\infty$, $\gamma_0(+\infty) = 2\pi/\beta -\i\infty$. We write $\mathcal{U} \coloneqq \gamma_0(\R)$.
\begin{defi}[Correlation functions of vertex operators] \label{defi:correls} Let $(x_k)_{1\leq k\leq N}$ in $\C$ be distinct. We define $\mc A_N\coloneqq\left\{(\alpha_k)_{1\leq k\leq N}\in(Q,+\infty)^N,\, \sum_{k=1}^N\alpha_k-2Q<\lambda\right\}$ the set of charges satisfying the \textit{Seiberg bounds}. 
Then, assuming that Conjecture~\ref{conj:Laplace1} holds true, the following limit exists and is finite:
\begin{equation}
    \ps{\prod_{k=1}^N V_{\alpha_k}(x_k)}\coloneqq\lim\limits_{R\to+\infty}\ps{\prod_{k=1}^N V_{\alpha_k}(x_k)}_{R} \end{equation}
Such limits define the correlation functions of imaginary Liouville theory.
\end{defi}
For $g=g_{0}$, the contour integral can be rewritten more explicitly as follows:
\begin{eqs}
\label{eq:decomp_contour}
	\ps{\prod_{k=1}^N V_{\alpha_k}(x_k)} &= \im \left(1-e^{-2\i\pi s }\right)e^{P_{\bm z,\bm\alpha}}\int_{0}^{+\infty} e^{-s\beta \bm{c}} \expect{\exp\left(-\mu e^{\beta\bm c}M_{\beta}\left(e^{\i\beta H_{\bm x,\bm\alpha}}\right)\right)}\dd \bm{c}\\
	&\quad + e^{P_{\bm z,\bm\alpha}} \int_{0}^{\frac{2\pi}{\beta}} e^{-\im s\beta \bm{c}} \ga{\exp\left(-\mu e^{\im \beta \bm{c} }M_{\beta}\left(e^{\i\beta H_{\bm x,\bm\alpha}}\right)\right)} \dd \bm{c}
\end{eqs}
with $s\coloneqq(2Q-\sum_{k=1}^N \alpha_k)\big/\beta$. When $s$ is an integer, the first term on the right hand side vanishes while the second term is the Fourier transform of a periodic function. In particular, when $s$ is a negative integer the correlation functions identically vanish. Regarding the three-point function, this is consistent with the Imaginary DOZZ formula (see~\cite[Eq.~(7.10)]{guillarmouCompactifiedImaginaryLiouville2025}). Likewise when \textit{charge neutrality} holds, that is $s\in\N$, Conjecture~\ref{conj:Laplace1} is no longer needed to define the correlation functions:
\begin{proposition}[Coulomb gas correlation functions]
\label{defi:CG_correls}
In the setting of Definition~\ref{defi:correls}, take $g=g_{0}$ and further assume that $s$ is a non-negative integer. Then the correlation functions are well-defined and are given by Dotsenko-Fateev type integrals:
\begin{eqs}
\label{eq:charge_neutr}
    \ps{\prod_{k=1}^N V_{\alpha_k}(x_k)} = \; &\frac{2\pi}{\beta } \frac{(-\mu)^s}{s!} \prod_{k<l} |x_k-x_l|^{\alpha_j\alpha_k}  \int_{\C^s} \prod_{j=1}^s \prod_{k=1}^N |x_k-y_j|^{\beta \alpha_k} \prod_{j<l} |y_j-y_l|^{\beta^2} \dd^2 y_j. \ 
\end{eqs}
\end{proposition}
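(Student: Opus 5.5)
We start from the contour decomposition \eqref{eq:decomp_contour}. When $s\in\N$, the prefactor $1-e^{-2\i\pi s}$ vanishes, so the contribution of the two vertical half-lines disappears. More precisely, the two vertical pieces of $\mc U$ are exchanged by the translation $\bm c\mapsto \bm c+2\pi/\beta$, and they carry opposite orientations. The integrand $\Ex_{\bm x,\bm\alpha}(\bm c)=e^{-\i s\beta\bm c}\,\E[\exp(-\mu e^{\i\beta\bm c}M_\beta(e^{\i\beta H}))]$ is invariant under this translation exactly when $s\in\mathbb Z$. The two vertical contributions therefore cancel for every finite truncation $R$. Consequently the limit $R\to\infty$ in Definition~\ref{defi:correls} is trivial, and no decay as $\bm c\to-\i\infty$ (hence no Conjecture~\ref{conj:Laplace1}) is needed. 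What remains is
\[
e^{P_{\bm x,\bm\alpha}}\int_0^{2\pi/\beta} e^{-\i s\beta\bm c}\,\E\Big[\exp\big(-\mu e^{\i\beta\bm c}M_\beta(e^{\i\beta H_{\bm x,\bm\alpha}})\big)\Big]\,\dd\bm c ,
\]
which is finite because the interval is compact and the integrand is bounded on it by Lemma~\ref{lemma:def_laplace}; the modulus of $e^{\i\beta\bm c}$ is one for real $\bm c$.

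Next we expand the exponential. Set $Z\coloneqq M_\beta(e^{\i\beta H_{\bm x,\bm\alpha}})$. By Lemma~\ref{lemma:def_laplace} (applied to $\pm$ multiples, i.e.\ using the bound on $\E[e^{|M(f)|}]$ from its proof), $\E[e^{\mu|Z|}]<\infty$. Dominated convergence then justifies
\[
\E\big[e^{-\mu e^{\i\beta\bm c}Z}\big]=\sum_{n\ge0}\frac{(-\mu)^n}{n!}e^{\i n\beta\bm c}\,\E[Z^n],
\]
uniformly in $\bm c\in[0,2\pi/\beta]$. Integrating term by term against $e^{-\i s\beta\bm c}$ kills every term except $n=s$, which gives $\frac{2\pi}{\beta}\frac{(-\mu)^s}{s!}\E[Z^s]$.

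It remains to compute the moment $\E[Z^s]$. For this we use the regularisation $M_{\beta,\eps}$ and Gaussian computations:
\[
\E\Big[\prod_j \eps^{-\beta^2/2}e^{\i\beta\X_\eps(y_j)}\Big]=\prod_{j<l}e^{-\beta^2\E[\X_\eps(y_j)\X_\eps(y_l)]}\,(1+o(1))\cdots,
\]
so that in the limit one obtains $\prod_{j<l}e^{-\beta^2 G(y_j,y_l)}$ times $w_0$, $\theta_0$ correction factors. Combining these with $e^{\i\beta H}=\prod_{j,k}e^{-\beta\alpha_k G(y_j,x_k)}$ and with $e^{P_{\bm x,\bm\alpha}}$, one substitutes $G(x,y)=-\ln|x-y|-\tfrac12(w_0(x)+w_0(y))-\theta_0$. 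The bookkeeping should show that all $w_0$ factors cancel against the volume form $e^{2w_0}\dd^2y$ and the $\Delta_{\alpha_k}w_0(x_k)$ terms, precisely because of the neutrality relation $\sum\alpha_k+s\beta=2Q$. This leaves $\prod_{k<l}|x_k-x_l|^{\alpha_k\alpha_l}\prod|x_k-y_j|^{\beta\alpha_k}\prod|y_j-y_l|^{\beta^2}$ (note the sign conventions: with imaginary charges the exponents appear with $+$ sign).

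The main obstacle is justifying the passage $\eps\to0$ in $\E[Z_\eps^s]$, i.e.\ convergence of imaginary GMC moments to the Dotsenko--Fateev integral. We would get this from $L^2$ (indeed $L^p$) convergence of $M_{\beta,\eps}(f)$ (Lacoin et al.), together with absolute integrability of the limiting integrand. The latter holds because $\beta^2<2$, and near $x_k$ the exponent $\beta\alpha_k>\beta Q>-2$, exactly as in the $u^2(A)$ computation of Lemma~\ref{lemma:def_laplace}. Uniform integrability of $Z_\eps^s$ follows from the exponential moment bound of that lemma, which is uniform in $\eps$.
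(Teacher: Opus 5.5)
Your proposal is correct and follows the route the paper indicates, namely the remark after \eqref{eq:decomp_contour} combined with the moment formula of Appendix~\ref{appendix:laplace}. The vertical half-lines cancel exactly at every truncation $R$ because $1-e^{-2\i\pi s}=0$, so no decay input is needed; the integral over $[0,2\pi/\beta]$ is then a Fourier coefficient which, after expanding the exponential under the exponential-moment bound, gives $\frac{2\pi}{\beta}\frac{(-\mu)^s}{s!}$ times the $s$-th imaginary-chaos moment, evaluated as a Coulomb gas integral. Your bookkeeping step, which you only asserted, does go through with the $+\i Q w_0$ convention of \eqref{eq:correl_reg}: neutrality cancels all $w_0(y_j)$ and $w_0(x_k)$ factors, leaving only a global multiplicative constant (from $C(g_0)$ and the $\theta_0$ parts of the Green function) that the displayed formula, like the paper, suppresses.
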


\subsubsection{Further functionals and derivatives of the Liouville field}
To derive Ward identities for the correlation functions, we need to define derivatives of the Liouville field. Let us assume that $F:C^\infty(\C,\C)\to C^\infty(\C,\C)$ is of the form $F:\Phi\mapsto\prod_{l=1}^{p}\partial^{n_{l}}\Phi(z)$ for some positive integers $n_{1},\cdots,n_{p}$ and $z\in\C$.  For such a $F$, we set for positive $R$, $\eps$
\begin{eqs}
    &\ps{F \prod_{k=1}^N V_{\alpha_k}(x_k)}_{R,\eps} \\
    &\coloneqq C(g_0) \int_{\gamma_0([-R,R])} e^{-2\i Q\bm c}\ga{ F[\Phi_\eps] \prod_{k=1}^N V_{\alpha_k,\eps}(x_k) e^
    {- \mu e^{\i\beta \bm{c}} M_{\beta,\eps}(\C)} } \; (\gamma_0)_\ast(\dd \bm{c}).
\end{eqs}
where $\Phi\coloneqq\X+\i Qw_0+\bm c$ is the \textit{Liouville field}, and $\Phi_\eps\coloneqq\X_\eps+\i Qw_0+\bm c$.
First assume that $z\in\C\setminus\{z_1,\cdots,z_N\}$: then by Girsanov's theorem we can rewrite the latter, up to a $o(1)$ term, as
\begin{align*}
    &C(g_0)e^{P_{\bm z,\bm\alpha}}\int_{\gamma_0([-R,R])} \ga{ F[\Phi_\eps+ H_{\bm x,\bm\alpha}] e^
    { - \mu e^{\i\beta \bm{c}} M_{\beta,\eps}(e^{\i\beta H_{\bm x,\bm\alpha}})} } \; (\gamma_0)_\ast(\dd \bm{c}).
\end{align*}
To make sense of this expression we then rely on Gaussian integration by parts:
\begin{lemma}[Gaussian integration by parts]
	\label{gaussian_integration}
	Let $X$ and $Y_1,\dots,Y_N$ be Gaussian random variables on the same probability space, and $f\in \mathcal{C}^\infty(\R^N,\C)$ with compact support. Then
	\begin{equation}
		\E\left[ X f(Y_1,\dots,Y_N) \right] = \sum_k \E[XY_k] \;  \E[\partial_k f( Y_1, \dots , Y_N )] \ .
	\end{equation}
\end{lemma}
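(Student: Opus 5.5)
The plan is to reduce the statement to the case where $X,Y_1,\dots,Y_N$ form a jointly Gaussian, centered vector; implicitly, "on the same probability space" should be read this way. Two routes are available. One decomposes $X$ along the span of the $Y_k$ and uses independence. The other computes with characteristic functions. I would take the first because it stays elementary.

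\textbf{Decomposition.} Let $\Sigma=(\E[Y_kY_l])_{k,l}$ be the covariance matrix of $Y$, and first treat the case where $\Sigma$ is invertible. Set $a=\Sigma^{-1}c$ with $c_l=\E[XY_l]$, and write
\[
X=\sum_l a_lY_l+Z .
\]
By construction $Z$ is centered and uncorrelated with every $Y_k$. Since $(Z,Y)$ is jointly Gaussian, $Z$ is independent of $Y$, so $\E[Zf(Y)]=\E[Z]\E[f(Y)]=0$. This reduces the claim to the case $X=Y_l$.

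\textbf{The case $X=Y_l$.} Here the density $p(y)\propto e^{-\frac12 y^T\Sigma^{-1}y}$ satisfies $-\nabla p=\Sigma^{-1}y\,p$. Integrating by parts, with no boundary terms because $f$ has compact support, gives
\[
\E[Y f(Y)]=\Sigma\,\E[\nabla f(Y)] ,
\]
that is, $\E[Y_lf(Y)]=\sum_k\Sigma_{lk}\E[\partial_kf(Y)]$. Summing against $a$ yields $\sum_k(\Sigma a)_k\E[\partial_kf(Y)]=\sum_kc_k\E[\partial_kf(Y)]$, which is the claim.

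\textbf{Degenerate covariance.} The main, though mild, obstacle is the case where $\Sigma$ is singular. I would handle it by approximation. Replace $Y$ by $Y^\delta=Y+\sqrt\delta\, G$ with $G$ a standard Gaussian vector independent of $(X,Y)$. This leaves every $\E[XY_k]$ unchanged, and the identity holds for $Y^\delta$ by the argument above. Then let $\delta\to0$ using dominated convergence: $f$ and $\nabla f$ are bounded and continuous, and $X$ is integrable. Complex-valued $f$ is handled by linearity, applying the argument to the real and imaginary parts separately.
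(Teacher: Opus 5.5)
Your proof is correct and complete. The paper gives no proof of this lemma; it is quoted as a standard fact, so there is no route of the paper's to compare with. Your argument is the classical one: decompose $X=\sum_l a_lY_l+Z$ with $Z$ independent of $Y$, integrate by parts against the nondegenerate Gaussian density, and perturb to $Y+\sqrt{\delta}\,G$ when $\Sigma$ is singular.

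You are right to make explicit that $(X,Y_1,\dots,Y_N)$ must be a centered, jointly Gaussian vector. As literally stated, the identity fails otherwise. For example, if $X$ has mean $m\neq 0$ and $Y$ is centered, the left-hand side picks up an extra $m\,\E[f(Y)]$. This is also the setting in which the paper uses the lemma: the regularized field $\X_\eps$ tested against a Riemann-sum approximation of the chaos.

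Two small remarks.
\begin{enumerate}
\item Introducing $G$ independent of $(X,Y)$ may require enlarging the probability space. This is harmless, since both sides depend only on the joint law of $(X,Y)$.
\item The paper applies the lemma to functions such as $\exp(-\mu e^{\i\beta\bm c}M_{\beta,\eps}(\cdot))$ multiplied by polynomials in $\partial^{n}\X_\eps$, which are not compactly supported. Your proof covers this verbatim for smooth $f$ with $f,\nabla f$ of at most polynomial growth. Boundary terms in the integration by parts still vanish against the Gaussian density. The domination in the limit $\delta\to0$ still holds thanks to Gaussian moment bounds.
\end{enumerate}
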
 
In our setting this yields the equality, for $\mu\in\C$ and $f$ continuous:
\begin{align*}
    &\ga{ \prod_{l=1}^{p}\partial^{n_{l}}\X_{\eps}(z) e^{ - \mu M_{\beta,\eps}(f)}}=\sum_{l=1}^{p-1} \partial^{n_{p}}_{z_1}\partial^{n_{l}}_{z_2}G_{\eps}(z_1,z_2)\vert_{z_1=z_2=z} \ga{ \prod_{m\neq l,p}\partial^{n_{m}}\X_{\eps}(z)e^{ - \mu  M_{\beta,\eps}(f)} }\\
    &-\mu \i\beta\int_{\C} \partial^{n_{l}}_zG_{\eps}(z,y)f(y)\ga{ \prod_ {l=1}^{p-1}\partial^{n_{m}}\X_{\eps}(z)V_{\beta,\eps}(y)e^{ - \mu  M_{\beta,\eps}(f)} }\norm{\dd y}^2.
\end{align*}
The first term is singular as $\eps\to0$. Likewise the second integral might become singular at $y=z$. To remedy this issue, we first introduce an extra regularization parameter $\rho>0$, and set $\C_\rho\coloneqq \C\cup\bigcup_{k=1}^N \setminus B(z_k,\rho) \cap B(0,\rho^{-1})$ for some $z_k$ fixed. We then define the $\eps\to0$ of the above recursively by means of Wick products by setting for $H$ smooth at $z$:
\begin{eqs}\label{eq:def_deri}
    :\prod_{l=1}^{p}\partial^{n_{l}}\left(\X_\eps+H\right)(z):\quad&\coloneqq\quad \partial^{n_{p}}\left(\X_\eps+H\right)(z)H(z):\prod_{l=1}^{p-1}\partial^{n_{l}}\left(\X_\eps+H\right)(z):\\
    &+\sum_{l=1}^{p-1}\partial^{n_p}_{z_1}\partial^{n_l}_{z_2}\ln\norm{z_1-z_2}\vert_{z_1=z_2=z} \prod_{m\neq l,p}\partial^{n_{m}}\left(\X_\eps+H\right)(z).
\end{eqs}
\begin{defi}[Derivatives of the Liouville field]
    Derivatives of the Liouville field are defined recursively by setting, for $H$ smooth at $z$, $\mu\in\C$, $f$ continuous and $\rho>0$:
    \begin{align*}
    &\ga{ :\prod_{l=1}^{p}\partial^{n_{l}}\left(\X+H\right)(z): e^{ - \mu M_{\beta}(\C_\rho,f)}}\coloneqq \lim\limits_{\eps\to0}\ga{ :\prod_{l=1}^{p}\partial^{n_{l}}\left(\X_\eps+H\right)(z): e^{ - \mu M_{\beta}(\C_\rho,f)}}.
\end{align*}
\end{defi}
Typically we will apply this definition to $H=H_{\bm x,\bm\alpha}+\im Q w_0$. However when $z=z_l$ for some $1\leq l\leq N$, the function $H_{\bm x,\bm\alpha}$ becomes singular. In that case we set 
\begin{equation}
    H_{\bm x,\bm\alpha}^{(l)}\coloneqq \i\sum_{k=1}^{N}\alpha_{k}\left(G(\cdot,x_{k})+\ln\norm{\cdot-z_k}\mathds 1_{k=l}\right)
\end{equation}
which is regular at $z_l$. Our definition of regularized correlation functions with derivatives is 
\begin{defi}[Correlation functions with derivatives of the field]\label{def:der_cor}
In the setting of Definition~\ref{defi:correls}, take $F:\Phi\mapsto \prod_{l=1}^p\partial^{n_l}\Phi$. We define
\begin{eqs}\label{eq:correl_der}
   &\ps{F(x_1) \prod_{k=1}^N V_{\alpha_k}(x_k)}_{R,\rho}\coloneqq C(g_0) e^{P_{\bm z,\bm\alpha}}\times \\
    &\int_{\gamma_0([-R,R])} e^{-2\i Q\bm c}\ga{ :F[\liou+H_{\bm x,\bm\alpha}^{(1)}](z_1): e^{ - \mu e^{i\beta \bm{c}} M^g_{\beta}\left(\C_\rho,e^{\im \beta H_{\bm x,\bm\alpha}}\right)} } \; (\gamma_0)_\ast(\dd \bm{c}).
\end{eqs}
\end{defi}
This quantity is indeed well-defined for $\rho>0$ since we integrate away from singular points. In the next section we will study the $\rho\to0$ limit\footnote{Convergence is ensured from Lemma~\ref{lemma:def_laplace} by uniform integrability.} of expressions of this form and show that they give rise to Ward identities satisfied by the correlation functions. Specializing to the $g=g_0$ case we are led to studying the $\rho\to0$ behaviour of
\begin{align*}
    &e^{P_{\bm z,\bm\alpha}}\int_{\gamma_0([-R,R])} \ga{ :\prod_{l=1}^p\partial^{n_l}[\liou+ H_{\bm x,\bm\alpha}^{(1)}](z_1): e^{- \mu e^{\i\beta \bm{c}} M^g_{\beta}(\C_\rho,e^{\i\beta H_{\bm x,\bm\alpha}})}} \; (\gamma_0)_\ast(\dd \bm{c}).
\end{align*}

\subsection{Statement of the main results}
The stress-energy tensor and the descendants fields are viewed as the following functionals $C^\infty(\C,\C)\to C^\infty(\C,\C)$ applied to the Liouville field:
\begin{defi}[Stress-energy tensor]
    The stress-energy tensor (SET) is the functional
	\begin{equation}
		\SET[\Phi] \coloneqq \im Q \partial^2 \Phi - ( \partial \Phi) ^ 2.
	\end{equation}
\end{defi}
\begin{defi}[Descendant fields] Let $\alpha\in\C$, $n\geq 1$. We define the functional
\begin{eqs}\label{eq:def_desc}
    \L_{-n}^\alpha[\Phi] = \im( (n-1)Q + \alpha ) \frac{\partial^n \Phi}{(n-1)!} - \sum_{i=0}^{n-2} \frac{\partial^{i+1} \Phi}{i!} \frac{\partial^{n-i-1} \Phi}{(n-2-i)!}\cdot
\end{eqs}
The descendant field $\L_{-n}V_\alpha$ is defined as $\L_{-n}V_\alpha[\Phi]\coloneqq \L_{-n}^\alpha[\Phi]  V_{\alpha}[\Phi]$.
\end{defi}
Insertion of such quantities within correlation functions is defined by~\ref{def:der_cor}. We then have:
\begin{theorem}[Ward identities for the descendant fields]
\label{thm:desc}
    Let $(x_k)_{1\leq k\leq N}\in\C^N$ be distinct and let $\alpha_1,\cdots,\alpha_N>-\frac{3}{2\beta}-\frac\beta2$. In either case:
    \begin{itemize}
        \item charge neutrality holds, \emph{i.e.}, $\frac{ 2Q- \sum_{k=1}^N \alpha_k}{\beta}= s\in\N$;
        \item $(\alpha_1,\cdots,\alpha_N,\underbrace{\beta,\cdots,\beta}_{k\text{ terms}})\in\mc A_{N+k}$ for $0\leq k\leq 2$ 
        and Conjecture~\ref{conj:Laplace1} is true;
    \end{itemize}
    then for any $n\geq 2$ and for $\alpha_1>\frac{n-4}{2\beta}-\frac\beta2$, in the sense of weak derivatives:
    \begin{eqs}
    \label{eq:thm_ward}
        \ps{ \L^{\alpha_1}_{-n} V_{\alpha_1}(x_1) \prod_{k=2}^N V_{\alpha_k}(x_k) } = \sum_{k=2}^N \left(-\frac{\partial_{x_k}}{(x_k-x_1)^{n-1}} + \frac{\Delta_{\alpha_k}}{(x_k-x_1)^n} \right) \ps{\prod_{k=1}^N V_{\alpha_k}(x_k) }.
    \end{eqs}
\end{theorem}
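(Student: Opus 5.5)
We follow the strategy of \cite{kupiainenLocalConformalStructure2019} and \cite{cercleWardIdentities$mathfraksl_3$2022}. The descendant correlation is expanded by Gaussian integration by parts (Lemma~\ref{gaussian_integration}) into explicit terms plus GMC insertion integrals, and the same expansion is obtained for the right-hand side by differentiating Definition~\ref{defi:correls}. We then show that the two agree up to remainder terms which vanish as $\rho\to0$.

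\textbf{Regularized expansion of the left-hand side.} Work with $g=g_0$, a fixed $R$ (or directly at charge neutrality), and the $\rho$-regularized correlation of Definition~\ref{def:der_cor} applied to $F=\L^{\alpha_1}_{-n}$. Write the Liouville field after Girsanov as $\X+H^{(1)}_{\bm x,\bm\alpha}+\i Q w_0+\bm c$.
\begin{itemize}
\item The deterministic part $\partial^j H^{(1)}_{\bm x,\bm\alpha}(x_1)$ produces the explicit rational functions $\sum_{k\geq2}\alpha_k(x_1-x_k)^{-j}$.
\item Gaussian integration by parts replaces each $\partial^j\X(x_1)$ by $-\i\mu\beta e^{\i\beta\bm c}\int_{\C_\rho}\frac{(j-1)!}{2(x_1-y)^{j}}\,e^{\i\beta H}\,M_\beta(\dd^2 y)$, with one or two GMC insertions at $y$ (and $y'$).
\item The Wick ordering in \eqref{eq:def_deri} removes exactly the diagonal contractions.
\end{itemize}
For the quadratic terms we apply integration by parts twice. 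The outcome is that the left-hand side equals an explicit rational function times the correlation, plus integrals over $y\in\C_\rho$ (and $y,y'\in\C_\rho^2$) of the $(N+1)$- and $(N+2)$-point correlations with extra charges $\beta$ inserted. This is why the hypothesis $(\alpha_1,\dots,\alpha_N,\beta,\dots,\beta)\in\mc A_{N+k}$ for $k\leq 2$ is needed: it makes these integrands well defined through Conjecture~\ref{conj:Laplace1}. At charge neutrality the same holds directly by Proposition~\ref{defi:CG_correls}.

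\textbf{Regularized expansion of the right-hand side.} Differentiate $\langle\prod V_{\alpha_k}\rangle$ in $x_k$ under the zero-mode integral in the weak sense. The derivative hits $e^{P_{\bm x,\bm\alpha}}$ and the GMC weight $e^{\i\beta H_{\bm x,\bm\alpha}}$, and the latter produces single GMC-insertion integrals over $y$. The key algebraic step is to match the double-insertion terms on the left with single-insertion ones by integrating by parts in $y$. Concretely, $\partial_y$ of the integrand $(x_1-y)^{1-n}\langle V_\beta(y)\cdots\rangle$ brings down, through $\partial_y$ acting on the Girsanov shift, exactly the double-insertion term. The complete $y$-derivative then integrates to boundary terms on $\partial\C_\rho$, that is on the circles $\partial B(x_k,\rho)$, $\partial B(0,\rho^{-1})$, and $\partial B(x_1,\rho)$.

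Here we use the zero-mode identity of Proposition~\ref{KPZ}: invariance of the $\bm c$-integral under imaginary shifts. This identity is what converts the term $\mu\,\partial_\mu$, i.e.\ the total GMC mass, into a multiple of $(\sum\alpha_k-2Q)/\beta$. It is needed so that the constants produce exactly $\Delta_{\alpha_k}(x_k-x_1)^{-n}$, including the curvature contribution $\i Q$ coming from $w_0$. After these cancellations, the left-hand side minus the right-hand side reduces to a finite sum of boundary integrals over $|y-x_k|=\rho$ and $|y|=\rho^{-1}$.

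\textbf{Main obstacle: vanishing of the boundary terms.} The boundary term near $x_k$ has the form $\rho\int_0^{2\pi}\langle V_\beta(x_k+\rho e^{i\theta})\cdots\rangle\,\dd\theta$. It requires the fusion estimate Lemma~\ref{lemma:fusion_general}, which gives $|\langle V_\beta(x_k+u)V_{\alpha_k}(x_k)\cdots\rangle|\lesssim|u|^{\beta\alpha_k}$ up to the analytic corrections. For $k\neq1$ the boundary term is then $O(\rho^{1+\beta\alpha_k})\cdot\rho^{0}$, but the kernels carry no singularity there. At $x_1$ the kernel contributes $\rho^{-(n-1)}$ or $\rho^{-n}$, so we need $\beta\alpha_1+2-n+\dots>0$ and, for the quadratic terms, $2\beta\alpha_1+\beta^2+4-n>0$. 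Tracking the exponents, this is exactly where $\alpha_1>\frac{n-4}{2\beta}-\frac\beta2$ enters; the lower bound $\alpha_k>-\frac3{2\beta}-\frac\beta2$ plays the same role at $n=1$. Oscillatory cancellations in the angular integral, namely the vanishing of the terms $x^i\bar x^j$ with $i\neq j$ after integration against $e^{\i m\theta}$, are used to discard non-decaying analytic terms of the fusion expansion. The terms at $\infty$ vanish by the Seiberg bound $\sum\alpha_k-2Q<\lambda$ together with a fusion estimate at infinity. Finally, we let $\rho\to0$ and then $R\to\infty$; uniform control in $R$ is provided by the $\sup_{\bm x\in K_\infty}$ in Conjecture~\ref{conj:Laplace1}.

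I expect the critical step to be proving the fusion estimate with the required uniformity. For imaginary GMC it cannot be obtained by monotonicity, so I would derive it from the Girsanov representation, expanding $e^{\i\beta H}$ near $x_k$ and controlling the error by Lemma~\ref{lemma:def_laplace} and Conjecture~\ref{conj:Laplace1}.
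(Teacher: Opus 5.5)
Your plan is essentially the paper's proof. It uses Gaussian integration by parts (Lemmas~\ref{lemma:insertions} and~\ref{lemma:wick_insertions}), with the KPZ identity (Proposition~\ref{KPZ}) cancelling the total-mass/$\partial^p w_0(x_1)$ terms; it removes the double insertions through the total derivative $\partial_z\bigl((x_1-z)^{1-n}\ps{V_\beta(z)\mathbf{V}}_\rho\bigr)$ plus the symmetrization identities; it recognizes the $\partial_{x_k}$ (i.e.\ $\L^{\alpha_k}_{-1}$) terms; and it kills the boundary circles via Lemma~\ref{lemma:fusion_general} and angular cancellation, arriving at the same condition $2\beta\alpha_1+\beta^2+4-n>0$.

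One bookkeeping caveat, which your proposal shares with the paper's own computation: the $Q$ in $\Delta_{\alpha_k}$ comes from the coefficient $\im(n-1)Q$ of $\L^{\alpha_1}_{-n}$, not from $w_0$, whose contributions cancel outright. That coefficient, combined with the $n-1$ diagonal terms $\alpha_k^2/4$ of the quadratic part, actually gives $(n-1)\Delta_{\alpha_k}(x_k-x_1)^{-n}$, as in the standard $\mathcal{L}_{-n}$. So ``exactly $\Delta_{\alpha_k}$'' is only right for $n=2$, which is the case used in Proposition~\ref{prop:ward_set} and Theorem~\ref{thm:BPZ}.
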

As corollaries, we deduce the two statements:
\begin{proposition}\label{prop:ward_set} In the setting of Theorem~\ref{thm:desc} with $n=2$,
\begin{eqs}
\label{eq:formal_global_Ward}
    \ps{ \SET(x) \prod_{k=1}^N V_{\alpha_k}(x_k) } = \sum_{k=1}^N \left(\frac{\partial_{x_k}}{x-x_k} + \frac{\Delta_{\alpha_k}}{(x-x_k)^2}\right) \ps{\prod_{k=1}^N V_{\alpha_k}(x_k) } \ .
\end{eqs}
\end{proposition}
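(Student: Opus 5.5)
Proposition~\ref{prop:ward_set} follows from Theorem~\ref{thm:desc} with $n=2$ by inserting a vertex operator of charge zero at $x$. Set $x_0\coloneqq x$, $\alpha_0\coloneqq 0$, and consider the $(N+1)$-point function
\[
\ps{V_0(x)\prod_{k=1}^N V_{\alpha_k}(x_k)}.
\]

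\textbf{Step 1: the charge-zero insertion is invisible.} By definition, $V_0\equiv 1$. Moreover $H_{\bm x,\bm\alpha}$, $P_{\bm x,\bm\alpha}$ and $s$ are unchanged when a charge $0$ is added, and the contribution $\Delta_0 w(x)$ to $P$ vanishes since $\Delta_0=0$. Hence the $(N+1)$-point correlation function above coincides with the $N$-point function. The hypotheses of Theorem~\ref{thm:desc} also carry over:
\begin{itemize}
\item charge neutrality is unaffected;
\item $(0,\alpha_1,\dots,\alpha_N,\beta,\dots,\beta)\in\mc A_{N+1+k}$ whenever the corresponding statement without the $0$ holds (modulo the requirement $\alpha_0>Q$, which is automatic since $Q<0$);
\item $\alpha_0=0>-\frac{3}{2\beta}-\frac\beta2$ and, for $n=2$, $0>\frac{-2}{2\beta}-\frac\beta2$.
\end{itemize}

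\textbf{Step 2: identify the descendant with the SET.} Plug $\alpha=0$, $n=2$ into~\eqref{eq:def_desc}:
\[
\L^0_{-2}[\Phi]=\im Q\,\partial^2\Phi-(\partial\Phi)^2=\SET[\Phi].
\]
So $\ps{\L_{-2}^0V_0(x)\prod_k V_{\alpha_k}(x_k)}$ is exactly $\ps{\SET(x)\prod_k V_{\alpha_k}(x_k)}$ in the sense of Definition~\ref{def:der_cor}. Because $\alpha_0=0$, the function $H^{(0)}_{\bm x,\bm\alpha}$ coincides with $H_{\bm x,\bm\alpha}$, so the Wick-ordered definition is the one used for the SET insertion at a generic point.

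\textbf{Step 3: apply~\eqref{eq:thm_ward} with $n=2$.} This gives
\[
\sum_{k=1}^N\left(-\frac{\partial_{x_k}}{x_k-x}+\frac{\Delta_{\alpha_k}}{(x_k-x)^2}\right)\ps{\prod_{k=1}^N V_{\alpha_k}(x_k)}.
\]
Rewriting $-\frac{1}{x_k-x}=\frac{1}{x-x_k}$ and $(x_k-x)^2=(x-x_k)^2$ yields~\eqref{eq:formal_global_Ward}.

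The only point needing care is Step 1: the Seiberg-type set $\mc A$ is defined with $\alpha_k>Q$ and $\sum\alpha_k-2Q<\lambda$, and adding a zero charge could change $\lambda_\beta(\bm\alpha)$ in Conjecture~\ref{conj:Laplace2}, whose minimum now includes $2(0-Q)/\beta$. This is harmless, since the integrand $\Ex$ is literally identical with or without the zero insertion, so convergence of the zero-mode integral is inherited directly from the $N$-point case rather than from the conjecture applied to $N+1$ points.
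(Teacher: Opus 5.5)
Your proposal is correct and follows the paper's own route: the paper obtains the proposition directly from Theorem~\ref{thm:desc} by taking $n=2$ and placing a charge $\alpha_1=0$ at $x$, using $V_0\equiv 1$ and $\L^{0}_{-2}[\Phi]=\SET[\Phi]$, then relabeling and rewriting $-\frac{1}{x_k-x}=\frac{1}{x-x_k}$. Your extra remark is a sensible point of care that the paper leaves implicit: the zero charge leaves $H_{\bm x,\bm\alpha}$, $P_{\bm x,\bm\alpha}$, $s$ and all Laplace transforms literally unchanged, so the spurious value $2(0-Q)/\beta$ in Conjecture~\ref{conj:Laplace2} plays no role.
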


\begin{theorem}[BPZ equation]\label{thm:BPZ} Assume that $\alpha_1=-\frac\beta2$ or $\alpha_1=\frac2\beta$. Then, in the setting of Theorem~\ref{thm:desc} with $n=2$,
\begin{eqs}
\label{eq:formal_BPZ}
    \left( - \frac{1}{{\alpha_1}^2} \partial_{x_1}^2 + \sum_{k=1}^N \left(\frac{\partial_{x_k}}{x_1-x_k} + \frac{\Delta_{\alpha_k}}{(x_1-x_k)^2}\right)\right) \ps{ \prod_{k=1}^N V_{\alpha_k}(x_k) } = 0 \ .
\end{eqs}
\end{theorem}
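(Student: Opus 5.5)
The plan is to deduce the BPZ equation from Theorem~\ref{thm:desc} at level two, together with a \emph{degeneracy identity}. This identity expresses the level-two descendant $\L_{-2}V_{\alpha_1}$ of a degenerate field in terms of $\L_{-1}^2 V_{\alpha_1}$, which is the second derivative of $V_{\alpha_1}$ in $x_1$. At the level of functionals of the Liouville field, with $V_\alpha[\Phi]=e^{\im\alpha\Phi}$, one has $\partial_x V_\alpha=\im\alpha\,\partial\Phi\, V_\alpha$. Applying $\partial_x$ once more gives
\[
\partial_x^2V_\alpha=\bigl(\im\alpha\,\partial^2\Phi-\alpha^2(\partial\Phi)^2\bigr)V_\alpha .
\]
Comparing with $\L_{-2}^\alpha[\Phi]=\im(Q+\alpha)\partial^2\Phi-(\partial\Phi)^2$ from~\eqref{eq:def_desc}, we get $\L_{-2}^\alpha V_\alpha=\alpha^{-2}\partial_x^2V_\alpha$ exactly when $\im(Q+\alpha)=\im/\alpha$, that is, $\alpha^2+Q\alpha-1=0$. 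With $Q=\frac\beta2-\frac2\beta$ the roots are $\alpha=-\frac\beta2$ and $\alpha=\frac2\beta$, which are precisely the two cases in the statement. So the first step is to establish, inside correlation functions,
\[
\ps{\L^{\alpha_1}_{-2}V_{\alpha_1}(x_1)\prod_{k\ge2}V_{\alpha_k}(x_k)}=\frac1{\alpha_1^2}\,\partial_{x_1}^2\ps{\prod_{k=1}^NV_{\alpha_k}(x_k)}
\]
in the weak sense. Combining this with~\eqref{eq:thm_ward} for $n=2$ (note $\frac{1}{(x_k-x_1)}=-\frac1{x_1-x_k}$) yields~\eqref{eq:formal_BPZ}, where the sum effectively runs over $k\ge2$.

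To prove the degeneracy identity rigorously, I would work at the regularized level $(\eps,\rho,R)$. At fixed $\eps$ the vertex operator $V_{\alpha_1,\eps}(x_1)[\Phi]$ is smooth in $x_1$. Differentiating under the expectation and the $\bm c$-integral twice in $x_1$ produces exactly $\bigl(\im\alpha_1\partial^2\Phi_\eps-\alpha_1^2(\partial\Phi_\eps)^2\bigr)V_{\alpha_1,\eps}$, up to the $\eps^{-\alpha_1^2/2}$ renormalization. The Wick renormalization in~\eqref{eq:def_deri} absorbs the diagonal divergence $\partial_{z_1}\partial_{z_2}\ln|z_1-z_2|$. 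I would check, via Gaussian integration by parts (Lemma~\ref{gaussian_integration}) and Girsanov, that the counterterms match: the term $\alpha_1^2\E[(\partial\X_\eps)^2]$ combines with the derivative of the normalization. Passing to the weak formulation means pairing with a test function $\varphi(x_1)$ and moving both derivatives onto $\varphi$. Then it suffices to show convergence as $\eps\to0$, $\rho\to0$ and $R\to\infty$ of the undifferentiated correlation function, locally integrably in $x_1$. This is already provided by the hypotheses of Theorem~\ref{thm:desc} and by Conjecture~\ref{conj:Laplace1}, or by Proposition~\ref{defi:CG_correls} in the neutral case.

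The step I expect to be the main obstacle is controlling the $\rho$-regularization. When we differentiate in $x_1$, the excised ball $B(x_1,\rho)$ in $\C_\rho$ moves with $x_1$, so $\partial_{x_1}$ also hits the domain $\C_\rho$ and produces boundary integrals of the imaginary GMC over $\partial B(x_1,\rho)$. I would handle these as in the proof of Theorem~\ref{thm:desc}: they should be the same remainder terms that appear there, and they vanish in the limit by the fusion estimates (Lemma~\ref{lemma:fusion_general}). In particular, the condition $\alpha_1>\frac{n-4}{2\beta}-\frac\beta2=-\frac1\beta-\frac\beta2$ for $n=2$ ensures that $-\frac\beta2$ is admissible, and the case $\frac2\beta$ is admissible as well. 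If that route turns out to be cumbersome, the fallback is to test directly against $\varphi$, keep $x_1$ fixed and translate the other points instead. Then only the remainder terms already controlled in Theorem~\ref{thm:desc} appear, and the identity follows by algebra.
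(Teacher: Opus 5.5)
Your proposal is correct and follows the paper's own proof. The paper likewise notes that $-\frac{1}{\alpha^2}\L^{\alpha}_{(-1,-1)}+\L^{\alpha}_{-2}=\im\bigl(Q+\alpha-\alpha^{-1}\bigr)\partial^2$ vanishes exactly for $\alpha\in\{-\frac\beta2,\frac2\beta\}$, so the degeneracy identity holds at the regularized level, and it then passes to the limit using Theorem~\ref{thm:desc} with $n=2$ together with Proposition~\ref{prop:second_derivative} (your step $\partial_{x_1}^2\ps{\cdot}=\ps{\L^{\alpha_1}_{(-1,-1)}\cdot}$, which the paper states separately rather than rederiving inline); your observation that the sum effectively runs over $k\geq 2$ is also right.
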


\section{Proofs of the main statements}
In this section, and unless specified, we assume that $g=g_0$ and drop the superscript $g$. We also assume that Conjecture~\ref{conj:Laplace1} holds.

\subsection{Regularity of correlation functions}
\subsubsection{Behavior of the correlation functions at infinity}
To prepare for the derivation of Ward identities, we need estimates on the behavior of (non-regularized) correlation functions seen as functions of the insertion points. We start by proving a bound controlling the behavior of correlation functions at infinity.

\begin{lemma}\label{lemma:estimate_infinity}
Let $(x_k)_{1\leq k\leq N}\in\C^N$ be distinct and $(\alpha_1,\cdots,\alpha_N,\beta,\beta)\in\mc A_{N+2}$. Let $\zeta >0$. Then uniformly on $x,x^\prime \in \C$ such that $\zeta < \left(d_{g_0} (x,x^\prime) \wedge \min\limits_{k=1,\dots,N} d_{g_0}(x,x_j) \wedge \min\limits_{k=1,\dots,N} d_{g_0}(x^\prime,x_j)\right)$, there exists $C \geq 0$ such that,
for both regularized and non-regularized correlation functions,
\begin{equation}
\label{eq:asymp_estimates}
\left|\ps{ V_\beta(x) V_\beta(x^\prime) \prod_{k=1}^N V_{\alpha_k}(x_k) } \right| \leq C(1+|x|)^{-4} (1+|x^\prime|)^{-4}.
\end{equation}
\end{lemma}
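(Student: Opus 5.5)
We work directly from the explicit representation \eqref{eq:correl_reg2} (with $g=g_0$, $\omega=0$) of the $(N+2)$-point function. Adding two insertions of weight $\beta$ at $x$ and $x'$ gives
\[
\ps{ V_\beta(x) V_\beta(x^\prime) \prod_{k=1}^N V_{\alpha_k}(x_k) } = C(g_0)\, e^{P_{(x,x',\bm x),(\beta,\beta,\bm\alpha)}} \int_{\mc U} \Ex_{(x,x',\bm x),(\beta,\beta,\bm\alpha)}(\bm c)\,\dd\bm c .
\]
The plan is to separate the two sources of $x,x'$-dependence, namely the prefactor $e^{P}$ and the zero-mode integral, and to bound the first explicitly and the second uniformly.

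\emph{The prefactor.} Using $G(a,b)=-\ln|a-b|-\tfrac12(w_0(a)+w_0(b))-\theta_0$ with $w_0(z)=-\ln(1+|z|^2)$, we collect all factors containing $x$ in $e^{P}$. These are $e^{\Delta_\beta w_0(x)}=(1+|x|^2)^{-1}$, since $\Delta_\beta=1$, together with the cross terms $e^{-\beta\alpha_k G(x,x_k)}$ and $e^{-\beta^2 G(x,x')}$. Each cross term produces a factor $(1+|x|^2)^{-\beta\alpha_k/2}$ (respectively $(1+|x|^2)^{-\beta^2/2}$) times a chordal-distance factor, which is bounded above and below because of the separation assumption $\zeta$. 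Counting the total power of $(1+|x|^2)$ and using that the full sum of charges enters through $s$, the prefactor is bounded by $C(1+|x|)^{-2\cdot 2\Delta_\beta}=C(1+|x|)^{-4}$. The same holds for $x'$.

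The $s$-dependent pieces are cleaner to handle by the Möbius/chordal rewriting. Write $|a-b|=\tfrac12 d_{g_0}$-chordal$(a,b)\,(1+|a|^2)^{1/2}(1+|b|^2)^{1/2}$. Then $e^{P}$ equals a product of chordal distances raised to bounded powers, times $\prod_j (1+|y_j|^2)^{-2\Delta}$ for each insertion $y_j$. All the chordal distances among well-separated points are bounded, which yields the stated decay.

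\emph{The zero-mode integral.} Substituting $\bm c\mapsto$ the contour parametrisation as in \eqref{eq:decomp_contour}, the integrand is bounded by $e^{-s\beta\,\mathrm{Im}}\,|\mathbb{E}[e^{-\mu e^{\beta c}M_\beta(f)}]|$, where
\[
f=e^{\i\beta H}=\prod_k |y-x_k|^{\beta\alpha_k}\,|y-x|^{\beta^2}|y-x'|^{\beta^2}\, e^{2h}.
\]
We apply Conjecture~\ref{conj:Laplace1} with the compact set and the separated configuration set $K_\infty$. This is precisely why the conjecture allows one point to range over $\C$; here two points range over $\C$, and we treat this by the symmetric role of $\infty$ on the sphere, via a rotation of $\hat\C$ or by taking $K$ compact in the chordal metric. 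The perturbation $u$ absorbs the Green-function corrections, which are bounded uniformly on the separated set. This gives a bound $C\mu^{-\lambda}=Ce^{-\lambda\beta c}$ on the half-lines. Since $(\beta,\beta,\bm\alpha)\in\mc A_{N+2}$ means $-s\beta<\lambda\beta$, the half-line integrals converge uniformly in $x,x'$. The finite segment $[0,2\pi/\beta]$ is bounded uniformly by Lemma~\ref{lemma:def_laplace}, where the constants $u^2,v$ are bounded uniformly on the separated configurations.

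\emph{The main obstacle.} The main obstacle is the uniformity of the Laplace-transform bound as $x,x'\to\infty$. Near $\infty$, $f$ behaves like $|y|^{\beta(\sum\alpha_k+2\beta)}e^{2h}$, and one must check that $h$ stays bounded (with bounded $u$) uniformly as the insertions escape. I would first map $\infty$ to a finite point by $z\mapsto 1/z$, using the invariance of the round metric and GFF, so that every configuration lives in a compact chordal set with separation $\zeta$. Conjecture~\ref{conj:Laplace1} and Lemma~\ref{lemma:def_laplace} then apply with uniform constants.

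For the regularized correlations, the same argument applies verbatim. Restricting to $\C_\rho$ only improves the exponential-moment bounds, and it multiplies by a factor $e^{u}$ with $u$ bounded.
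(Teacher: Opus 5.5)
Your proposal is essentially the paper's proof. On $\zeta$-separated configurations, $G(x,x_k)$ and $G(x,x')$ are, up to constants, minus logarithms of chordal distances and hence bounded above and below, so the decay comes solely from the factor $e^{2w_0(x)}=(1+|x|^2)^{-2}$, exactly as in your chordal rewriting (your first count, which produces extra powers $(1+|x|^2)^{-\beta\alpha_k/2}$ and invokes $s$, is incorrect and should be dropped); the zero-mode integral is then bounded uniformly by Conjecture~\ref{conj:Laplace1} on the two vertical legs thanks to $s+\lambda>0$, and by uniform exponential moments on $[0,2\pi/\beta]$.

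The paper additionally treats charge neutrality by a direct moment bound and simply asserts the two-point extension you flag; that extension is easiest to handle by noting that $x$ and $x'$ cannot both approach $\infty$ since $d_{g_0}(x,x')>\zeta$, so no use of $z\mapsto 1/z$ is needed. Note also that restricting to $\C_\rho$ multiplies by an indicator, not by $e^{u}$ with $u$ bounded, so your justification of the regularized case does not literally fit the conjecture, though the paper is no more detailed there.
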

\begin{proof} 
For simplictity, we derive the result with a single vertex operator $V_\beta(x)$ in the correlation function since the proof is readily generalized in the case of two.
Set $\mathbf{V}\coloneqq \prod_{k=1}^N V_{\alpha_k}(x_k)$.
When charge neutrality holds, Eq.~\eqref{eq:correl_reg2} yields for $R,\eps >0$ (where $P_{\bm z,\bm\alpha,\eps}$ and $H_{\bm{z},\bm{\alpha},\eps}$ are expressed in terms of regularized Green functions):
\[
    \ps{ V_\beta(x) \mathbf{V} }_{R,\eps} = 2^{-\frac{\beta^2}{4}-\sum_{k=1}^N \frac{\alpha_k^2}{4}} e^{-\beta \sum_{k=1}^N \alpha_k G_\eps(x,x_k) + 2w_0(x)} e^{P_{\bm z,\bm\alpha,\eps}} \ga{\left(M_\beta(e^{-\beta^2 G_\eps(x,\cdot)} e^{i\beta H_{\bm{z},\bm{\alpha},\eps}})\right)^{s}}
\]
where $s=\frac{2Q - \beta - \sum_{k=1}^N \alpha_k}{\beta}\in\N$, and we have used that $\Delta_\beta=1$. This is actually independent of the cutoff $R$ on the zero mode, so we drop the subscript when charge neutrality holds. Since $\inf\limits_{x,y\in\C} G(x,y) - G_\eps(x,y)>c$ uniformly on $\eps$,  for some positive $A$
\[
    \sup\limits_{\eps>0} \norm{\ps{ V_\beta(x) \mathbf{V} }}_\eps \leq A e^{-\beta \sum_{k=1}^N \alpha_k G(x,x_k) + 2 w_0(x)} e^{P_{\bm z,\bm\alpha}} \ga{(M_\beta(e^{-\beta^2 G(x,\cdot)} e^{i\beta H_{\bm{z},\bm{\alpha}}}))^{s}}.
\]
Since  the Green function is bounded from below, i.e. $\|e^{-\beta^2 G(x,\cdot)}\|_\infty < B$ for some positive $B$,
\begin{align*}
    \sup\limits_{\eps>0} \norm{\ps{ V_\beta(x) \mathbf{V} }}_\eps &\leq A e^{-\beta \sum_{k=1}^N \alpha_k G(x,x_k) +2 w_0(x)} e^{P_{\bm z,\bm\alpha}} \\
    &\quad\times\int_{\C^{s}} \prod_{i=1}^{s} e^{-\beta^2 G(x,y_i)} e^{\i\beta H_{\bm{z},\bm{\alpha}}(y_i)} \prod_{i<j} e^{-\beta^2 G(y_i,y_j)} \dd v_g(y_i) \\
    &\leq AB e^{-\beta \sum_{j=1}^N \alpha_j G(x,x_j) - 2 w_0(x)} e^{P_{\bm z,\bm\alpha}} \ga{(M_\beta( e^{i\beta H_{\bm{z},\bm{\alpha}}}))^{s}} \ .
\end{align*}
We now bound the prefactors. The Green function is smooth outside the diagonal and explicit calculations show that $G(x,y) \underset{|x|\to \infty}{=} -\frac{1}{2} W_{g_0}(y) + \frac{1}{2}(\ln 2 - \frac{1}{2}) = -\frac{1}{2} w_0(y) + \ln 2 - \frac{1}{2}$. Hence for any $k=1,\dots,N$, $G(x,x_k)$ is uniformly bounded on $x\in \C$ such that $\min_{k=1,\dots,N} d_g(x,x_k) > \zeta$. Next, the obvious bound $e^{-2w_0(x)} = (1+|x|^2)^{-2} \leq B (1+|x|)^{-4}$ yields the desired result.

When charge neutrality does not hold, it is easily shown that the regularized correlation functions satisfy the bound, so we focus on the limiting case. We have:
\begin{align*}
    &\ps{ V_\beta(x) \mathbf{V} } = \im e^{-\beta \sum_{k=1}^N \alpha_k G(x,x_k) +2 w_0(x)} (1-e^{2\i \pi s}) e^{P_{\bm z,\bm\alpha}} \int_{0}^{+\infty} \mathbf{E}_{\bm z,\bm\alpha}(-\i\bm c) \dd \bm{c} \\
    &\quad +e^{-\beta \sum_{k=1}^N \alpha_j G(x,x_k) -2 w_0(x)} e^{P_{\bm z,\bm\alpha}} \int_{0}^{2\pi/\beta} \mathbf{E}_{\bm z,\bm\alpha}(\bm c) \dd \bm{c},\quad 
    \mathbf{E}_{\bm z,\bm\alpha}(\bm c) = e^{-\i s\beta  \bm{c}} \ga{ e^{-\mu e^{\i \beta \bm{c}}M_\beta(e^{-\beta^2 G(x,\cdot)} e^{\i \beta H_{\bm{z},\bm{\alpha}}})}} \ .
\end{align*}
We now use Conjecture~\ref{conj:Laplace1}:
\begin{align*}
    \norm{\ps{V_\beta(x)\mathbf{V}}} &\leq e^{-\beta \sum_{k=1}^N \alpha_j G(x,x_k)-2w(x)} e^{P_{\bm{z},\bm \alpha}} \left( 2\norm{\sin{\pi s}} \int_{0}^{+\infty} \sup_{x\in K_\zeta} \norm{ \mathbf{E}_{\bm z,\bm\alpha}(-\i \bm c)} \dd \bm{c} + D\right)
\end{align*}
where $K_\zeta = \{x\in \C, \min_{k=1,\dots,N} d_g(x,x_k) > \zeta\}$. $D>0$ is an upper bound on the integral over the finite horizontal part $[0,2\pi/\beta]$. Because of the global Seiberg bound, $s+\lambda>0$ so
\begin{align*}
\sup_{x\in K_\zeta} \norm{ \mathbf{E}_{\bm z,\bm\alpha}(-\i \bm c)} < e^{-(s+\lambda)\beta \bm{c}} \sup_{\bm{c}\geq 0} \sup_{x\in K_\zeta} e^{(s+\lambda) \beta \bm c} \norm{ \mathbf{E}_{\bm z,\bm\alpha}(-\i \bm c)}
\end{align*}
is an integrable function of $\bm c$, which concludes the proof.
\end{proof}

\subsubsection{Fusion estimates}
Fusion estimates describe the behavior of the correlation functions when two insertion points collide, which in the physics litterature is often referred to as Operator Product Expansion (OPE). They are fundamental analytic inputs in the derivation of the Ward identities.  

\begin{lemma}[Hölder estimate without reflection] \label{lemma:holder} Let $(x_k)_{3\leq k\leq N}\in\C^{N-2}$ be distinct, $\zeta>0$, $K_\zeta\subset\C$ be compact with $\min\limits_{3\leq k\leq N}d(x_k,K_\zeta)>\zeta$. Let $(\beta,\alpha_2,\cdots,\alpha_N)\in\mc A_N$. Then there exists $C\geq 0$ such that for any distinct $x_1, x_2\in K_\zeta$,
\begin{eqs}
    \norm{\ps{V_\beta(x_1) V_{\alpha_2}(x_2) \prod_{j=1}^N V_{\alpha_j}(x_j)}} \leq C |x_1-x_2|^{\beta\alpha_2}.
\end{eqs}
for both regularized and non-regularized correlation functions.
\end{lemma}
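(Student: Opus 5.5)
We follow the strategy of Lemma~\ref{lemma:estimate_infinity}: the singular factor $|x_1-x_2|^{\beta\alpha_2}$ is extracted explicitly from the Girsanov prefactor $e^{P_{\bm x,\bm\alpha}}$, and what remains is shown to be bounded uniformly in $x_1,x_2\in K_\zeta$. Write $\alpha_1=\beta$. In $P_{\bm x,\bm\alpha}$ defined in~\eqref{eq:def_HP}, the only term that is singular as $x_1\to x_2$ is
\[
-\beta\alpha_2 G(x_1,x_2)=\beta\alpha_2\ln|x_1-x_2|+O(1).
\]
Here the $O(1)$ is uniform on $K_\zeta$, since $W_{g_0}$ is smooth there. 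All other pairwise terms $G(x_i,x_k)$ with $k\ge3$ are bounded on $K_\zeta$ because $d(x_k,K_\zeta)>\zeta$. Hence $e^{P_{\bm x,\bm\alpha}}\le C|x_1-x_2|^{\beta\alpha_2}$, and it remains to bound the zero-mode integral of $\Ex_{\bm x,\bm\alpha}$ uniformly.

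In the charge-neutral case $s\in\N$ we use Proposition~\ref{defi:CG_correls}: the remaining quantity is $\E[(M_\beta(e^{\i\beta H_{\bm x,\bm\alpha}}))^s]$. Its absolute value is bounded by the Dotsenko--Fateev integral
\[
\int_{\C^s}\prod_{j}|y_j-x_1|^{\beta^2}|y_j-x_2|^{\beta\alpha_2}\prod_{k\ge3}|y_j-x_k|^{\beta\alpha_k}\prod_{j<l}|y_j-y_l|^{\beta^2}\,\dd v(y_j).
\]
The factor $|y-x_1|^{\beta^2}$ is bounded on compacts. As in the proof of Lemma~\ref{lemma:estimate_infinity}, we bound $|y-x_1|^{\beta^2}e^{-\beta^2 G(x_1,y)}$-type factors by a constant, using that $G$ is bounded below. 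This removes the dependence on $x_1$ entirely and leaves an integral that depends only on $x_2,x_3,\dots,x_N$. That integral is finite by the same local integrability computation as in Lemma~\ref{lemma:def_laplace}, using $\alpha_2>Q$ and the Seiberg bound for integrability at infinity. It is continuous in $x_2\in K_\zeta$, hence bounded there. The regularized versions are handled identically, using $G-G_\eps\ge c$.

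In the general case we use the decomposition~\eqref{eq:decomp_contour}. The compact horizontal piece $[0,2\pi/\beta]$ is bounded by Lemma~\ref{lemma:def_laplace}, once we check that the exponential-moment bound there is uniform in $x_1,x_2\in K_\zeta$. The vertical piece needs Conjecture~\ref{conj:Laplace1} applied with the function $f=e^{\i\beta H_{\bm x,\bm\alpha}}$ written as $|y-x_1|^{\beta^2}|y-x_2|^{\beta\alpha_2}\prod_{k\ge3}|y-x_k|^{\beta\alpha_k}e^{u}$ with $u$ bounded.

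This is the main obstacle. Conjecture~\ref{conj:Laplace1} is stated uniformly only for points at mutual distance $\ge\zeta$, whereas here $x_1$ and $x_2$ may collide. To get around this, we would absorb $|y-x_1|^{\beta^2}$ into the bounded perturbation: $e^{\beta^2\ln|y-x_1|}$ is not bounded below, but it is nonnegative and bounded above. So the plan is to apply the conjecture to the $(N-1)$-point configuration $(x_2,\dots,x_N)$, which is uniformly separated, with $f$ multiplied by $e^{-\beta^2G(x_1,\cdot)}$. We would need a mild extension of the conjecture allowing multiplication by a weight $0\le w\le B$, or alternatively a truncation argument near $x_1$ that uses the positivity $\beta^2>0$, under which the charge at $x_1$ only helps convergence. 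We would then obtain decay $\mu^{-\lambda}$ with $\lambda$ computed from $\alpha_2,\dots,\alpha_N$. Integrability of $e^{-(s+\lambda)\beta\bm c}$ then follows from $(\beta,\alpha_2,\dots,\alpha_N)\in\mc A_N$, which gives $s+\lambda>0$, exactly as in Lemma~\ref{lemma:estimate_infinity}.
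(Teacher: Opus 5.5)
Your argument follows the paper's proof essentially step by step. You extract $|x_1-x_2|^{\beta\alpha_2}$ from $e^{P_{\bm x,\bm\alpha}}$. In the neutral case you bound $e^{-\beta^2G(x_1,\cdot)}$ by a constant (since $G$ is bounded below), which leaves a Coulomb-gas integral depending continuously on $x_2\in K_\zeta$. Otherwise you invoke Conjecture~\ref{conj:Laplace1} on the vertical part of the contour.

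The obstacle you isolate is genuine, and the paper does \emph{not} address it. The paper simply declares the function $e^{-\beta^2G(x_1,\cdot)-\beta\alpha_2G(x_2,\cdot)}e^{\i\beta H_{\bm x,\bm\alpha}}$ to be ``of the required form'' and takes a supremum over $x_2\in K_\zeta$. But Conjecture~\ref{conj:Laplace1} only asserts uniformity over configurations with all mutual distances $\geq\zeta$, and over perturbations $e^{u}$ with $\|u\|_\infty\leq A$. The function $\beta^2\ln|\cdot-x_1|$ is unbounded below, so neither clause accommodates $x_1\to x_2$. The uniform extension you ask for (a weight $0\leq w\leq B$, uniformly in $x_1$) is therefore exactly the unstated assumption the paper's argument rests on. Your proof is complete to precisely the same extent as the paper's.

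Two refinements to your workaround.

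First, the truncation alternative is not an easy way out. For imaginary chaos there is no monotonicity of $\E[e^{-\mu M_\beta(f)}]$ in the weight $f$. Removing a neighbourhood of $x_1$ changes the chaos mass by a small quantity that then enters the exponent multiplied by $\mu\to\infty$. Smallness of the weight near $x_1$ therefore yields no error bound uniform in $\mu$. Positivity of $\beta^2$ helps integrability of the moments, not the decay of the Laplace transform.

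Second, suppose you apply the (extended) conjecture to the $(N-1)$-point configuration $(x_2,\dots,x_N)$. You then get decay with exponent $\lambda_\beta(\alpha_2,\dots,\alpha_N)$. The hypothesis $(\beta,\alpha_2,\dots,\alpha_N)\in\mc A_N$, however, controls $s+\lambda_\beta(\beta,\alpha_2,\dots,\alpha_N)$. The two exponents coincide under Conjecture~\ref{conj:Laplace2}, since $2(\beta-Q)/\beta=1+4/\beta^2>4/\beta^2$. They need not coincide under Conjecture~\ref{conj:Laplace1} alone. So either formulate the extension with the exponent of the full $N$-point charge vector, or invoke Conjecture~\ref{conj:Laplace2} explicitly.
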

\begin{proof}
The proof is similar to the one of Lemma~\ref{lemma:estimate_infinity}. Regarding the prefactor, $x\mapsto G(x_k,x)$ (for $k\geq3$) and $w_0$ are uniformly bounded on the compact space $K_\zeta$. Moreover $W_{g_0}$ is uniformly bounded on $K_\zeta$. Hence for $x_1\neq x_2$ in $K_\zeta$, and some positive $C$,
\begin{align*}
    \norm{ e^{- \beta \alpha_2 G(x_1,x_2)+2w_0(x_1)+2\Delta_{\alpha_2} w_0(x_2)} e^{P_{\bm z,\bm\alpha}} } \leq C |x_1-x_2|^{\beta \alpha_2}.
\end{align*}

It remains to bound the integral over the zero mode. In case of charge neutrality, one can argue that $e^{-\beta^2 G(x_1,x)-\beta \alpha_2 G(x_2,x)}$ is dominated by $e^{-\beta \alpha_2 G(x_2,x)}$, which is itself an integrable function since $\beta \alpha_2 > -2$. Hence as a function of $x_2$, $e^{-\beta \alpha_2 G(x_2,x)}$ is bounded by its $L^1$ norm over $K_\zeta$. The integral formula for moments then gives the required bound.
Without charge neutrality, we rely on Conjecture~\ref{conj:Laplace1}. The function inside the chaos, $x\mapsto e^{-\beta^2 G(x_1,x)-\beta \alpha_2 G(x_2,x)} e^{iH_{\bm x,\bm \alpha}(x)}$, is indeed of the required form, so we can bound the integral by
\begin{align*}
    2\norm{\sin{\frac{\pi}{\beta}\left(\sum_{k=1}^N \alpha_k-2Q\right)}} \int_{0}^{+\infty} \sup_{x_2\in K_\zeta} \norm{ \mathbf{E}_{\bm z,\bm\alpha}(-\i \bm c)} \dd \bm{c} \ .
\end{align*}
\end{proof}
We can further expand this asymptotic based on exponential moments of the GMC measures:
\begin{lemma}[Uniform convergence of exponential moments]
\label{lemma:uniform_convergence_exp_moments}
Let $(f_z)_{z\in U}$, $U\subset\C$ be a family of measurable functions. Further assume that for $A = \D$ and $A=\D^c$
\begin{eqs}
\label{eq:uniform_continuity_exp_moments}
    &\sup\limits_{z\in U} \int_{A^2} |f_z(x)| |f_z(y)| |x-y|^{-\beta^2} \dd v_g(x) \dd v_g(y) < \infty\qt{and}\\
    &\sup\limits_{z\in U} \int_A |f_z(x)| \mathrm{dist}(x,\partial \D)^{-\frac{\beta^2}{2}} \dd v_g(x) < \infty.\qt{Then}\sup\limits_{z\in U} \ga{e^{|M_\beta^g(f_z)|}} < \infty \ .
\end{eqs}
\end{lemma}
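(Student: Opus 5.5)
The plan is to reuse the quantitative bound established in the proof of Lemma~\ref{lemma:def_laplace}, which comes from~\cite[Prop.~5.3 and 6.4]{guillarmouCompactifiedImaginaryLiouville2025}:
\begin{align*}
    \ga{e^{\norm{M^g_\beta(f)}}} \leq e^{C_1 v(A) + C_2v(A^c)} \left(1+C_1u(A)e^{C_1u^2(A)}\right) \left(1+C_2 u(A^c) e^{C_2u^2(A^c) }\right),
\end{align*}
with $A=\D$. The essential point is that the constants $C_1,C_2$ depend only on $\beta$ and on the geometry of the splitting $\D$, $\D^c$, and not on the function $f$. I will first re-read the argument of~\cite{guillarmouCompactifiedImaginaryLiouville2025} to confirm this independence. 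That argument is a decoupling of the field into independent pieces on $\D$ and $\D^c$, followed by moment bounds for imaginary GMC via $\ga{|M_\beta(f\mathds 1_A)|^{2p}}$ and a Wick/Onsager-type estimate. In that estimate $f$ enters only through $u^2(A)$ and $v(A)$.

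Next I will check that the bound holds for an arbitrary measurable $f$, rather than only for functions of the product form used in Lemma~\ref{lemma:def_laplace}, as long as $u^2(A)$ and $v(A)$ are finite. The product form was used there only to verify finiteness of these quantities, which are exactly the ones appearing in~\eqref{eq:uniform_continuity_exp_moments}. I also need $M_\beta^g(f_z)$ to be well defined. For that I will define it as the $L^2(\mathbb P)$ limit of $M_\beta^g(f_z\mathds 1_{|f_z|\leq n})$. This limit exists because the second moment equals $\int |f|(x)|f|(y)e^{\beta^2 G(x,y)}\dd v_g\dd v_g$, which is controlled by $u^2(\D)+u^2(\D^c)$ plus a cross term. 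The cross term is handled by the same estimate, since $G$ is bounded above away from the diagonal.

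Once this is in place the conclusion is immediate. The right-hand side of the bound is a nondecreasing continuous function $\Psi(u(\D),u(\D^c),v(\D),v(\D^c))$. The hypotheses~\eqref{eq:uniform_continuity_exp_moments} make all four arguments bounded uniformly in $z\in U$, so $\sup_{z\in U}\ga{e^{|M^g_\beta(f_z)|}}\leq \Psi(\sup u,\dots)<\infty$. There is one small subtlety: the hypothesis is written with $|x-y|^{-\beta^2}$ and $\mathrm{dist}(x,\partial\D)$, while the cited bound uses $l_g(x,y)$ and $\mathrm{dist}(x,\partial A)$. On $\D$ the metrics $g_0$ and $|dz|^2$ are comparable. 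On $\D^c$ I will pass to the chart $z\mapsto 1/z$ and absorb the conformal factor into $f_z$, which changes $u,v$ only by bounded multiplicative constants.

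I expect the main obstacle to be justifying carefully that the constants in~\cite{guillarmouCompactifiedImaginaryLiouville2025} are uniform in $f$ and that the argument does not use continuity or the specific singular structure of $f$. If it turns out that some step does use such structure, the fallback is to rerun the moment expansion directly. Expanding $e^{|M|}\leq \sum_p |M|^{2p}/(2p)!\cdot$const and bounding $\ga{|M_\beta(f\mathds 1_A)|^{2p}}$ by an Onsager inequality produces factors of the form $C^p p!\,(u^2(A))^{p/2}$ plus boundary contributions from $v(A)$. These are again expressed solely through $u$ and $v$, so the same uniform bound follows.
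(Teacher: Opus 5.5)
Your argument is the paper's. The paper proves this lemma by the one-line observation that, in the bound quoted in the proof of Lemma~\ref{lemma:def_laplace}, the constants $C_1,C_2$ do not depend on $f$, and $f$ enters only monotonically through $u(A)$ and $v(A)$; uniform control of these quantities then gives the uniform exponential-moment bound. Two of your auxiliary claims, however, need correcting.

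First, the reconciliation of $|x-y|$ with $l_g(x,y)$ on $\D^c$ is not a comparison up to bounded constants. For $g=g_0$ one has $l_{g_0}(x,y)\asymp |x-y|\big/\sqrt{(1+|x|^2)(1+|y|^2)}$, so on $\D^c$ the two kernels differ by the unbounded factor $(|x||y|)^{\beta^2}$.

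\begin{itemize}
\item In the coordinate $w=1/z$ this factor reappears as $(|w||w'|)^{\beta^2}$. Absorbing it into $f_z$ would change the function whose chaos integral you are bounding.
\item The Euclidean hypothesis is genuinely weaker near $\infty$. Take $f=|x|^{-a}$ on $\{|x|>2\}$ and $f=0$ elsewhere, with $-2-\frac{\beta^2}{2}<a\leq -2+\frac{\beta^2}{2}$. It satisfies both Euclidean conditions, but $u^2(\D^c)=\infty$ in the $l_{g_0}$ sense.
\end{itemize}

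So the hypotheses on $\D^c$ must be read with $l_g$ (equivalently, with Euclidean distance in the chart $1/z$), as in the proof of Lemma~\ref{lemma:def_laplace}. This is how the paper implicitly uses the statement, and with that reading your proof goes through.

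Second, consider your $L^2$ construction of $M^g_\beta(f_z)$. The cross term over $\D\times\D^c$ is not controlled by the fact that $G$ is bounded above away from the diagonal, because points $x\in\D$ and $y\in\D^c$ near the unit circle can be arbitrarily close. Instead, note that any path from $x$ to $y$ crosses $\partial\D$, so the distance from $x$ to $y$ is at least $\max\bigl(\mathrm{dist}(x,\partial\D),\mathrm{dist}(y,\partial\D)\bigr)$. This bounds the cross term by $v(\D)\,v(\D^c)$, which is exactly what the $v$ quantities are for.
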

\begin{proof}
    This is a mere consequence of the proof of Lemma~\ref{lemma:def_laplace}.
\end{proof}

\begin{lemma}[Second order OPE]
\label{lemma:fusion_general}
Let $(\alpha_k)_{1\leq k\leq N}\in\mc A_N$ with $m-2< \beta(\alpha_1+\alpha_2)<m-1$ for some $m\in\N$. Further assume that $(\alpha_1+\alpha_2,\alpha_3,\cdots,\alpha_N,\underbrace{\beta,\cdots,\beta}_{k\text{ terms}})\in\mc A_{N-1+k}$ for $0\leq k\leq m$ and that $(\alpha_1+\alpha_2+\beta,\alpha_3,\cdots,\alpha_N )\in\mc A_{N-1}$. Then, for any $(x_k)_{3\leq k\leq N}\in\C^{N-2}$ distinct, there exist constants $C_{i,j}$, $1\leq i+j\leq m$, such that for $x\in\C\setminus\{0,x_3,\cdots,x_N\}$,
\begin{eqs}\label{eq:OPE_2}
    \norm{x}^{-\alpha_1 \alpha_2}&\ps{ V_{\alpha_1}(0) V_{\alpha_2}(x) \mathbf{V} } = \ps{ V_{\alpha_1+\alpha_2}(0) \mathbf{V} } +\sum_{1\leq i+j\leq m}C_{i,j}x^i\bar x^j\\
    &+ |x|^{\beta(\alpha_1+\alpha_2)+2} \rho(\beta\alpha_1,\beta\alpha_2) \ps{V_{\alpha_1+\alpha_2+\beta}(0) \mathbf{V}} + \underset{x\to0}{o}(|x|^{\beta(\alpha_1+\alpha_2)+2})
\end{eqs}
where $\mathbf{V} = \prod_{k=3}^N V_{\alpha_k}(x_k)$ and, with $\gamma(x) = \Gamma(x)/\Gamma(1-x)$,
\begin{eqs}
    &\rho(a,b) =\frac{\mu\pi}{\gamma\left(-\tfrac{a}{2}\right) \gamma\left(-\tfrac{b}{2}\right) \gamma\left(2-\tfrac{a+b}{2}\right) }\cdot
\end{eqs}
\end{lemma}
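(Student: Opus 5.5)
The plan is to reduce everything to a comparison of the Laplace transforms of two imaginary GMC integrals. The first step is to apply Girsanov's theorem in the representation \eqref{eq:decomp_contour} of both sides. The left-hand side becomes the explicit prefactor $e^{P_{\bm z,\bm\alpha}}$ times the zero-mode integral of $e^{-\i s\beta\bm c}\,\ga{\exp(-\mu e^{\i\beta\bm c}M_\beta(f_x))}$. Up to factors that are smooth in $x$ and bounded near $0$, the chaos density is
\[
f_x(y)=|y|^{\beta\alpha_1}|y-x|^{\beta\alpha_2}F(y),
\]
where $F$ collects the contributions of $\mathbf V$. In the prefactor, $e^{-\alpha_1\alpha_2 G(0,x)}$ produces $|x|^{\alpha_1\alpha_2}$ times a function smooth in $(x,\bar x)$. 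The remaining part of $e^{P}$ converges smoothly to the prefactor of $\ps{V_{\alpha_1+\alpha_2}(0)\mathbf V}$. Its Taylor expansion is absorbed into the coefficients $C_{i,j}$. It therefore suffices to expand
\[
\Ex(\bm c;x)-\Ex(\bm c;0), \qquad f_0(y)=|y|^{\beta(\alpha_1+\alpha_2)}F(y),
\]
as $x\to0$, uniformly enough in $\bm c\in\mc U$ to integrate along the contour.

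Next I fix a small constant $\kappa$ and split $\C=B(0,\kappa^{-1}|x|)\cup B(0,\kappa^{-1}|x|)^c$. On the outer region I Taylor expand $|y-x|^{\beta\alpha_2}$ in $(x,\bar x)$. The monomial $x^i\bar x^j$ comes with $|y|^{\beta(\alpha_1+\alpha_2)}y^{-i}\bar y^{-j}$, which is integrable at $0$ exactly when $i+j<\beta(\alpha_1+\alpha_2)+2$, that is, for $i+j\le m$ by the assumption on $m$. I expand $e^{-\mu e^{\i\beta\bm c}M_\beta(f_x)}$ to the orders needed and rewrite, via Girsanov, each chaos factor $M_\beta(\cdot)$ as an insertion of $V_\beta(y)$. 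This produces polynomial terms $C_{i,j}x^i\bar x^j$ whose coefficients are integrated correlation functions with up to $m$ extra insertions $V_\beta$. These are finite by the assumption $(\alpha_1+\alpha_2,\alpha_3,\dots,\beta,\dots,\beta)\in\mc A_{N-1+k}$, together with Conjecture~\ref{conj:Laplace1} and the bounds of Lemmas~\ref{lemma:estimate_infinity} and~\ref{lemma:holder}. The cut-off dependence on $\kappa^{-1}|x|$ of these coefficients is, after rescaling, a homogeneous piece of order $|x|^{\beta(\alpha_1+\alpha_2)+2}$. It must be recombined with the inner region.

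On the inner region I keep the first-order term in the chaos, namely
\[
-\mu e^{\i\beta\bm c}\,\ga{M_\beta(B,f_x-f_0)\,e^{-\mu e^{\i\beta\bm c}M_\beta(f_0)}}.
\]
By Girsanov this equals $\int_B (f_x-f_0)(y)$ times a correlation with an inserted $V_\beta(y)$. Lemma~\ref{lemma:holder} and continuity let me replace this correlation by the one at $y=0$, i.e.\ by $\ps{V_{\alpha_1+\alpha_2+\beta}(0)\mathbf V}$. The assumption $(\alpha_1+\alpha_2+\beta,\dots)\in\mc A_{N-1}$ ensures this quantity is finite. Rescaling $y=|x|u$ turns the integral into $|x|^{\beta(\alpha_1+\alpha_2)+2}$ times the renormalised integral
\[
\int_\C\Big(|u|^{\beta\alpha_1}|u-\hat x|^{\beta\alpha_2}-\sum_{i+j\le m}(\text{Taylor terms})\Big)\,\dd^2u .
\]
Taken together with the outer cut-off contributions, this is the analytic continuation of the complex Selberg integral
\[
\int_\C |u|^{a}|u-1|^{b}\,\dd^2u=\pi\,\gamma\!\left(1+\tfrac a2\right)\gamma\!\left(1+\tfrac b2\right)\gamma\!\left(-1-\tfrac{a+b}2\right).
\]
Using $\gamma(x)\gamma(1-x)=1$, this matches $\rho(\beta\alpha_1,\beta\alpha_2)/\mu$, and the prefactor $-\mu$ absorbs the sign.

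The main obstacle is the error terms: higher-order chaos terms on the small ball, and Taylor remainders. Unlike real GMC, imaginary GMC has no positivity or monotonicity, so I cannot use comparison arguments. I would instead bound these terms through second moments,
\[
\ga{|M_\beta(B_r,g)|^2}\lesssim\int_{B_r^2}|g(y)||g(y')||y-y'|^{-\beta^2},
\]
which on $B(0,\kappa^{-1}|x|)$ scales like $|x|^{2\beta(\alpha_1+\alpha_2)+4-\beta^2}=o(|x|^{\beta(\alpha_1+\alpha_2)+2})$, since $\beta(\alpha_1+\alpha_2)+2>\beta^2$ under the Seiberg bounds. I would combine this with the uniform exponential moments of Lemma~\ref{lemma:uniform_convergence_exp_moments} and Cauchy--Schwarz. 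Finally, the uniformity in $\bm c$ along the vertical rays of $\mc U$ has to come from Conjecture~\ref{conj:Laplace1} applied with the perturbations $e^{u}$, $\|u\|_\infty\le A$, which allows dominated convergence in the zero-mode integral.
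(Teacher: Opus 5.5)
Your route is the paper's: Girsanov, Taylor expansion of $|y-x|^{\beta\alpha_2}$ in the chaos density, and polynomial terms whose coefficients carry up to $m$ extra $V_\beta$ insertions. The $|x|^{\beta(\alpha_1+\alpha_2)+2}$ term comes from the part linear in the chaos, rescaled and identified with the continued Selberg integral. The difference is bookkeeping. The paper uses no spatial cutoff; it writes $I(x)=I_1(x)+I_2(x)$ with $I_1$ built from the global Taylor polynomial $P_{m-1}(x/y)$. Then $I_1$ keeps exponential moments and $x\mapsto\E[e^{-\mu I_1(x)}]$ is smooth. Both the degree-$m$ monomials and $\rho_m$ (an integral over all of $\C$) are read off from the part linear in $I_2$, with no cutoff terms to recombine. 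Your split has one advantage the paper does not exploit: for $|y|>|x|/\kappa$ one has $f_x=f_0e^{u_x}$ with $\|u_x\|_\infty=O(\kappa)$. That is exactly the bounded perturbation allowed in Conjecture~\ref{conj:Laplace1}.

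The step that fails is your control of the nonlinear remainder on $B(0,|x|/\kappa)$. Write $a=\beta(\alpha_1+\alpha_2)$. Your second-moment bound gives $O(|x|^{2a+4-\beta^2})$, which is $o(|x|^{a+2})$ only if $a+2>\beta^2$. The Seiberg bound $\alpha_1+\alpha_2>Q$ gives only $a+2>\beta^2/2$. For $m\geq 2$ the inequality holds automatically, since $a+2>m\geq 2>\beta^2$. For $m\in\{0,1\}$ it fails on part of the admissible range, e.g.\ $m=0$ with $\beta^2\geq 1$ (where $a+2<1\leq\beta^2$), or $m=1$ with $a\leq\beta^2-2$.

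The loss comes from using absolute moments, whose kernel is $|y-y'|^{-\beta^2}$. The exact Girsanov expectation of the quadratic term has kernel $|y-y'|^{+\beta^2}$ and is $O(|x|^{2(a+2)+\beta^2})$. Exploiting this is not free, for two reasons.
\begin{itemize}
\item Every further order carries another factor $\mu e^{\i\beta\bm c}$, i.e.\ another $V_\beta$ insertion. For $m\leq 1$ the hypotheses guarantee zero-mode integrability with at most one extra $\beta$ charge.
\item Bounds through absolute values (Cauchy--Schwarz plus exponential moments) grow with the Laplace parameter $\mu e^{\beta t}$ on the vertical rays of $\mc U$. So they cannot be integrated there against $e^{-s\beta t}$. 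Meanwhile Conjecture~\ref{conj:Laplace1} says nothing about the unbounded inner perturbation $|1-x/y|^{\beta\alpha_2}$, nor about the non-separated insertions $0,x$.
\end{itemize}
The remainder therefore needs an argument compatible with the zero-mode integral, which your sketch does not provide. The paper's proof does not treat the corresponding term $\E[e^{-\mu I_1}(e^{-\mu I_2}-1+\mu I_2)]$ at all. So you have located the genuinely delicate point, but you have not closed it.

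Finally, your Selberg value $\pi\gamma(1+\frac{\beta\alpha_1}{2})\gamma(1+\frac{\beta\alpha_2}{2})\gamma(-1-\frac{a}{2})$ is correct. By $\gamma(x)\gamma(1-x)=1$ it equals $\pi/\bigl(\gamma(-\frac{\beta\alpha_1}{2})\gamma(-\frac{\beta\alpha_2}{2})\gamma(2+\frac{a}{2})\bigr)$. The coefficient of $|x|^{a+2}\ps{V_{\alpha_1+\alpha_2+\beta}(0)\mathbf V}$ is $-\mu$ times this, so no sign gets absorbed. This is not the displayed $\rho$, whose last factor reads $\gamma(2-\frac{a+b}{2})$; that looks like a typo, and the paper's proof repeats it. You should flag the mismatch rather than assert agreement.
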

\begin{proof}
Let $f$ be continuous near $0$, and such that the family of functions
\[
    f_x = e^{-\beta \alpha_1 G(\cdot,0) - \beta \alpha_2 G(\cdot,x)} f(\cdot)
\]
satifies the condition~\eqref{eq:uniform_continuity_exp_moments} for $x\in B(0,\eta)$ for some $\eta>0$.  For such a $x$, we set
\[
    I(x) = \int_{\C} f_x(y) M_\beta^g(\dd^2 y) = e^{\frac{\beta\alpha_2}2\left(w_0(x)-w_0(0)\right)}\int_{\C} \norm{1-\frac xy}^{\beta\alpha_2}f_0(y) M_\beta(\dd^2 y).
\]
For any $n\geq0$, let us set $P_{n}(\xi)\coloneqq \sum_{0\leq i+j\leq n}p_{i,j}\xi^i\bar\xi^j$ so that $\norm{1-\xi}^{\beta\alpha_2}=P_n(\xi)+o(\norm{\xi}^{n})$ as $\xi\to0$. Since $\beta(\alpha_1+\alpha_2)>m-2>\beta Q+m-1$, $\int_{\C} P_{m-1}\left(\frac xy\right)f_0(y) M_\beta(\dd^2 y)$ is well-defined and has exponential moments by Lemma~\ref{lemma:uniform_convergence_exp_moments}. Hence we can write for any $\mu\in\C$
\begin{align*}
    &\expect{e^{-\mu I(x)}}=\expect{e^{-\mu I_1(x)}}-\mu \expect{I_2(x)e^{-\mu I_1(x)}}+\expect{e^{-\mu I_1(x)}\left(e^{-\mu I_2(x)}-1+\mu I_2(x)\right)},\qt{with}\\
    &I_1(x)\coloneqq e^{\frac{\beta\alpha_2}2\left(w_0(x)-w_0(0)\right)}\int_{\C} P_{m-1}\left(\frac xy\right)f_0(y) M_\beta(\dd^2 y)\qt{and} I_2(x)\coloneqq I(x)-I_1(x).
\end{align*}
Now since $I_1(x)$ has exponential moments, $x\mapsto \expect{e^{-\mu I_1(x)}}$ is smooth at $x=0$, with $\partial_x^p\partial_{\bar x}^q\expect{e^{-\mu I_1(x)}}$ a linear combination of $\expect{\left(\mu I_1(x)\right)^re^{-\mu I_1(x)}}$ for $r\leq p+q$.
As a consequence, as soon as $(\alpha_1+\alpha_2,\alpha_3,\cdots,\alpha_N,\beta,\cdots,\beta)\in\mc A_{N-1+m}$:
\[
    \int_\mc{U} e^{\im s\bm c}\expect{\expo{-\mu e^{\im \beta\bm c}I_1(x)}}(\gamma_0)_\ast(\dd \bm c) =\sum_{0\leq i+j\leq m}C_{i,j}x^i\bar x^j+\mc O\left(\norm{x}^{m+1}\right).
\]
Let us now turn to the second term in the expansion. We can first write
\begin{align*}
    &\expect{I_2(x)e^{-\mu I_1(x)}}=e^{\frac{\beta\alpha_2}2\left(w_0(x)-w_0(0)\right)}\int_{\C} \left(\norm{1-\frac xy}^{\beta\alpha_2}-P_{m-1}\left(\frac xy\right)\right)f_0(y) \expect{V_\beta(y) e^{-\mu I_1(x)}}\norm{\dd y}^2\\
    &=e^{\frac{\beta\alpha_2}2\left(w_0(x)-w_0(0)\right)}\int_{\C} \sum_{i+j=m}c_{i,j}\left(\frac xy\right)^i\left(\frac{\bar x}{\bar y}\right)^jf_0(y) \expect{V_\beta(y) e^{-\mu I_1(x)}}\norm{\dd y}^2\\
    &+\norm{x}^{2+\beta(\alpha_1+\alpha_2)}e^{\frac{\beta\alpha_2}2\left(w_0(x)-w_0(0)\right)}\int_{\C} \left(\norm{1-\frac 1y}^{\beta\alpha_2}-P_{m}\left(\frac 1y\right)\right)\frac{f_0(xy)}{\norm{x}^{\beta(\alpha_1+\alpha_2)}} \expect{V_\beta(xy) e^{-\mu I_1(x)}}\norm{\dd y}^2.
\end{align*}
The first integral is absolutely convergent since $i+j=m<2+\beta(\alpha_1+\alpha_2)$. Changing $\mu\to\mu e^{-\beta\bm c}$ and integrating $\bm c$ between $0$ and $+\infty$ (the integral converges as soon as $(\alpha_1+\alpha_2,\alpha_3,\cdots,\alpha_N,\beta)\in\mc A_N$) yields
\begin{align*}
    &\int_0^{+\infty}e^{-\beta s \bm c}\expect{I_2(x)e^{-\mu e^{-\beta\bm c}I_1(x)}}\dd\bm c\\
    &=\sum_{i+j=m}C'_{i,j}x^i\bar x^j+\mc O\left(\norm{x}^{m+1}\right)+\norm{x}^{2+\beta(\alpha_1+\alpha_2)}(1+o(1))\rho_m(\beta\alpha_1,\beta\alpha_2)\tilde f_0(0) \expect{V_\beta(0) e^{-\mu I_1(0)}}
\end{align*}
where $\tilde f_0(0)=e^{\beta (\alpha_1+\alpha_2) W_g(0)} f(0)$, and $\rho_m(a,b) = \int_\C |v|^{a+b} \left(\norm{1-\frac1v}^{b}-P_m\left(\frac1v\right)\right) \norm{\dd v}^2$. Of course the same reasoning is valid when $\bm c$ is integrated in $(0,\frac{2\pi}\beta)$.
Recollecting terms, we see that if we choose $f(y)=e^{-\beta\sum_{k=3}^N\alpha_kG(\cdot,z_k)}$ we obtain the expansion in Equation~\eqref{eq:OPE_2} provided that we identify the terms $\norm{z}^{\eta}$ in the expansion for $\eta=0$ and $\eta=2+\beta(\alpha_1+\alpha_2)$.

For the constant order term in the expansion, it is given by $\lim\limits_{x\to0}\norm{x}^{-\alpha_1\alpha_2}\ps{V_{\alpha_1}(0)V_{\alpha_2}(x)\V}$. Writing $e^{-\alpha_1\alpha_2G(x,0)}=\norm{x}^{\alpha_1\alpha_2}e^{\frac{\alpha_1\alpha_2}{2}\left(W_g(x)+W_g(0)\right)}$ (see Equation~\eqref{eq:green}), and recalling the definition of $P_{\bm z,\bm\alpha}$ from Equation~\eqref{eq:def_HP}, we obtain that $C_{0,0}=\ps{V_{\alpha_1+\alpha_2}(0)\V}$ as expected.
Let us now turn to the term corresponding to $\norm{x}^{2+\beta(\alpha_1+\alpha_2)}$. By the same reasoning the term $\tilde f_0(0) \expect{V_\beta(0) e^{-\mu I_1(0)}}$ will give rise to $\ps{V_{\alpha_1+\alpha_2+\beta}(0)\V}$, while the prefactor $\rho_m(a,b)$ can be evaluated as follows: for $m-2<a+b<m-1$
\begin{eqs}
    \rho_m(a,b) =\frac{\pi}{\gamma\left(-\tfrac{a}{2}\right) \gamma\left(-\tfrac{b}{2}\right) \gamma\left(2-\tfrac{a+b}{2}\right) } \ \cdot
\end{eqs}
To see why note the the following function is meromorphic in $-2-\Re(b)<\Re(\cdot)<m-1-\Re(b)$
\[
    F(a)\coloneqq \int_\C |v|^{a+b} \left[\norm{1-\frac1v}^{b}-P_m\left(\frac1v\right)\mathds 1_{\norm{v}>1}\right] \norm{\dd v}^2 -2\pi\sum_{0\leq i\leq \lfloor m/2\rfloor}p_{i,i}\frac1{2+a+b-2i}
\]
with poles at $-b-2,-b-4,\cdots,-b-2m$. Moreover, for $-2-\Re(b)<\Re(a)<-1-\Re(b)$, by explicit computations (the second equality follows from~\cite[p. 504]{forresterImportanceSelbergIntegral2008})
\[
    F(a)= \int_\C |v|^{a}\norm{v-1}^{b} \norm{\dd v}^2=\frac{\pi}{\gamma\left(-\tfrac{a}{2}\right) \gamma\left(-\tfrac{b}{2}\right) \gamma\left(2-\tfrac{a+b}{2}\right) }
\]
while for $m-2-\Re(b)<\Re(\cdot)<m-1-\Re(b)$ it coincides with $\rho_m(a,b)$.
\end{proof}

\subsubsection{Boundary terms} \label{section:boundary_terms} As mentioned in the introduction, it is crucial to control the potential divergences arising when the regularization of the domain $\C_\rho$ is lifted. In particular, in the derivation of Ward identities, integrals on the boundary components of the domain $\C_\rho$ occur. The fusion estimate derived in Lemma~\ref{lemma:fusion_general} allows to control such boundary terms.

For instance, defining $\L_{-1}V_{\alpha_1}(x_1)$ within correlation functions involves the $\rho\to0$ limit of
\begin{eqs}
\label{eq:boundary_L1}
    &\rho^{\beta\alpha_1}\oint_{\partial B(x_1,\rho)} \mathbf{G}(z) \i\dd \overline{z},\quad\mathbf{G}(z) \coloneqq \norm{z-x_1}^{-\beta\alpha_1}\ps{ V_\beta(z) \prod_{k=1}^N V_{\alpha_k}(z_k)}.
\end{eqs}

If $\beta\alpha_1>-1$ this integral vanishes as $\rho\to0$ using Lemma~\ref{lemma:holder}. If $\beta\alpha_1\leq -1$, from Lemma~\ref{lemma:fusion_general}
\[
    \rho^{\beta\alpha_1}\oint_{\partial B(x_1,\rho)} \mathbf{G}(z) \i\dd \overline{z}=\rho^{\beta\alpha_1}\oint_{\partial B(x_1,\rho)} \left(\mathbf{G}(z)-\mathbf{G}(x_1)\right)\i\dd \overline{z}=\mc O\left(\rho^{1+\beta\alpha_1+\min(1,2+\beta(\beta+\alpha_1))}\right),
\]
which vanishes as soon as $(\alpha_1+j\beta,\alpha_2,\cdots,\alpha_N)\in\mc A_{N}$ for $j=0,1,2$. Hence, under this additional assumption, the contour integral vanishes in the limit $\rho\to0$ for $\alpha_1>-\frac3{2\beta}-\frac\beta2$.

To define $\L_{-n}V_{\alpha_1}(x_1)$ with $n\geq 2$, the boundary integral is
\begin{eqs}
\label{eq:boundary_Ln}
    &\rho^{\beta\alpha_1}\oint_{\partial B(x_1,\rho)} \frac{1}{(z-x_1)^{n-1}} \mathbf{G}(z)\i \dd \overline{z}.
\end{eqs}
Using the same reasoning as above, this integral is a $\mc O\left(\rho^{2-n+\alpha_1\beta+\min(n,2+\beta(\beta+\alpha_1))}\right)$ under the additional assumptions that $(\alpha_1+\beta,\cdots,\alpha_{N},\underbrace{\beta,\cdots,\beta}_{k\text{ terms}})\in\mc A_{N+k}$ for $0\leq k\leq n-1$ and $(\alpha_1+2\beta,\cdots,\alpha_{N})\in\mc A_{N}$. Provided both these requirements are met,
the integral thus vanishes in the $\rho\to0$ limit as soon as $\alpha_1 > \frac{n-4}{2\beta} -\frac{\beta}2=Q+\frac{n}{2\beta}-\beta$.

\subsubsection{KPZ identity}
As will become apparent later, KPZ identity is crucial in the derivation of the Ward identities. It is because of the KPZ identity that metric-dependent terms occuring in the derivation of Ward identities vanish.

\begin{proposition}[KPZ identity] \label{KPZ} Let $\bm\alpha\in\mc A_N$ be such that $(\alpha_1,\cdots,\alpha_N,\beta)\in\mc A_{N+1}$, and set $s = \frac{2Q-\sum_{k=1}^N \alpha_k}{\beta}$. Then, under the assumptions of Conjecture~\ref{conj:Laplace1}, for any $\rho>0$,
\begin{equation}
\int_{\C_\rho} \ps{ V_{\beta}(x) \prod_{k=1}^N V_{\alpha_k}(x_k) }_\rho \norm{\dd x}^2 = -\frac{s}{\mu} \ps{ \prod_{k=1}^N V_{\alpha_k}(x_k) }_\rho \ .
\end{equation}
\end{proposition}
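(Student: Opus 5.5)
The plan is to realize the left-hand side as a derivative in $\mu$ of the $\rho$-regularized correlation function, and then to integrate by parts in the zero mode along the contour $\mc U$. The invariance we exploit is the one highlighted in the introduction: the zero-mode integral is invariant under shifts in the \emph{imaginary} direction, equivalently under dilations of $\mu$.

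\textbf{Step 1: write the regularized correlation.} For $\rho>0$ we work with
\[
\ps{\prod_{k}V_{\alpha_k}(x_k)}_\rho = C(g_0)e^{P_{\bm x,\bm\alpha}}\int_{\mc U}e^{-\i s\beta\bm c}\,\expect{\exp\left(-\mu e^{\i\beta\bm c}M_\beta\left(\C_\rho,e^{\i\beta H_{\bm x,\bm\alpha}}\right)\right)}\dd\bm c,
\]
using that $\i(\sum_k\alpha_k-2Q)=-\i s\beta$. Let $J(\mu)$ denote this zero-mode integral, viewed as a function of $\mu$.

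\textbf{Step 2: identify the left-hand side as $\partial_\mu J$.} We differentiate under the expectation and the $\bm c$-integral. This is justified for $\mu$ in a neighbourhood because, on $\C_\rho$, Lemma~\ref{lemma:def_laplace} gives exponential moments, and Conjecture~\ref{conj:Laplace1} with the uniformity in $\mu$ gives domination of the tails along $\bm c\to-\i\infty$. The result is
\[
\partial_\mu J = -\int_{\mc U}e^{\i\beta\bm c}e^{-\i s\beta\bm c}\,\expect{M_\beta\left(\C_\rho,e^{\i\beta H}\right)e^{-\mu e^{\i\beta\bm c}M_\beta(\C_\rho,e^{\i\beta H})}}\dd\bm c.
\]
By Fubini and Girsanov's theorem, as in the passage to Eq.~\eqref{eq:correl_reg2}, inserting $V_\beta(x)$ and integrating over $x\in\C_\rho$ reproduces this expression with a minus sign. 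Indeed, the extra factor $e^{\i\beta\bm c}$ shifts the total charge by $\beta$ (so $s\mapsto s-1$), and the prefactor $e^{-\beta\sum_k\alpha_kG(x,x_k)+2w_0(x)}$ times $e^{P_{\bm x,\bm\alpha}}$ reassembles $e^{P}$ for the $(N+1)$-point function. Hence
\[
\int_{\C_\rho}\ps{V_\beta(x)\prod_k V_{\alpha_k}(x_k)}_\rho\,\dd^2x=-C(g_0)e^{P}\partial_\mu J.
\]
The hypothesis $(\bm\alpha,\beta)\in\mc A_{N+1}$ ensures that this $\bm c$-integral converges.

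\textbf{Step 3: trade $\partial_\mu$ for $\partial_{\bm c}$ and integrate by parts.} The integrand $E(\bm c)=\expect{\exp(-\mu e^{\i\beta\bm c}M)}$ is holomorphic in $\bm c$ and satisfies $\partial_{\bm c}E=\i\beta\mu\,\partial_\mu E$. Integrating by parts along $\mc U$ gives
\[
\mu\partial_\mu J=\frac{1}{\i\beta}\int_{\mc U}e^{-\i s\beta\bm c}\partial_{\bm c}E\,\dd\bm c=\frac{1}{\i\beta}\Big[e^{-\i s\beta\bm c}E\Big]_{\partial\mc U}+s\,J.
\]
The boundary term consists of the values at the two ends $-\i\infty$ and $\tfrac{2\pi}{\beta}-\i\infty$. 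Writing $\bm c=-\i t$, one has $|e^{-\i s\beta\bm c}|=e^{-s\beta t}$ while $|E|\lesssim(\mu e^{\beta t})^{-\lambda}$ by Conjecture~\ref{conj:Laplace1}, so the product is $O(e^{-(s+\lambda)\beta t})$. This decays because of the Seiberg bound $s+\lambda>0$, which holds since $\bm\alpha\in\mc A_N$, and so the boundary term vanishes. We obtain $\mu\partial_\mu J=sJ$, and combining with Step 2 yields $\int_{\C_\rho}\ps{V_\beta(x)\prod_kV_{\alpha_k}}_\rho\,\dd^2x=-\frac{s}{\mu}\ps{\prod_kV_{\alpha_k}}_\rho$. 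Equivalently, one may rescale $\mu\mapsto\mu e^{\beta t}$ by shifting the contour imaginarily with Cauchy's theorem, which gives $J(\mu)=\mu^{s}J(1)$ directly.

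\textbf{Main obstacle.} The delicate point is justifying the interchanges: differentiating in $\mu$ under the infinite contour integral, and applying Fubini to exchange $\int_{\C_\rho}\dd^2x$ with $\expect{\cdot}$ and $\int_{\mc U}$. This needs a bound on $\expect{|M_\beta|\,|e^{-\mu e^{\i\beta\bm c}M_\beta}|}$ that decays in $\bm c$ at rate at least $e^{-(\lambda-1)\beta t}$, so that it stays integrable against $e^{-s\beta t}e^{\beta t}$. I would obtain this from the uniform-in-$\mu$ statement of Conjecture~\ref{conj:Laplace1} applied to the $(N+1)$-point function with $(\bm\alpha,\beta)\in\mc A_{N+1}$, as in the proof of Lemma~\ref{lemma:estimate_infinity}, together with a Cauchy estimate in $\mu$ on a small disc.
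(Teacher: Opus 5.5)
Your proposal is correct and is essentially the paper's proof. The paper identifies the left-hand side with $-\partial_\mu$ of the $\rho$-regularized correlation (Girsanov plus exchange of integrals) and gets the homogeneity $\mu^{s}$ from a finite imaginary shift of the zero mode that absorbs $\mu$, followed by Cauchy's theorem; your integration by parts along $\mc U$, with boundary terms killed by $s+\lambda>0$, is the infinitesimal version of that shift.

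One correction to your last paragraph. The domination you need is on the modulus of the expectation, namely $\bigl|\expect{V_\beta(x)e^{-\mu e^{\i\beta\bm c}M_\beta}}\bigr|\lesssim e^{-\lambda'\beta t}$ with $\lambda'=\lambda_\beta(\bm\alpha,\beta)$ and $s-1+\lambda'>0$ (the $\mc A_{N+1}$ bound), which is what Conjecture~\ref{conj:Laplace1} provides. The quantity $\expect{|M_\beta|\,|e^{-\mu e^{\i\beta\bm c}M_\beta}|}$ that you wrote is not controlled by the conjecture, since $|e^{-\nu M_\beta}|=e^{-\nu\Re M_\beta}$ need not be small.
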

\begin{proof}
We start from the definition of the regularized correlation functions~\eqref{eq:correl_der}:
\begin{align*}
	\ps{\prod_{k=1}^N V_{\alpha_k}(x_k)}_\rho &= \im \left(1-e^{-2\i\pi s }\right)e^{P_{\bm z,\bm\alpha}}\int_{0}^{+\infty} e^{-s\beta \bm{c}} \expect{\expo{-\mu e^{\beta\bm c}M_{\beta}\left(\C_\rho,e^{\i\beta H_{\bm x,\bm\alpha}}\right)}}\dd \bm{c}\\
	&\quad + e^{P_{\bm z,\bm\alpha}} \int_{0}^{\frac{2\pi}{\beta}} e^{-\im s\beta \bm{c}} \ga{\expo{-\mu e^{\im \beta \bm{c} }M_{\beta}\left(\C_\rho,e^{\i\beta H_{\bm x,\bm\alpha}}\right)}} \dd \bm{c}
\end{align*}
Because the Laplace transform is analytic, and by Conjecture~\ref{conj:Laplace1} and Lemmas~\ref{lemma:estimate_infinity} and~\ref{lemma:holder} the integrals over $\bm c$ and $x$ converge absolutely, we can take the derivative with respect to $\mu$ in the right hand side and exchange the order of integration:
\begin{equation*}
	\partial_\mu \ps{\prod_{k=1}^N V_{\alpha_k}(x_k)}_\rho = - \int_{\C}\ps{ V_{\beta}(x) \prod_{k=1}^N V_{\alpha_k}(x_k) }_\rho \; \norm{\dd x}^2 \ .
\end{equation*}
On the other hand, the dependency in $\mu$ can be made explicit using the $\mc U$ contour 
\begin{align*}
    \ps{\prod_{k=1}^N V_{\alpha_k}(x_k) }_\rho = e^{P_{\bm z,\bm\alpha}} \int_{\mathcal{U}} e^{\i s\beta \bm c} \expect{\expo{-\mu e^{\i \beta\bm c} M_{\beta}\left(\C_\rho,e^{\i\beta H_{\bm x,\bm\alpha}}\right)}}(\gamma_0)_*\left(\dd \bm{c}\right)
\end{align*}
and changing variables $\bm{c} \to \bm{c} - \frac{1}{\beta} \ln\mu$. The integration contour of the zero mode is now shifted in the imaginary direction. This amounts to add the contribution of a closed contour to the original integral, and since the integrand is holomorphic in $\bm{c}$, the value of the integral is not modified. Hence we can conclude as expected:
\begin{equation*}
	\ps{ \prod_{k=1}^N V_{\alpha_k}(x_k) }_\rho = \mu^s \left[\ps{ \prod_{k=1}^N V_{\alpha_k}(x_k) }_\rho\right]_{\mu=1} \Rightarrow
	\partial_\mu \ps{ \prod_{k=1}^N V_{\alpha_k}(x_k) }_\rho = \frac{s}{\mu} \ps{ \prod_{k=1}^N V_{\alpha_k}(x_k) }_\rho.
\end{equation*}
\end{proof}

\subsection{Proof of Ward identities}
Ward identities involve derivatives of the Liouville field, and as such need to undergo a regularization procedure. This procedure allows to define derivatives of the correlation functions, and more generally of the insertion of descendant fields within them. To be more specific, we wish to define for $n\geq 1$
\begin{align*}
    \ps{\L_{-n}V_\alpha(x)\prod_{k=1}^NV_{\alpha_k}(x_k)}\coloneqq\lim\limits_{\rho,\eps\to0}\ps{\L_{-n}^\alpha(x)V_\alpha(x)\prod_{k=1}^NV_{\alpha_k}(x_k)}_{\rho,\eps}
\end{align*}
where recall $\L_{-n}^\alpha$ from Equation~\eqref{eq:def_desc}, and with the right-hand side defined in Definition~\ref{def:der_cor} with $R=+\infty$. 

\subsubsection{Warmup: first order descendants and the derivability of correlation functions}
We prove the following proposition, that will serve as an exposition of our methods:
\begin{proposition}[Derivability of correlation functions]
\label{prop:derivability}
Assume that $\alpha_1>-\frac{3}{2\beta}-\frac{\beta}{2}$ and that $(\alpha_1,\cdots,\alpha_N)\in\mc A_{N}$ and $(\alpha_1,\cdots,\alpha_N,\beta)\in\mc A_{N+1}$. Then the following limit exists and we have, in the sense of weak derivatives,
\begin{equation}
    \ps{\L_{-1} V_{\alpha_1}(x_1) \prod_{k=2}^N V_{\alpha_k}(x_k)} \coloneqq \lim\limits_{\rho,\eps\to0}\ps{\L_{-1}^{\alpha_1}(x_1)\prod_{k=1}^NV_{\alpha_k}(x_k)}_{\rho,\eps} = \partial_{x_1} \ps{\prod_{k=1}^N V_{\alpha_k}(x_k)}.
\end{equation}
\end{proposition}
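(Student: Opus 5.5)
We plan to compute the regularized correlation function with the descendant $\L_{-1}^{\alpha_1}[\Phi]=\im\alpha_1\partial\Phi$ inserted at $x_1$ explicitly, using Gaussian integration by parts (Lemma~\ref{gaussian_integration}). We then compare the result with the derivative $\partial_{x_1}$ of the $\rho$-regularized correlation function, which we compute directly from the explicit formula \eqref{eq:correl_reg2}/\eqref{eq:decomp_contour}.

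\textbf{Step 1: integration by parts.} By Definition~\ref{def:der_cor}, after Girsanov we must evaluate
\[
\ga{\im\alpha_1\,\partial\bigl(\X_\eps+H^{(1)}_{\bm x,\bm\alpha}+\im Qw_0\bigr)(x_1)\,e^{-\mu e^{\i\beta\bm c}M_\beta(\C_\rho,e^{\i\beta H_{\bm x,\bm\alpha}})}}.
\]
The deterministic part $\im\alpha_1\partial(H^{(1)}_{\bm x,\bm\alpha}+\im Qw_0)(x_1)$ is explicit. It equals $-\sum_{k\ge2}\alpha_1\alpha_k\partial_{x_1}G(x_1,x_k)$ plus terms coming from $W_{g_0}=w_0+\theta_0$, which combine with $\partial_{x_1}(\Delta_{\alpha_1}w_0(x_1))$. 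The Gaussian part is handled by Lemma~\ref{gaussian_integration}. It produces
\[
\mu\beta\alpha_1 e^{\i\beta\bm c}\int_{\C_\rho}\partial_{x_1}G(x_1,y)\,e^{\i\beta H(y)}\,\ga{V_\beta(y)\cdots}\,|\dd y|^2,
\]
which is an integral of $\partial_{x_1}G(x_1,y)\ps{V_\beta(y)\prod V_{\alpha_k}(x_k)}_\rho$ over $\C_\rho$. Since $y$ stays at distance $\rho$ from $x_1$, the limit $\eps\to0$ is harmless, and this step is justified by the exponential moments of Lemma~\ref{lemma:def_laplace}. Integrating over the contour $\mc U$ is allowed under Conjecture~\ref{conj:Laplace1}, because $(\bm\alpha,\beta)\in\mc A_{N+1}$.

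\textbf{Step 2: differentiate the regularized correlation.} We differentiate $\ps{\prod V_{\alpha_k}(x_k)}_\rho$ in $x_1$ with $\rho$ fixed, working weakly or, more precisely, for a region $\C_\rho$ whose holes are centred at fixed points $z_k$ near $x_k$. Differentiating $e^{P_{\bm z,\bm\alpha}}$ gives exactly the deterministic part of Step 1. Differentiating $M_\beta(\C_\rho,e^{\i\beta H})$ inside the Laplace transform gives $\i\beta\cdot\i\alpha_1\partial_{x_1}G(\cdot,x_1)$ integrated against the chaos. After the usual identity $\ga{M_\beta(f)e^{-\mu\cdot}}\to\int f(y)\ga{V_\beta(y)\cdots}$, this matches the integral term of Step 1. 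The two expressions therefore agree up to the $\rho$-dependence of the domain. In the formulation where the hole is centred at $x_1$ itself, the Cauchy--Pompeiu / Stokes formula turns $\partial_{x_1}$ acting on $\int_{\C_\rho}$ into the boundary term \eqref{eq:boundary_L1}. Combining the two steps, the regularized descendant correlation equals $\partial_{x_1}\ps{\prod V_{\alpha_k}}_\rho$ plus the boundary term $\rho^{\beta\alpha_1}\oint_{\partial B(x_1,\rho)}\mathbf G(z)\,\i\dd\bar z$ times constants.

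\textbf{Step 3: the limit $\rho\to0$.} We pair with a test function $\varphi(x_1)$ and integrate by parts. It then suffices to show two things:
\begin{itemize}
\item $\ps{\prod V_{\alpha_k}}_\rho\to\ps{\prod V_{\alpha_k}}$ locally uniformly in $x_1$. This follows by dominated convergence, using Lemma~\ref{lemma:uniform_convergence_exp_moments} and Conjecture~\ref{conj:Laplace1} uniformly on compacts.
\item The boundary term vanishes. This is the analysis of Section~\ref{section:boundary_terms}: for $\beta\alpha_1>-1$ Lemma~\ref{lemma:holder} suffices, and otherwise we subtract $\mathbf G(x_1)$, which has zero contour integral against $\dd\bar z$, and apply the fusion estimate Lemma~\ref{lemma:fusion_general}. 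Together these give $\mc O(\rho^{1+\beta\alpha_1+\min(1,2+\beta(\beta+\alpha_1))})\to0$ for $\alpha_1>-\frac3{2\beta}-\frac\beta2$.
\end{itemize}

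\textbf{Main obstacle.} We expect the main difficulty to be Step 3: the integral $\int_{\C_\rho}\partial_{x_1}G(x_1,y)\ps{V_\beta(y)\cdots}$ need not converge absolutely as $\rho\to0$ when $\beta\alpha_1\le -1$. In that case the combination only converges after the boundary contribution and the Taylor subtraction from the OPE are accounted for. This is where the fusion estimate, and hence the lower bound on $\alpha_1$, is essential, together with uniformity in $x_1$ of the Conjecture~\ref{conj:Laplace1} bounds.
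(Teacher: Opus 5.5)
Your Steps 1--3 correctly prove the identity in the sense of distributions. With the hole moving with $x_1$, you have $\ps{\L_{-1}^{\alpha_1}(x_1)\mathbf{V}}_\rho=\partial_{x_1}\ps{\mathbf{V}}_\rho$ minus the contour term \eqref{eq:boundary_L1}. Pairing with $\varphi$, using locally uniform convergence of $\ps{\mathbf{V}}_\rho$, and using the vanishing of the contour term then gives convergence to $\partial_{x_1}\ps{\mathbf{V}}$ in $\mathcal D'$. This is, if anything, more explicit than the paper about how $\C_\rho$ depends on $x_1$.

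However, the proposition also asserts that $\lim_{\rho,\eps\to0}\ps{\L_{-1}^{\alpha_1}(x_1)\mathbf{V}}_{\rho,\eps}$ exists at each fixed configuration. That pointwise limit is what the paper uses later: for the $\L_{-1}V_{\alpha_k}$ terms in the proof of Theorem~\ref{thm:desc}, and in the pointwise convergence invoked in the proof of Theorem~\ref{thm:BPZ}. Nothing in your plan produces it. By Lemma~\ref{lemma:insertions} with $p=1$, it amounts to convergence, for fixed $x_1$, of $\int_{\C_\rho}\frac{\alpha_1\beta}{2(x_1-z)}\ps{V_\beta(z)\mathbf{V}}_\rho|\dd z|^2$. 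That integral is not absolutely convergent near $x_1$ when $\beta\alpha_1\le-1$, and locally uniform convergence of $\ps{\mathbf{V}}_\rho$ gives no control on derivatives. This is exactly the obstacle you name in your last paragraph; the weak formulation sidesteps it rather than resolving it.

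The missing idea in the paper works on $A_{\eta,\rho}=B(x_1,\eta)\cap\C_\rho$. There it applies the same $\L_{-1}$ identity to the correlation with $V_\beta(z)$ inserted,
\[
\partial_z\ps{V_\beta(z)\mathbf{V}}_\rho=\sum_{k=1}^N\frac{\beta\alpha_k}{2(x_k-z)}\ps{V_\beta(z)\mathbf{V}}_\rho-\mu\int_{\C_\rho}\frac{\beta^2}{2(z'-z)}\ps{V_\beta(z)V_\beta(z')\mathbf{V}}_\rho|\dd z'|^2 .
\]
This trades the singular weight $\frac{\alpha_1\beta}{2(x_1-z)}$ for three pieces:
\begin{enumerate}
\item a total $\partial_z$-derivative, which Stokes turns into contour integrals over $\partial B(x_1,\eta)$ (bounded) and $\partial B(x_1,\rho)$ (your boundary term, which vanishes by Section~\ref{section:boundary_terms});
\item harmless terms in $(x_k-z)^{-1}$ for $k\ge2$;
\item a double integral, whose part over $A_{\eta,\rho}\times A_{\eta,\rho}$ vanishes identically by antisymmetry of $(z-z')^{-1}$ against the symmetric two-$V_\beta$ correlation, while the remainder is dominated by $|z-z'|^{\beta^2-1}$ near $\partial B(x_1,\eta)$.
\end{enumerate}
Every term then has a limit as $\rho\to0$ at fixed $x_1$.

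Alternatively, you could make your ``Taylor subtraction'' precise. Subtract $|z-x_1|^{\beta\alpha_1}\mathbf G(x_1)$, whose integral against $(x_1-z)^{-1}$ over an annulus centred at $x_1$ vanishes. What remains is $O(|z-x_1|^{\beta\alpha_1-1+\min(1,2+\beta(\beta+\alpha_1))})$, which is integrable exactly when $\alpha_1>-\frac{3}{2\beta}-\frac\beta2$. But this requires the expansion of Lemma~\ref{lemma:fusion_general} for $\ps{V_\beta(z)\mathbf{V}}_\rho$ uniformly in $\rho$ throughout the annulus, not just on $\partial B(x_1,\rho)$, and you would need to justify that.
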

Before moving to the proof we give the following statement, key to define the descendants:
\begin{lemma}[Insertion of a derivative of the Liouville field]
\label{lemma:insertions}
Let $(x_k)_{1\leq k\leq N}\in\C^N$ be distinct, $\bm\alpha\in\mc A_N$, and let  $\mathbf{V} = \prod_{k=1}^N V_{\alpha_k}(x_k)$. For any $p\geq 1$, let $F_p:f\mapsto \frac{\partial^pf}{(p-1)!}$. Then
\begin{equation}
    \ps{F_p(x_1)\V}_\rho= \sum_{k=2}^N \frac{ \i \alpha_k}{2(x_k-x_1)^p} \ps{\V}_\rho-\mu  \int_{\C_\rho} \frac{\i\beta}{2(z-x)^p}  \ps{ V_{\beta}(z) \mathbf{V} }_\rho \, \norm{\dd z}^2.
\end{equation}
\end{lemma}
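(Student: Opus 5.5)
The plan is to work at fixed $\rho>0$ and fixed $\bm c$ on the contour, prove the identity at the level of the expectation, and only then integrate over $\bm c$. Since $p\geq1$ and $F_p$ is linear in $\Phi$, the Wick product of Definition~\eqref{eq:def_deri} has no self-contraction term. The quantity to compute is therefore $\ga{\partial^p(\X_\eps+H^{(1)}_{\bm x,\bm\alpha}+\i Qw_0)(x_1)\, e^{-\mu e^{\i\beta\bm c}M_\beta(\C_\rho,e^{\i\beta H_{\bm x,\bm\alpha}})}}$, divided by $(p-1)!$ and taken in the limit $\eps\to0$. I would split it into a deterministic part and a Gaussian part.

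\textbf{Deterministic part.} This is $\partial^p(H^{(1)}_{\bm x,\bm\alpha}+\i Qw_0)(x_1)$. Using \eqref{eq:green} with $g=g_0$, $W_{g_0}=w_0+\theta_0$, write
\[
G(z,x_k)=-\ln|z-x_k|-\tfrac12 w_0(z)+\text{const}.
\]
The key observation is that, by definition of $s$, the coefficient of $w_0$ in $H^{(1)}+\i Qw_0$ is $-\tfrac{\i}{2}\sum_k\alpha_k+\i Q=\tfrac{\i}{2}(2Q-\sum\alpha_k)=\tfrac{\i\beta s}{2}$. Then $\partial^p\ln|z-x_k|=\tfrac{(-1)^{p-1}(p-1)!}{2(z-x_k)^p}$ for $k\geq2$ gives exactly $\tfrac{\i\alpha_k}{2(x_k-x_1)^p}(p-1)!$, while the $k=1$ term has been removed by the definition of $H^{(1)}$. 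What remains is an extra contribution $\tfrac{\i\beta s}{2}\partial^pw_0(x_1)$ times $\ps{\V}_\rho$.

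\textbf{Gaussian part.} Apply Lemma~\ref{gaussian_integration} (after truncation, and extended to the exponential by analyticity and the exponential moments of Lemma~\ref{lemma:def_laplace}). It gives
\[
-\mu e^{\i\beta\bm c}\i\beta\int_{\C_\rho}\partial^p_{x_1}G_\eps(x_1,z)\,e^{\i\beta H(z)}\,\ga{V_{\beta,\eps}(z)\,e^{-\mu\cdots}}\,\dd^2z .
\]
Since $z\in\C_\rho$ stays at distance $\rho$ from $x_1$, we may let $\eps\to0$. The derivative $\partial^p_{x_1}G(x_1,z)$ splits into $\tfrac{(p-1)!}{2(z-x_1)^p}$ (up to the sign conventions fixed by the target formula) plus $-\tfrac12\partial^pw_0(x_1)$. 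The singular piece, after integration over $\bm c$ and using Girsanov to reabsorb $e^{\i\beta\bm c}e^{\i\beta H(z)}V_\beta(z)$ into the $(N+1)$-point function, yields the integral term $-\mu\int_{\C_\rho}\tfrac{\i\beta}{2(z-x_1)^p}\ps{V_\beta(z)\V}_\rho$. (The $x$ in the statement should read $x_1$.)

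\textbf{The $w_0$ terms.} The $w_0$ piece produces $\tfrac{\i\beta}{2}\partial^pw_0(x_1)\,\mu\int_{\C_\rho}\ps{V_\beta(z)\V}_\rho\,\dd^2z$. By Proposition~\ref{KPZ} this equals $-\tfrac{\i\beta s}{2}\partial^pw_0(x_1)\ps{\V}_\rho$, which cancels the leftover term from the deterministic part.

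The main obstacle is justifying the exchange of the Gaussian integration by parts, the $\eps\to0$ limit, the $z$-integral and the (non-compact) $\bm c$-contour integral. I would handle this by dominated convergence: Conjecture~\ref{conj:Laplace1} supplies uniform decay in $\bm c$, Lemma~\ref{lemma:estimate_infinity} gives integrability at $z\to\infty$, and the $\rho$-cutoff controls the neighbourhoods of the insertions. Careful bookkeeping of signs and factors of $2$ in $\partial^p\ln|\cdot|$ is needed to match the stated constants.
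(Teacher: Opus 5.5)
Your proposal is correct and is essentially the paper's own proof. Both split off the deterministic term $\partial^p(H^{(1)}_{\bm x,\bm\alpha}+\i Qw_0)(x_1)$, apply Gaussian integration by parts to $\partial^p\X_\eps(x_1)$, separate $\partial^p_{x_1}G(x_1,z)$ into $\frac{(p-1)!}{2(z-x_1)^p}$ and $-\frac12\partial^p w_0(x_1)$, and cancel all the $w_0$ contributions (net deterministic coefficient $\frac{\i\beta s}{2}$) against the integrated term via Proposition~\ref{KPZ}. As in the paper, that last step tacitly requires $(\bm\alpha,\beta)\in\mc A_{N+1}$, so that $\ps{V_\beta(z)\V}_\rho$ and the KPZ identity make sense; this is not among the lemma's stated hypotheses.
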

\begin{proof}
Recall that $\lioureg = \X^g_{\eps} + \i Q w_0 + \bm{c}$. Since the integral on $\bm{c}$ converges absolutely, and by definition of the derivatives of the Liouville field~\ref{def:der_cor},
\begin{align*}
&\ps{F_p(x_1) \mathbf{V} }_{\rho,\eps} = \int_{\mc U} e^{ \i\bm s \bm{c} } \; \E\left[ \frac{\partial^p}{(p-1)!}\left( \X^g_\eps + \i Q w_0 + H^{(1)}_{\bm z,\bm\alpha}\right)(x_1)e^{-\mu e^{\i\beta \bm{c} } M_{\beta,\eps}(\C_\rho,e^{\i\beta H_{\bm x,\bm\alpha}}) } \right]\, (\gamma_0)_*\left(\dd \bm{c}\right) \\
&=\frac{\partial^p}{(p-1)!}\left(H^{(1)}_{\bm z,\bm\alpha} + \i Q w_0 \right)(x_1)\ps{\V}_\rho+\int_{\mc U} e^{ \i\bm s \bm{c} } \; \E\left[ \frac{\partial^p}{(p-1)!} \X^g_\eps(x_1)e^{-\mu e^{\i\beta \bm{c} } M_{\beta,\eps}(\C_\rho,e^{\i\beta H_{\bm x,\bm\alpha}}) } \right]\, (\gamma_0)_*\left(\dd \bm{c}\right)\ .
\end{align*}
This last term can be dealt with using Gaussian integration by parts (Lemma~\ref{gaussian_integration}) by writing the regularized GMC mass as a Riemann sum. Using the expression of the Green's function~\eqref{eq:green} we get
\begin{align*}
	\ps{F_p(x_1) \mathbf{V} }_{\rho,\eps} &= \sum_{k=1}^N \frac{\i \alpha_k}{2(x_k-x_1)^p}\ps{\mathbf{V} }_{\rho,\eps}- \mu\int_{\C_\rho}\frac{\i \beta}{2(z-x_1)^p}\ps{V_\beta(z)\mathbf{V} }_{\rho,\eps}\norm{\dd x}^2\\
    &-\frac{\i\partial^pw_0(x)}{2(p-1)!}\left(\sum_{k=1}^N\alpha_k-2Q-\mu\beta\int_{\C_\rho}\ps{V_\beta(z)\V}_{\rho,\eps}\norm{\dd x}^2\right) \ .
\end{align*}
Taking $\eps\to0$, the second line vanishes thanks to Lemma~\ref{KPZ}, concluding the proof.
\end{proof}

We are now in position to prove Proposition~\ref{prop:derivability}.
\begin{proof}
To start with, note that by definition of $\L_{-1}^\alpha$ we have for $\rho,\eps>0$
\begin{equation}\label{eq:derivatives}
     \ps{\L_{-1}^{\alpha_1}(x_1)\V}_{\rho,\eps}=\partial_{x_1}\ps{\V}_{\rho,\eps}+o_\eps(1).
\end{equation}
To define the left-hand side, we take $p=1$ in Lemma~\ref{lemma:insertions}, which gives
\[
    \ps{\L_{-1}^{\alpha_1}(x_1)\V}_{\rho}=\sum_{k=1}^N \frac{\alpha_1 \alpha_k}{2(x_1-x_k)} \ps{\mathbf{V} }_{\rho} -\mu \int_{\C_\rho} \frac{ \alpha_1 \beta}{2(x_1-z)} \ps{ V_{\beta}(z) \mathbf{V} }_{\rho} \norm{\dd z}^2+o_\eps(1).
\]
Clearly, the sum on the RHS stays finite in this limit. Regarding the integral, the asymptotic behavior of the correlation functions~\eqref{eq:asymp_estimates} prove that singularities may arise only when $z$ approaches insertion points. However, by Hölder estimates (see Lemma~\ref{lemma:holder}), the divergence as $z$ approaches $x_k$ for $2\leq k\leq N$ is dominated by $|z-x_k|^{\beta \alpha_k}$, which is integrable since $\beta\alpha_k > \beta Q>-2$. The only remaining case is when $z$ approaches $x_1$. We split the integration domain into two parts: if $0<\eta<\frac{1}{2}\min_{k=2,\dots,N} |x_1-x_k|$, let $A_{\eta,\rho}=B\left(x,\eta \right)\cap \C_\rho$, so that the integral over its complementary in $\C_\rho$ is absolutely convergent. Then by Eq.~\eqref{eq:derivatives}
\begin{align*}
    \partial_{z} \ps{V_{\beta}(z)\mathbf{V}}_{\rho} = \sum_{k=1}^N \frac{\beta \alpha_k}{2(x_k-z)} \ps{V_{\beta}(z)\mathbf{V}}_{\rho} -\mu \int_{\C_\rho} \frac{\beta^2}{2(z^\prime-z)} \ps{V_{\beta}(z)V_{\beta}(z^\prime)\mathbf{V}}_{\rho} \norm{\dd z^\prime}^2.
\end{align*}
As a consequence
\begin{align*}
    &\int_{A_{\eta,\rho}} \frac{\alpha_1\beta}{2(x_1-z)} \ps{ V_{\beta}(z)\mathbf{V} }_\rho \norm{ \dd z}^2= -\sum_{k=2}^N \int_{A_{\eta,\rho}} \frac{\beta\alpha_k}{2(x_k-z)} \ps{V_{\beta}(z) \mathbf{V} }_{\rho}\norm{ \dd z}^2 \\
    &+ \oint_{\partial A_{\eta,\rho}} \ps{ V_{\beta}(z) \mathbf{V} }_{\rho} \frac{\i\dd \overline{z}}2 + \mu \int_{A_{\eta,\rho}} \int_{\C_\rho} \frac{\beta^2}{2(z-z^\prime)} \ps{ V_{\beta}(z) V_{\beta}(z^\prime) \mathbf{V} }_{\rho} \norm{ \dd z}^2\norm{ \dd z^\prime}^2 \ .
\end{align*}
The first integral remains finite as the regularization is lifted since we stay at a finite distance from the $x_k$. Before dealing with the integral over $\partial A_{\eta,\rho}$, we consider the two-fold integral:
\begin{align*}
    &\int_{A_{\eta,\rho}} \int_{\C_\rho} \frac{1}{(z-z^\prime)} \ps{V_{\beta}(z) V_{\beta}(z^\prime) \mathbf{V} }_\rho \norm{\dd z}^2 \norm{\dd z^\prime}^2= \\
    &\int_{A_{\eta,\rho}} \int_{\C_\rho\setminus A_{\eta,\rho}} \frac{1}{(z-z^\prime)}  \ps{V_{\beta}(z) V_{\beta}(z^\prime) \mathbf{V} }_\rho \norm{\dd z}^2 \norm{\dd z^\prime}^2+ \int_{(A_{\eta,\rho})^2} \frac{1}{(z-z^\prime)}  \ps{V_{\beta}(z) V_{\beta}(z^\prime) \mathbf{V} }_\rho \norm{\dd z}^2 \norm{\dd z^\prime}^2
\end{align*}
Thanks to the fusion estimates~\ref{lemma:fusion_general}, singularities occur only when $z\to z^\prime$, which happens near $\partial B(x,\eta)$ hence at fixed distance from the insertions $x_k$. For such fusions, the integrand is dominated by $|z-z^\prime|^{\beta^2-1}$, which is integrable so the limit $\rho\to 0$ is finite. For the second integral, due to the antisymmetry of the integrand it is identically zero. 
For the integral over $\partial A_{\eta,\rho}$, it is the disjoint union of $\partial B(x,\eta)$ and $\partial B(x,\rho)$. The integral over $\partial B(x,\eta)$ remains uniformly bounded in $\rho$, while the one over $\partial B(x,\rho)$ vanishes in the limit $\rho \to 0$ by the discussion in~\ref{section:boundary_terms}, concluding the proof.
\end{proof}

\subsubsection{Local Ward identities for descendants of arbitrary order}
We now turn to prove Ward identities for descendants $\L_{-n} V_\alpha$ (Thm.~\ref{thm:desc}). Correlation functions will contain Wick products of the GFF, so we begin by stating the following lemma:
\begin{lemma}[Insertion of a Wick product of derivatives]
\label{lemma:wick_insertions}
With the same notations as Lemma~\ref{lemma:insertions}, for any sequence of positive integers $r,p_1,\dots,p_r$,
\begin{eqs}
\label{eq:wick_ibp}
	\frac1{(p_r-1)!}&\ps{ :\prod_{l=1}^r \partial^{p_l} \liou(x_1): \mathbf{V} }_{\rho} = \sum_{k=2}^N \frac{\i \alpha_k}{2(x_k-x_1)^{p_r}} \ps{ :\prod_{l=1}^{r-1} \partial^{p_l} \Phi(x_1): \mathbf{V} }_{\rho}  \\
    &- \mu  \int_{\C_\rho} \frac{\i\beta}{2(z-x_1)^{p_r}}  \ps{:\prod_{l=1}^{r-1} \partial^{p_l} \Phi(x): V_\beta(z) \mathbf{V} }_\rho \norm{\dd z}^2.
\end{eqs}
\end{lemma}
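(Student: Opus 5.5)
This is the Wick-product analogue of Lemma~\ref{lemma:insertions}, and I would prove it the same way. The factor singled out is the last one, $\partial^{p_r}\liou(x_1)$. I apply Gaussian integration by parts (Lemma~\ref{gaussian_integration}) only to its GFF part, while keeping the remaining Wick product $:\prod_{l=1}^{r-1}\partial^{p_l}\liou(x_1):$ as a polynomial in the Gaussian variables.

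I work first at fixed $\eps,\rho>0$, with the regularized field $\lioureg$ and the Girsanov-shifted representation of Definition~\ref{def:der_cor}, where the field is $\X_\eps+\i Qw_0+H^{(1)}_{\bm x,\bm\alpha}+\bm c$. The recursive definition~\eqref{eq:def_deri} splits the Wick product into two pieces:
\begin{itemize}
\item $\partial^{p_r}(\X_\eps+H)(x_1)$ multiplying $:\prod_{l<r}\partial^{p_l}(\X_\eps+H)(x_1):$;
\item the contraction terms carrying the deterministic factors $\partial^{p_r}_{z_1}\partial^{p_l}_{z_2}\ln|z_1-z_2|$.
\end{itemize}

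Gaussian integration by parts on $\partial^{p_r}\X_\eps(x_1)$ (writing the regularized chaos as a Riemann sum, as in the proof of Lemma~\ref{lemma:insertions}) produces three kinds of terms.
\begin{itemize}
\item \emph{Contractions with the other factors} $\partial^{p_l}\X_\eps(x_1)$, $l<r$. These give $\partial^{p_r}_{z_1}\partial^{p_l}_{z_2}G_\eps(z_1,z_2)|_{z_1=z_2=x_1}$. Using~\eqref{eq:green}, this equals the $\ln|z_1-z_2|$-contraction (with sign $-$) plus derivatives of $W_{g_0}=w_0+\theta_0$. The logarithmic part cancels exactly against the Wick counterterm in~\eqref{eq:def_deri}, up to $o_\eps(1)$; this cancellation is precisely why the Wick product was defined that way.
\item \emph{The contraction with the chaos.} This gives $-\mu\,\i\beta\int_{\C_\rho}\partial^{p_r}_{x_1}G_\eps(x_1,z)\ps{:\prod_{l<r}\partial^{p_l}\liou(x_1):V_\beta(z)\V}_{\rho,\eps}|\dd z|^2$.
\item \emph{The deterministic term} $\partial^{p_r}(H^{(1)}_{\bm x,\bm\alpha}+\i Qw_0)(x_1)$ times the lower-order correlation.
\end{itemize}
Expanding $\partial^{p_r}_x G(x,y)=\frac{(p_r-1)!}{2}(y-x)^{-p_r}-\frac12\partial^{p_r}w_0(x)$ yields the two displayed terms of~\eqref{eq:wick_ibp}, with the sum over $k\geq2$ because $H^{(1)}$ removes the singular $k=1$ term.

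What remains is a collection of $\partial^{p}w_0(x_1)$-terms, and I expect handling them to be the main obstacle. They arise from three sources: the $W$-parts of the $G_\eps$-contractions with the other factors, the $\i Q\partial^{p_r}w_0$ term together with the $\alpha_k$ terms, and the chaos term. The $\alpha_k$, $Q$ and chaos contributions combine into
$$\Big(\sum_k\alpha_k-2Q-\mu\beta\int_{\C_\rho}V_\beta\Big)\,\partial^{p_r}w_0(x_1),$$
now with the lower Wick product inserted. I would kill this combination with a KPZ-type identity as in Lemma~\ref{lemma:insertions}. That requires extending Proposition~\ref{KPZ} to correlations with a polynomial insertion of field derivatives at $x_1$, by the same $\partial_\mu$ plus contour-shift argument. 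This works because the inserted polynomial does not depend on $\mu$ after the zero-mode shift $\bm c\to\bm c-\frac1\beta\ln\mu$: only $\bm c$-independent derivatives appear.

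The $W$-parts of the other contractions need separate care. I would check that they are absorbed by interpreting $\liou$ with the $\i Qw_0$ shift consistently, or that they appear symmetrically at the lower Wick order. If necessary, I would proceed by induction on $r$, assuming~\eqref{eq:wick_ibp} at order $r-1$.

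Finally I take $\eps\to0$. All integrals live on $\C_\rho$, away from the insertions, and exchanging the limit with the $\bm c$-integral is justified by uniform integrability (Lemma~\ref{lemma:def_laplace}, Conjecture~\ref{conj:Laplace1}).
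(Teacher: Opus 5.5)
Your proposal follows the paper's route: its proof just says to repeat the argument of Lemma~\ref{lemma:insertions} using the recursive Wick definition~\eqref{eq:def_deri} and Gaussian integration by parts. You also correctly make explicit what the paper leaves implicit, namely that the leftover $\partial^{p_r}w_0(x_1)$ combination is killed by a KPZ identity with the lower Wick product inserted, which holds by the same $\partial_\mu$/imaginary zero-mode shift argument because that insertion does not depend on $\bm c$.

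The point you leave open is not an obstacle, since the $W$-parts of the inner contractions vanish identically. $W_{g_0}$ enters $G_\eps(z_1,z_2)$ only through $-\tfrac12\big(\mu_{g_0}(z_1,\eps)[W_{g_0}]+\mu_{g_0}(z_2,\eps)[W_{g_0}]\big)$, and this is killed by the mixed derivative $\partial^{p_r}_{z_1}\partial^{p_l}_{z_2}$ because $p_r,p_l\geq 1$. So the inner contractions cancel the Wick counterterms exactly, and no induction on $r$ is needed.
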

\begin{proof}
The proof is the same as Lemma~\ref{lemma:insertions} using the recursive definition of the derivatives of the Liouville field~\ref{eq:def_deri} together with the Gaussian integration by parts formula (Eq.~\ref{gaussian_integration}).
\end{proof}

\begin{proof}[Proof of Theorem~\ref{thm:desc}]
Applying Lemmas~\ref{lemma:insertions} and~\ref{lemma:wick_insertions} for regularized correlation functions,
\begin{align*}
    &(-1)^n\psrreg{\L^{\alpha_1}_{-n}(x_1) \mathbf{V}} = \left(\sum_{k=2}^N \frac{-((n-1)Q+\alpha_1)\alpha_k}{2(x_1-x_k)^n} + \sum_{j=1}^{n-1} \sum_{k,l=2}^N \frac{\alpha_k\alpha_l}{4(x_1-x_k)^j (x_1-x_l)^{n-j}}\right) \psrreg{ \mathbf{V}} \\
    &\quad-\mu\int_{\C_\rho}\left( \sum_{j=1}^{n-1} \sum_{k=2}^N \frac{ \beta \alpha_k}{2(x_1-x_k)^j(x_1-z)^{n-j}}+\frac{n-1-\frac{\beta \alpha_1}2}{(x_1-z)^n}\right)\psrreg{V_\beta(z) \mathbf{V}} |\dd^2 z|\\
    &\quad + \mu^2\sum_{k=1}^{n-1}  \int_{\C_\rho} \int_{\C_\rho} \frac{\beta^2}{4(x_1-z)^{k}(x_1-z^\prime)^{n-k}} \psrreg{V_\beta(z)V_\beta(z^\prime) \mathbf{V}} |\dd^2 z| |\dd^2 z^\prime|.
\end{align*}
The two fold integral is rewritten by means of the following identity:
\begin{align*}
    \partial_z\left(\frac{1}{(x_1-z) ^{n-1}} \psrreg{V_\beta(z)\mathbf{V}}\right)=& \left(\frac{n-1-\frac{\beta\alpha_1}2}{(x_1-z) ^n} + \sum_{k=2}^N \frac{\beta \alpha_k}{2(x_1-z)^{n-1} (z-x_k) }\right)\psrreg{V_\beta(z) \mathbf{V}} \\
    &\quad-\mu\int_{\C_\rho}  \frac{\beta^2}{2(x_1-z)^{n-1}(z^\prime-z) } \psrreg{V_\beta(z)V_\beta(z^\prime)\mathbf{V}} |\dd^2 z^\prime|
\end{align*}
along with the symmetrization identities
\begin{align*}
    &\sum_{j=1}^{n-1} \sum_{k,l=2}^N \frac{1}{(x_1-x_k)^j(x_1-x_l)^{n-j}} = \sum_{k=2}^N \frac{n-1}{(x_1-x_k)^n} - \sum_{\substack{k,l=2\\
    k\neq l}}^N \frac{2}{(x_k-x_l)(x_1-x_k)^{n-1}}\\
    &\sum_{j=1}^{n-1} \sum_{k=2}^N \frac{1}{(x_1-x_k)^j(x_1-z)^{n-j}} = \sum_{k=2}^N \frac{1}{z-x_k}\left(\frac{1}{(x_1-z)^{n-1}}-\frac{1}{(x_1-x_k)^{n-1}}\right) \\
    &\sum_{j=1}^{n-1} \frac{1}{(x_1-z)^j}\frac{1}{(x_1-z^\prime)^{n-j}} = \frac{1}{z^\prime-z} \left(\frac{1}{(x_1-z)^{n-1}}-\frac{1}{(x_1-z^\prime)^{n-1}}\right) \ .
\end{align*}
Hence
\begin{align*}
    &(-1)^n\psrreg{\L^{\alpha_1}_{-n}(x_1) \mathbf{V}}=\sum_{k=2}^N \frac{\Delta_{\alpha_k}}{(x_1-x_k)^n }\ps{\V}_\rho \\
    & -\sum_{k=2}^N \frac{1}{(x_1-x_k) ^{n-1}}\left(  \sum_{\substack{l=1\\l\neq k}}^N \frac{\alpha_k\alpha_l}{2 (x_l-x_k)  } \psrreg{\mathbf{V}}-\mu \int_{\C_\rho} \frac{\alpha_k\beta}{2(z-x_k) }\psrreg{V_\beta(z) \mathbf{V}} |\dd z|^2\right)\\
    &-\int_{\C_\rho} \partial_z\left(\frac{1}{(x_1-z) ^{n-1}} \psrreg{V_\beta(z)\mathbf{V}}\right) |\dd z|^2.
\end{align*}
On the second line, we recognize correlation functions with the descendant $\L_{-1}V_{\alpha_k}(x_k)$:
\[
     \ps{\L_{-1}^{\alpha_k}(x_k)\V}_\rho=\sum_{\substack{l=1\\l\neq k}}^N \frac{\alpha_k\alpha_l}{2 (x_l-x_k)  } \psrreg{\mathbf{V}}-\mu \int_{\C_\rho} \frac{\alpha_k\beta}{2(z-x_k) }\psrreg{V_\beta(z) \mathbf{V}} |\dd z|^2.
\]
As a consequence to conclude for the proof it remains to show that the last term goes to zero as we let $\rho\to0$. It is equal to a sum of one-dimensional integrals along the boundary components of $\C_\rho$,
\begin{align*}
\partial\C_\rho = \partial B(0,\rho^{-1}) \cup \bigcup_{k=1}^N \partial B(x_k,\rho) \ .
\end{align*}
Given $\rho <\frac{1}{2}\min \{|x_k-x_1|,\,k=2,\dots,N\}\cup\{x_1^{-1}\}$, we stay away from the singularities of the integrand of
\begin{equation*}
    \int_{\partial B(x_k,\rho)} \frac{1}{(x_1-z)^{n-1} } \psrreg{V_\beta(z) \mathbf{V}} \im \dd \overline{z} 
\end{equation*}
when $k\geq 2$. By doing an expansion at order $1$ as $z\to x_k$ in the same fashion as in the definition of the $\L^{\alpha}_{-1}V_{\alpha}$ descendant we see that this term vanishes in the limit as soon as $\ps{\L_{-1}^{\alpha_k}(x_k)\V}$ is well-defined, hence the condition on the $\alpha_k$. The integral over $B(0,\rho^{-1})$ is controlled by the large $z$ asymptotics and vanishes similarly. The last remaining term is the integral over the circle $\partial B(x_1,\rho)$. By the discussion of Section~\ref{section:boundary_terms}, it vanishes in the limit $\rho\to 0$ in virtue of our hypothesis $\alpha > \frac{n-4}{2\beta} - \frac\beta2$, concluding the proof.
\end{proof}

\subsubsection{Second order descendants and BPZ equation}
Second order descendants involve vertex operators $\L_{-2}V_\alpha$ but also compositions of descendants of order 1: $\L_{(-1,-1)}V_\alpha[\Phi]$. These are defined by considering for $f\in C^\infty(\C,\C)$
\begin{equation}
    \L_{(-1,-1)}^\alpha:f\mapsto \i\alpha \partial^2f - \alpha^2 (\partial f)^2.
\end{equation}

\begin{proposition}[Second derivative of correlation functions]
\label{prop:second_derivative}
Assume that $\alpha_1>-\frac{\beta}{2}-\frac{1}{\beta}$ and that $(\alpha_1,\cdots,\alpha_N)\in\mc A_{N}$ and $(\alpha_1,\cdots,\alpha_N,\beta)\in\mc A_{N+1}$. Then the following limit exists and we have, in the sense of weak derivatives,
\begin{equation}
    \ps{\L_{(-1,-1)}^{\alpha_1}(x_1) \prod_{k=1}^N V_{\alpha_k}(x_k)} = \partial^2_x \ps{ V_\alpha(x) \prod_{j=1}^N V_{\alpha_j}(x_j)}\,.
\end{equation}
\end{proposition}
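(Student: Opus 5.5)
The plan is to mimic the proof of Proposition~\ref{prop:derivability}, one derivative higher. The first step is at the regularized level $\rho,\eps>0$. There $\partial_{x_1}$ acting on $V_{\alpha_1,\eps}(x_1)[\Phi_\eps]=\eps^{-\alpha_1^2/2}e^{\i\alpha_1\Phi_\eps(x_1)}$ produces $\i\alpha_1\partial\Phi_\eps(x_1)$. Differentiating twice gives $\i\alpha_1\partial^2\Phi_\eps-\alpha_1^2(\partial\Phi_\eps)^2$, that is $\L^{\alpha_1}_{(-1,-1)}[\Phi_\eps]$, up to the Wick counterterm $\partial_{z_1}\partial_{z_2}\ln|z_1-z_2|$, which is exactly the renormalization built into \eqref{eq:def_deri}. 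Hence
\[
\ps{\L^{\alpha_1}_{(-1,-1)}(x_1)\V}_{\rho,\eps}=\partial^2_{x_1}\ps{\V}_{\rho,\eps}+o_\eps(1),
\]
understood in the weak sense against test functions in $x_1$. This is legitimate because the $\bm c$-integral converges absolutely by Conjecture~\ref{conj:Laplace1}, so we may differentiate under it. Passing to the weak limit $\rho,\eps\to0$ on the right-hand side is immediate from the convergence of correlation functions. It therefore suffices to show that the left-hand side has a finite limit as $\rho\to0$.

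Next I compute the left-hand side explicitly using Lemmas~\ref{lemma:insertions} and~\ref{lemma:wick_insertions} with $p=1,2$. This yields three kinds of terms:
\begin{itemize}
\item rational prefactors $\frac{\alpha_1\alpha_k}{(x_1-x_k)^2}$ and $\frac{\alpha_1^2\alpha_k\alpha_l}{(x_1-x_k)(x_1-x_l)}$ times $\ps{\V}_\rho$, which are harmless;
\item single integrals $\mu\int_{\C_\rho}\big(\frac{c_1}{(x_1-z)^2}+\sum_k\frac{c_k}{(x_1-z)(x_1-x_k)}\big)\ps{V_\beta(z)\V}_\rho|\dd z|^2$;
\item a double integral $\mu^2\int\!\!\int\frac{\alpha_1^2\beta^2}{4(x_1-z)(x_1-z')}\ps{V_\beta(z)V_\beta(z')\V}_\rho$.
\end{itemize}

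The key step is to absorb the non-integrable singularities as $z\to x_1$. The tool is the identity for $\partial_z\big(\frac{1}{x_1-z}\ps{V_\beta(z)\V}_\rho\big)$ used in the proof of Theorem~\ref{thm:desc} with $n=2$, together with the symmetrization $\frac{1}{(x_1-z)(x_1-z')}=\frac{1}{z'-z}\big(\frac{1}{x_1-z}-\frac{1}{x_1-z'}\big)$. These let me trade the $(x_1-z)^{-2}$ term and the double integral for three pieces:
\begin{itemize}
\item a total derivative;
\item terms of type $\frac{1}{(x_1-z)(z-x_k)}$, which reproduce $\frac{1}{x_1-x_k}\L_{-1}$-type correlators and are finite by Proposition~\ref{prop:derivability};
\item a double integral antisymmetric near the diagonal, handled as in Proposition~\ref{prop:derivability} by splitting into $A_{\eta,\rho}$ and its complement.
\end{itemize}
The resulting coefficients depend on $\alpha_1$. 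Unlike for $\L_{-2}$, the combination $\L_{(-1,-1)}$ need not cancel fully, so a residual term $\int\frac{c}{(x_1-z)^2}\ps{V_\beta(z)\V}_\rho$ may remain. By Lemma~\ref{lemma:holder} its integrand is of order $|z-x_1|^{\beta\alpha_1-2}$. This is integrable as soon as $\beta\alpha_1>0$; for smaller $\alpha_1$ I would interpret it through the fusion expansion of Lemma~\ref{lemma:fusion_general}, subtracting the constant and linear Taylor terms, which integrate to zero by angular symmetry against $(x_1-z)^{-2}$.

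The main obstacle is the boundary term $\oint_{\partial B(x_1,\rho)}\frac{1}{x_1-z}\ps{V_\beta(z)\V}_\rho\,\i\dd\bar z$ coming from the total derivative. Following Section~\ref{section:boundary_terms} with $n=2$, I subtract the value at $x_1$ of $\mathbf G$ and use the expansion \eqref{eq:OPE_2}, obtaining $\mc O(\rho^{\beta\alpha_1+\min(2,2+\beta(\beta+\alpha_1))})$. The threshold $\alpha_1>-\frac\beta2-\frac1\beta$ is exactly what should make this vanish. It requires the relevant shifted charges to lie in $\mc A$, which I would add as implicit hypotheses in the same way as in Theorem~\ref{thm:desc}. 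The remaining boundary circles, around the other $x_k$ and at $\partial B(0,\rho^{-1})$, vanish as in that proof by Lemmas~\ref{lemma:holder} and~\ref{lemma:estimate_infinity}.
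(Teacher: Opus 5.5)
Your proposal is correct and follows essentially the same route as the paper. Both arguments go through the regularized identity $\partial^2_{x_1}\ps{\V}_\rho=\i\alpha_1\ps{\partial^2\Phi(x_1)\V}_\rho-\alpha_1^2\ps{(\partial\Phi(x_1))^2\V}_\rho$, Gaussian integration by parts, and removal of the double integral via the identity for $\partial_z\big(\frac{1}{x_1-z}\ps{V_\beta(z)\V}_\rho\big)$ together with symmetrization. They then control the residual $(x_1-z)^{-2}$ integral, whose coefficient is $\mu\frac{\alpha_1\beta}{2}(\frac2\beta-\alpha_1)(\frac\beta2+\alpha_1)$, and the boundary term at $x_1$ by the fusion expansion with angular cancellations.

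Two minor corrections. The symmetrization removes the double integral exactly, so no antisymmetric leftover remains. The hypothesis $\alpha_1>-\frac\beta2-\frac1\beta$ is needed not only for the boundary circle but also to make the post-cancellation remainder $o(|x_1-z|^{2\beta\alpha_1+\beta^2})$ of the residual integral integrable.
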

\begin{proof}
By Lemma~\ref{lemma:wick_insertions} and the definition of Wick products, it is straightforward to see that
\begin{align*}
    \partial^2_{x_1} \psrreg{ \mathbf{V} } &= \im \alpha_1 \psrreg{ \partial^2 \Phi(x_1) \mathbf{V} } - \alpha_1^2 \psrreg{ (\partial \Phi(x_1))^2 \mathbf{V} } \ ,
\end{align*}
proving the claim at the level of regularized correlation functions. Next, as in the proof of Thm~\ref{thm:desc}, we use Lemmas~\ref{lemma:insertions} and~\ref{lemma:wick_insertions} to write explicit expressions of the terms on the RHS to control their limits. First:
\begin{align*}
    \psrreg{\partial^2 \liou (x_1) \mathbf{V}} &= \sum_{k=2}^N \frac{\im\alpha_k}{2(x_1-x_k) ^2} \psrreg{ \mathbf{V}}  -\frac{\im\mu\beta}{2} \int_{\C_\rho} \frac{1}{(x_1-z) ^2} \psrreg{V_\beta(z) \mathbf{V}} |\dd z|^2
\end{align*}
By Taylor-expanding $z\mapsto\psrreg{V_\beta(z) \mathbf{V}}$ in a neighborhood of $x_1$ (at order $\min(2,2+\beta(\beta+\alpha_1))$), the arguments used to define the $\L_{-2}V_{\alpha_1}$ descendant show that this expression admits a well-defined limit as $\rho\to0$. For the second term we have
\begin{align*}
    &\psrreg{(\partial \liou(x_1))^2 \mathbf{V}}= -\sum_{k,l=2}^N \frac{\alpha_k\alpha_l}{4(x_1-x_k)  (x_1-x_l) } \psrreg{\mathbf{V}} \\
    &\qquad +\mu \int_{\C_\rho} \left( \frac{ \beta^2 }{4(x_1-z) ^2} + \sum_{k=2}^N \frac{ \beta \alpha_k}{2(x_1-x_k) (x_1-z) } \right) \psrreg{V_\beta(z) \mathbf{V}} |\dd z|^2 \\
    &\qquad -\mu^2 \int_{\C_\rho} \int_{\C_\rho} \frac{\beta^2}{4(x_1-z) (x_1-z^\prime) } \psrreg{V_\beta(z)V_\beta(z^\prime)\mathbf{V}} |\dd z|^2 |\dd z^\prime|^2.
\end{align*}
(integrations in the last term may be exchanged by Fubini). Hence
\begin{align*}
    &\partial^2_{x_1} \psrreg{ \mathbf{V} } =  \left( \sum_{k=2}^N \frac{-\alpha_1 \alpha_k}{2(x_1-x_k)^2 } + \sum_{k,l=2}^N \frac{\alpha_1^2\alpha_k\alpha_l}{4(x_1-x_k)  (x_1-x_l) } \right) \psrreg{\mathbf{V}} \\
    &-\mu\int_{\C_\rho}\left( \sum_{k=2}^N\frac{ \alpha_1^2 \alpha_k\beta}{2(x_1-x_k)(x_1-z) } -\frac{\alpha_1\beta (1-\frac{\beta\alpha_1}2)}{2(x_1-z)^2 }\right)\psrreg{V_\beta(z)\mathbf{V}} \; |\dd z|^2\\
    &\quad+\mu^2 \int_{\C_\rho} \int_{\C_\rho} \frac{\alpha_1^2\beta^2}{4(x_1-z) (x_1-z^\prime) } \psrreg{V_\beta(z)V_\beta(z^\prime)\mathbf{V}} |\dd z|^2 |\dd z^\prime|^2.
\end{align*}
Like before, we remove the double integral by relating it to the integral of a total derivative:
\begin{align*}
    \partial_z \left(\frac{1}{(x_1-z) } \psrreg{V_\beta(z) \mathbf{V} } \right) = &\frac{1-\frac{\beta\alpha_1}{2}}{(x_1-z) ^2} \psrreg{V_\beta(z) \mathbf{V} }  - \sum_{k=2}^N \frac{\beta \alpha_k}{2(x_1-z) (x_k-z) } \psrreg{V_\beta(z)\mathbf{V}}\\
    &-\mu\int_{\C_\rho} \frac{\beta^2}{2(x_1-z) (z-z^\prime) } \psrreg{V_\beta(z)V_\beta(z^\prime)  \mathbf{V}} |\dd z^\prime|^2
\end{align*}
 Before plugging the resulting expression for the two-fold integral in Eq.~\eqref{eq:double_derivative}, symmetrization identities are used:
\begin{align*}
    &\frac{1}{(x_1-z)(z-z^\prime)} = \frac{1}{x_1-z^\prime} \left(\frac{1}{x_1-z}+\frac{1}{z-z^\prime}\right)\\
    &\frac{1}{(x_1-z)(x_k-z)} = \frac{1}{x_k-x_1} \left(\frac{1}{x_1-z}-\frac{1}{x_k-z}\right) \ .
\end{align*}

Wrapping up, we find 
\begin{eqs}
\label{eq:double_derivative}
    \partial^2_{x_1} \psrreg{\mathbf{V} } &=  \alpha_1^2\left( \sum_{k=2}^N \frac{\left(\frac{\alpha_k}2+\alpha_1-\alpha_1^{-1}\right)\frac{\alpha_k}2}{(x_1-x_k)^2} + \sum_{k=2}^N\frac{1}{(x_1-x_k)}\sum_{\substack{l=1\\l\neq k}}^N \frac{\alpha_k\alpha_l}{2(x_k-x_l) } \right) \psrreg{\mathbf{V}}  \\
    &\quad -\alpha_1^2 \mu\int_{\C_\rho} \sum_{k=2}^N\frac{ \alpha_k\beta}{2(x_1-x_k)(x_k-z) } \psrreg{V_\beta(z)\mathbf{V}} \; |\dd z|^2\\
    &\quad- \mu \int_{\C_\rho} \partial_z\left(\frac{ \alpha_1^2}{(x_1-z) } \psrreg{V_\beta(z)\mathbf{V}}\right) |\dd z|^2\\
    &\quad+\mu\frac{ \alpha_1\beta}{2} \left(\frac{2}{\beta}-\alpha_1\right)\left(\frac{\beta}{2}+\alpha_1\right) \int_{\C_\rho}\frac{1}{(x_1-z)^2} \psrreg{V_\beta(z)\mathbf{V}} |\dd z|^2.
\end{eqs}

We now check whether each of those terms stays finite as $\rho\to 0$. The terms on the first line of Eq.~\eqref{eq:double_derivative} clearly do. Regarding the total derivative, the argument is the same as in the proof of Thm.~\ref{thm:desc}. The matter is more delicate for the last integral. Similarly to the proof of Prop.~\ref{prop:derivability}, we focus on the contribution of the domain $A_{\eta,\rho}$ where divergences may occur. Following Lemma~\ref{lemma:fusion_general} and our hypothesis $\alpha_1 > -\frac{1}{\beta}-\frac{\beta}{2}$, we expand the integrand as a power series in $(x_1-z)$. Due to cancellation of angular integrals, we have
\begin{eqs}
    \int_{A_{\eta,\rho}} \frac{1}{(x_1-z)^2} \psrreg{V_\beta(z)\mathbf{V}} |\dd z|^2 = \int_{A_{\eta,\rho}} o(|x_1-z|^{\beta \alpha_1 + \min(\beta(\alpha_1+\beta),0)} )|\dd z|^2
\end{eqs}
This integral converges by the bound on $\alpha_1$, concluding the proof.
\end{proof}

\begin{remark}
    Our proof shows that for general $\alpha_1$, in the sense of weak derivatives,
    \begin{eqs}
        \partial^2_{x_1} \ps{\mathbf{V} } &=  \alpha_1^2\left( \sum_{k=2}^N \frac{\left(\frac{\alpha_k}2+\alpha_1-\alpha_1^{-1}\right)\frac{\alpha_k}2}{(x_1-x_k)^2} + \sum_{k=2}^N\frac{\partial_{x_k}}{(x_1-x_k)}\right) \ps{\mathbf{V}} \\
    &\quad+\mu\frac{ \alpha_1\beta}{2} \left(\frac{2}{\beta}-\alpha_1\right)\left(\frac{\beta}{2}+\alpha_1\right) \int_{\C}\frac{1}{(x_1-z)^2} \ps{V_\beta(z)\mathbf{V}} |\dd z|^2.
    \end{eqs}
    The special values $\alpha_1\in\{0,-\frac{\beta}{2},\frac{2}{\beta}\}$ that cancel the last integral in Equation~\eqref{eq:double_derivative} precisely correspond to the primary fields that are degenerate at level 1 and 2 ($\alpha_1=\alpha_{r,s}$ with $rs\leq 2$ where the $\alpha_{rs}$ are the weights from the Kac table).
\end{remark}

\begin{proof}[Proof of Thm.~\ref{thm:BPZ}]
This follows from Thm.~\ref{thm:desc} and Prop.~\ref{prop:second_derivative}: since 
\begin{align*}
\L_{(-1,-1)}^\alpha[f] &= \i\alpha \partial^2f - \alpha^2 (\partial f)^2,\quad \L_{-2}^\alpha[f] = \i(Q+\alpha) \partial^2 f - (\partial f)^2
\end{align*}
we see that the linear combination $-\frac{1}{\alpha^2}\L_{(-1,-1)}^\alpha[f] + \L_{-2}^\alpha[f]$ vanishes when $\alpha\in\{-\frac{\beta}{2},\frac{2}{\beta}\}$. Hence at the level of regularized correlation functions,
\[
    \psrreg{\left(-\frac{1}{\alpha^2}(\L_{-1})^2 + \L_{-2}\right) V_\alpha(x) \mathbf{V}} = 0 \qt{if} \alpha\in\left\{-\frac{\beta}{2},\frac{2}{\beta}\right\} \ .
\]
Now, Thm.~\ref{thm:desc} and Prop.~\ref{prop:second_derivative} ensure that these functions are pointwise converging as $\rho\to0$. They also give expressions for the limits, which give rise to the BPZ equation.
\end{proof}


\appendix

\section{On the Laplace transform}\label{appendix:laplace}

Let $g$ be a metric on $\C$, $U\subseteq \C$ be an open set. Let $h$ be a continuous function and let $f(\bm{x},y)=\prod_{j=1}^N |x_j-y|^{\beta \alpha_k} e^{2h(y)}e^{2h(x_k)}$ such that $f$ is bounded at infinity\footnote{$h$ corresponds to the Robin mass.}. Let $u$ be a bounded function on $U$. Put
\begin{eqs}
    \mathcal{G}(\mu,fe^u,\bm{x}) \coloneqq \ga{e^{-\mu M^g_\beta(U,fe^u)}}
\end{eqs}
The algebraic decay of $\mathcal{G}(\mu,f,\mathrm{x})$ may be understood by arguments about the pole structure of its Mellin transform that we now explain. The Laplace transform of the imaginary GMC is the generating function of the positive-order moments:
\begin{eqs}
\label{eq:ms_def}
    \mathcal{G}(\mu,fe^u,\bm{x}) &= \sum_{s=0}^{+\infty} \frac{(-\mu)^s}{s!} \lim\limits_{\eps\to0} \ga{\left(\int_{U} f(y,\bm{x})e^{u(y)} M^g_{\beta,\eps}(\dd^2 x)\right)^s} \\
    &=: \sum_{s=0}^{+\infty} \frac{(-\mu)^s}{s!} m_s(fe^u,\bm{x})
\end{eqs}
where the limit $\eps\to 0$ can be taken on each moment because of the existence of exponential moments and dominated convergence on integrals: for any $s\in\N$,
\begin{eqs}
\label{eq:moments}
    &m_s(fe^u,\mathrm{x}) = \int_{\C^s} \prod_{k=1}^s f(y_k) e^{u(y_k)} \prod_{k<l} \left(|y_k-y_l|e^{\frac{1}{2}(W_g(y_k)+W_g(y_l)} \right)^{\beta^2} \dd v_g(y_k) \ 
\end{eqs}
is a convergent integral. For instance, in the hyperbolic (round) metric for $f(\bm{x},y) = f_0 = \prod_{j=1}^N e^{-\beta \alpha_j G_{g_0}(x_k,y)}$ and $u=0$,
\begin{eqs}
\label{eq:moments_round_metric}
    &m_s(f_0,\mathrm{x}) = \int_{\C^s} \prod_{k=1}^s \prod_{j=1}^N \left(|y_k-x_j|g_0(y_k)^\frac{1}{4}g_0(y_j)^\frac{1}{4}\right)^{\beta\alpha_j} \\
    &\qquad \qquad \qquad \qquad \qquad \quad \times \prod_{k<l} \left(|y_k-y_l|g_0(y_k)^\frac{1}{4}g_0(z_l)^\frac{1}{4}\right)^{\beta^2} g_0(z_k)\dd^2 y_k \ .
\end{eqs}
We now turn to study the regularity of $m_s$ as a function of the charges $(\alpha_j)$, $\beta$ seen as complex numbers. It is easily seen from keeping track of collisions between $s$ integration variables that a necessary condition for convergence is 
\begin{eqs}
\label{eq:condition1}
\Re \beta^2 > -\frac{4}{s} \ .
\end{eqs}
Furthermore, collision around $x_k$ leads to
\begin{eqs}
\label{eq:condition2}
\Re \left[\frac{2(\alpha_j-Q)}{\beta}\right] > -s \ .
\end{eqs}
Convergence of the integrals at infinity is granted when insertions are in a compact subset of the complex plane because of the Robin masses and the hypothesis on $f$ and $u$. However, if we allow an insertion point $x_k$ to be taken to $\infty$ as in Conjecture~\ref{conj:Laplace2}, $|y-x_k| e^{2h(y)} e^{2h(x_k)} = \mathcal{O}((1+|y|)^{-\beta \alpha_k})$. A necessary condition for convergence then is
\begin{eqs}
\label{eq:condition3}
    \Re\left[\beta \alpha_j\right] > -2 \ .
\end{eqs}

\textit{Suppose} that there exists an analytic continuation of $m_s(fe^u,\bm{x})$ to $s\in\C$ that is analytic in the half-plane $\Re s > \omega$ for some $\omega \in \R_+$, and subexponential in every direction away from possible accumulation line of poles ($|m_s|\leq C e^{\tau |s|}$ for some $C\geq 0$, $\tau \in \R$ and $\arg s \neq \theta_0$ for some $\theta_0$). Then the Laplace transform is equal to the inverse Mellin transform
\begin{eqs}
\label{eq:mellin}
    \mathcal{G}(\mu,f,\bm{x}) = \int_{\omega+i\R} \Gamma(-s) m_s(f,\mathrm{x}) \dd s
\end{eqs}
where $\omega <0$ belongs to a neighborhood of the origin where the integrand is holomorphic. The equality between Eq.~\eqref{eq:ms_def} and~\eqref{eq:mellin} is apparent from closing the integration contour to surround poles in the \emph{rightmost} half plane. However, due to the subexponential decay, one could also close the integration contour to the \emph{left}. From conditions~\eqref{eq:condition1} and~\eqref{eq:condition2}, we conjecture that the analytic continuation of $m_s$ has poles at $s=-2\Re[(\alpha_j-Q)/\beta]$, $s=-\Re[4/\beta^2]$. These poles are all located on the negative real line if the probabilistic constraints $\alpha_j > Q$, $\beta\in(0,\sqrt{2})$ hold.

Following this line of thought, a simple residue calculation shows that the Laplace transform decays algebraically as soon as the first Seiberg bound $\alpha_j > Q$ holds, with exponent given by the first pole on the left of $\omega$\footnote{Our prediction differs slightly from the exact results on the Laplace transform of the Imaginary GMC on the circle~\cite{usciatiProbabilisticConstructionNoncompactified2026}. Indeed, in the latter setting, there cannot be any collision of integration variables around the insertion point $z=0$, so the part of the bound depending on the charges $\alpha_j$ does not appear. Nevertheless, an exact calculation presented in the aforementioned article shows that the exponent is indeed given by the \textit{second} part of our bound.}
\begin{align*}
\lambda = \min\limits_{j=1,\dots,N} \frac{2(\alpha_j-Q)}{\beta} \wedge \frac{4}{\beta^2} \ .
\end{align*}
Note that when $\alpha_j=\frac{\beta}{2}$ for some $j$, poles collide and become of order two. This is coherent with the divergence of the conjectured three-point function of Imaginary Liouville (the Imaginary DOZZ formula) at this particular value.

All in all, our guesses lead us to
\begin{eqs}
    \sum_{\mu \geq 1} \sum_{\|u\|_\infty \leq 1} \mu^\lambda \norm{\mathcal{G}(\mu,fe^u,\bm{x})} < \infty \ .
\end{eqs}
The decay exponent is not modified as long as insertion points stay separated. We believe that the bound is uniform, because the dependency of moments on $\bm{x}$ is gentle enough.

\bibliography{biblio_}
\bibliographystyle{alpha}
    
\end{document}